\documentclass[prx,reprint,showpacs]{revtex4-1}

\usepackage{amsfonts,amssymb,amsmath}
\usepackage[]{graphics,graphicx,epsfig,wrapfig}
\usepackage{sidecap}
\usepackage{amsthm}
\usepackage{float}
\usepackage{enumitem}
\usepackage{color}
\usepackage{caption,subcaption}
\usepackage[noxcolor]{pstricks} 
\usepackage{pst-grad} % For gradients
\usepackage{pst-plot} % For axes
\usepackage[color]{changebar}
\usepackage{booktabs}
\usepackage{multirow}
\usepackage{siunitx}
\usepackage{verbatim}
\usepackage{mathpazo}
\usepackage{csquotes}
%\usepackage{natbib}

%hyperref
\definecolor{myurlcolor}{rgb}{0,0,0.7}
\definecolor{myrefcolor}{rgb}{0.8,0,0}
%,
\usepackage[unicode=true,pdfusetitle, bookmarks=true,bookmarksnumbered=false,bookmarksopen=false, breaklinks=false,pdfborder={0 0 0},backref=false,colorlinks=true, linkcolor=myrefcolor,citecolor=myurlcolor,urlcolor=myurlcolor]{hyperref}

\newtheorem{definition}{Definition}
\newtheorem{fact}{Fact}

\newcommand{\be}{\begin{equation}}
\newcommand{\ee}{\end{equation}}
\newcommand{\bei}{\begin{itemize}}
\newcommand{\eei}{\end{itemize}}
\newcommand{\beq}{\begin{eqnarray}}
\newcommand{\eeq}{\end{eqnarray}}
\newcommand{\ben}{\begin{eqnarray}}
\newcommand{\een}{\end{eqnarray}}
\newcommand{\bea}{\begin{array}}
\newcommand{\eea}{\end{array}}
\newcommand{\<}{\langle}
\renewcommand{\>}{\rangle}

 %left bracket
 %right bracket

\def\squareforqed{\hbox{\rlap{$\sqcap$}$\sqcup$}}
\def\qed{\ifmmode\squareforqed\else{\unskip\nobreak\hfil
\penalty50\hskip1em\null\nobreak\hfil\squareforqed
\parfillskip=0pt\finalhyphendemerits=0\endgraf}\fi}
\def\endenv{\ifmmode\;\else{\unskip\nobreak\hfil
\penalty50\hskip1em\null\nobreak\hfil\;
\parfillskip=0pt\finalhyphendemerits=0\endgraf}\fi}

\def\ot{\otimes}
\def\precom{independence}
\def\Precom{Independence}
\def\prec{Ind}
\def\com{complementarity}
\def\Com{Complementarity}

\newcommand{\eqdef}{\mathrel{:=}} %macro for definitions

\newcommand{\setgen}{S} %Set of the theory
\newcommand{\setth}{S_{X,Y}}  % Set of the theory
\newcommand{\sxy}{\setth}
 %Complementarity
 %Certainty

 % Coefficient of complementarity
 %  Coefficient of certainity
 % uncertainity

%Auxiiliary notation

\renewcommand{\P}{\mathcal{P}} % set of preparations
\newcommand{\M}{\mathcal{M}} % set of measurements 
\newcommand{\q}{\mathbf{q}} % vector of probability distributions
 % extremal vector of probability distributions

\renewcommand{\U}{\mathrm{U}} %measure of uncertainty
\renewcommand{\C}{\mathrm{C}} %measure of complementarity
\newcommand{\E}{\mathrm{E}} %measure of information exclusion
%\newcommand{\Com}{\mathrm{Com}} 

 %Michal O. comments technical
 %Michal O. changes
 %Michal H. comments

\newcommand{\mg}{\color{black}}  % replace 'magenta' by 'black' to remove Michal's c

\newcommand{\blk}{\color{black}}

%%Style classes
 \theoremstyle{plain}
 \newtheorem{lem}{Lemma}
 \theoremstyle{plain}
 
 \theoremstyle{plain}
 
 \theoremstyle{plain}
 
 \theoremstyle{plain}
 
  \theoremstyle{plain}
 
 \theoremstyle{remark}
 \newtheorem*{rem*}{Remark}
 \theoremstyle{plain}

  \newcommand{\ketbra}[2]{| #1 \rangle \langle #2 |}
\newcommand{\ket}[1]{| #1 \rangle}

\newcommand{\braket}[2]{\langle #1 | #2 \rangle}
%%%%%%%%%%%%%%%%%%%%%%%%%%%%%%

%Michal H defs
\def\ur{UR}
\def\pur{PUR}

% changes: setth !!!

\begin{document}

\title{Operational foundations for complementarity and uncertainty relations}

\author{Debashis Saha}
\email{saha@cft.edu.pl}
\affiliation{ 
Institute of Theoretical Physics and Astrophysics, National Quantum Information Centre, Faculty of Mathematics, Physics and
Informatics, University of Gdansk, Wita Stwosza 57, 80-308 Gda\'nsk, Poland}
\affiliation{Center for Theoretical Physics, Polish Academy of Sciences, Al. Lotnik\'{o}w 32/46, 02-668 Warsaw, Poland}

\author{Micha\l\ Oszmaniec}
\email{michal.oszmaniec@gmail.com}
\affiliation{ 
Institute of Theoretical Physics and Astrophysics, National Quantum Information Centre, Faculty of Mathematics, Physics and
Informatics, University of Gdansk, Wita Stwosza 57 , 80-308 Gda\'nsk, Poland}

\author{Lukasz Czekaj}
\email{jasiek.gda@gmail.com}
\affiliation{Faculty of Applied Physics and Mathematics, National Quantum Information Centre, Gda\'nsk University of Technology, 
80-233 Gda\'nsk, Poland}

\author{Micha\l\ Horodecki}
\email{fizmh@ug.edu.pl }
\affiliation{ 
Institute of Theoretical Physics and Astrophysics, National Quantum Information Centre, Faculty of Mathematics, Physics and
Informatics, University of Gdansk, Wita Stwosza 57, 80-308 Gda\'nsk, Poland}

\author{Ryszard Horodecki}
\email{fizrh@ug.edu.pl }
\affiliation{ 
Institute of Theoretical Physics and Astrophysics, National Quantum Information Centre, Faculty of Mathematics, Physics and
Informatics, University of Gdansk, Wita Stwosza 57, 80-308 Gda\'nsk, Poland}

%\author{}
%\email{}
%\affiliation{}

\begin{abstract}
The so-called preparation  uncertainty that occurs in quantum world can be understood well in purely operational terms, 
and its existence in any given theory, perhaps different than quantum mechanics, can be  verified by
examining only measurement statistics. Namely, one says that uncertainty occurs in some theory, when for some pair of observables,  there is no preparation, which would exhibit deterministic statistics for both of them.  
However the right hand side of uncertainty relation, is not operational anymore, if we do not insist,
that it is just minimum of the left hand side for a given theory. E.g. in quantum mechanics, it is some function of two observables, that must be computed within the quantum formalism.  Also, while joint non-measurability of observables is an operational notion, the complementarity in Bohr sense (i.e. in terms of information 
needed to describe the system) has not yet been expressed in purely operational terms. 

In this paper  we propose a solution to these two problems, by introducing  an operational definition for complementarity, 
and further postulating, that complementary observables have to exhibit uncertainty. In other words, we propose to put 
the (operational) complementarity as the right hand side of uncertainty relation. We thus view uncertainty as a necessary price 
for complementarity in physical theories. 

In more detail, we first identify two different notions of uncertainty and complementarity for which the above principle holds in the  quantum mechanical realm. We also introduce postulates for the general measures of uncertainty and complementarity. In order to define quantifiers of complementarity we first turn to the simpler notion of independence that is defined solely in terms of the statistics of two observables. Importantly, for clean \mg and extremal observables  - i.e. ones that cannot be simulated irreducibly by other observables  - \blk
any  measure of independence reduces to the proper complementary measure.   

Finally, as application of our general framework we define a number of complementarity indicators based on  (i) performance of random access codes, (ii) geometrical properties of the body of observed statistics, and  (iii) variation of information. 
We analyze the properties of these indicators and show that they can be used to state uncertainty relations. Moreover, we apply the uncertainty relation expressed by complementarity of type (ii) to show, how, under some natural symmetries, 
it leads to the Tsirelson bound for CHSH inequality. 
Lastly, we show that for a single system a variant of Information Causality called Information Content Principle, under the above symmetries,
can be interpreted as  uncertainty relation in the above sense. 

\end{abstract}

\maketitle

\section{Introduction}

Uncertainty and complementarity are landmark features of quantum mechanics and have been investigated since its inception almost a century ago. The concept complementarity captures the fact that in quantum mechanics two quantum observables cannot be measured simultaneously and hence supply "independent" informations about a physical systems \cite{Conmpl1928}. The uncertainty principle, proposed for the first time by Heinsenberg, on the other hand, limits the precision of outcome statistics of two complementary observables, like position and momentum \cite{Heisenberg1927}. Uncertainty relations are quantitative emanations of the  uncertainty principle and play predominant role in the conceptual \cite{BuschHeinsenberg2007} and mathematical foundations of quantum theory \cite{Birula1975,Bialynicki2011,Partovi-major,FriedlandGour2013,PochalaMajorisation2013}. Importantly, with the advent of quantum information, uncertainty relations found also practical applications in fields such as entanglement detection \cite{Giovannetti2004,Ghune2004} quantum steering \cite{Walborn2009}, as well as randomness generation and  quantum cryptography \cite{RevModPhys2017ent}.  

Despite the great success of the research effort concerning uncertainty relations,  this line of research is inherently restricted to quantum formalism and so the notions of complementarity, uncertainty, uncertainty-relations have not been  much   explored outside quantum theory.
 Uncertainty itself is defined pretty operationally, and it was explored in more general setup than quantum 
(see e.g. \cite{SteegWehner,OW,RaviGMP-ur-nonlocality,WehnerHanggi-up-thermo}).
The uncertainty relations were also considered in those papers. However,  the right hand sides of these relations were not expressed in operational terms. Also, while the issue of joint non-measurability (incompatibility)
 was explored outside of quantum mechanical formalism \cite{Plavala-incom,JencovaPlavala-incom,JanottaH-review,Busch13,Filippov17}, the complementarity 
of observables, understood in Bohr's sense seems not investigated so far in operational terms (apart from the approach, 
where complementarity is simply understood just the minimum of the left hand side of the uncertainty relation cf. \cite{BuschHeinsenberg2007}) 

This article aims to change this state of affairs. By defining  complementarity in purely operational fashion i.e. solely in terms of the statistics of measurements a given theory, we are able to obtain operational form of uncertainty relation - where both side of inequalities 
are some functions of just statistics of observables -  without referring to internal formalism of the theory.

 %\mg We then identify the right hand side of preparation uncertainty relation with complementarity, arriving at operational formulation uncertainty principle, ready to apply in any theory. 
%We identify the (joint) uncertainty of two observables with the nonexistence of preparation (state) for which they both have deterministic outcomes. 

The complementarity  should be associated to the independent information that can be obtained from two different observables which cannot be measured jointly. This notion of complementarity is motivated by Bohr's own views concerning this concept. In one of the letters to Einstein \cite{Bohr1949}   Bohr diatribes complementarity in the following words \footnote{See also \cite{Plotonisky2014} for the comprehensive account on on the evolution of Bohr's views on the notions of uncertainty and complementarity.}. 
\begin{displayquote}
Evidence obtained under different experimental
conditions cannot be comprehended within a single picture,
but must be regarded as complementary in the sense that only
the totality of the phenomena exhaust the possible information
about the objects. 
\end{displayquote} 
 
Furthermore, we postulate, inspired by quantum mechanics, that in reasonable physical theories uncertainty should be present for all complementary observables (i.e. we identify right hand side of uncertainty relation with complementarity).  In other words, uncertainty  should be regarded as a price that we pay for complementarity of two observables. This fundamental trade-off we refer to
as \emph{uncertainty principle}. On the other hand, for maximally informative measurements uncertainty should also imply complementarity. Finally, these two fundamental trade-offs are captured by \textit{uncertainty relations} and \textit{reverse uncertainty relations}  in a theory.

Let us  now outline the somehow unusual structure of this work. First, in Section \ref{sec:framework}, we present the general operational framework in which we cast concepts of uncertainty and complementarity.   Then, in Section \ref{sec:CONCEPT} we present the connections between various kinds of complementarity and uncertainty in quantum theory.   Importantly, we observe that in quantum mechanics there are three different notions of uncertainty. We observe that  for two of them there exist different but operationally well-motivated notions of complementarity that can be used to formulate \emph{uncertainty principles}.  We propose that in all reasonable physical theories the analogues of the aforementioned uncertainty principles should hold. In Section \ref{sec:basic-indep} we argue that often complementarity of two clean \mg and extremal observables  (i.e. ones that cannot be simulated irreducibly by other observables) \blk can be defined solely in terms of their output statistics. This is a great simplification as it allows to (in some cases) discuss complemantarity without any direct reference to the formalism or the structure of a particular theory. In Section \ref{sec:twoOUTCOMES} we present the intuitive exposition of our ideas in the case of dichotomic observables. This simple setting allows for a nice geometrical interpretation of our ideas concerning uncertainty, complementarity and uncertainty principle. After the first part of the paper, that has a rather  introductory and conceptual flavor, in Section \ref{sec:outline2} we give an overview and motivation for  technical results given latter the manuscript. Sections \ref{sec:post-uncert} and  \ref{sec:post-compl} present our postulates for measures of uncertainty as well as   complementarity and independence  respectively. In the latter Section \ref{sec:measures-uncert} we propose a number of concrete measures of  uncertainty and complementarity, that are motivated either by the operational or geometrical considerations. Finally, in Section \ref{sec:pur}  we use some of these measures to state (apparently new) quantitative uncertainty relations valid in quantum mechanics. We also apply one of such relations, together with the no-signalling assumption, to obtain the Tsirelson's bound in CHSH inequality. We conclude the paper in Section \ref{sec:openPROB}, where we state a number of open problems and directions of further research. We also include Appendices containing proofs of certain technical statements given in the main text.

\section{Framework and notation}\label{sec:framework}
First, we give a survey of notations and concepts used by us in this work.  We will work in the framework of operational theories \cite{SpekkensCont2005,LeiferReview2014}. An operational theory consists consisting of preparations $P$ (belonging to the set $\P$)  and measurements $M$ (belonging to the  set $\M$). An operational theory describes the statistics in a "prepare and measure" scenario, in which a system is  prepared using a preparation procedure $P$ and measured using a measurement device (observable) $M$. Then, the outcome $k$ occurs  with the probability $q_M(k|P)$.  We will use the notation $\q_M (P)\equiv \left(q_M(1|P),\ldots, q_M(n|P) \right)$ to denote the vector of outcome statistics, when a preparation $P$ is measured by a measurement $M$ ( $n$ is the number of outcomes of  $M$). From now on, for the sake of simplicity we will focus on the case of two measurements (observables) $X$, $Y$. A priori in the operational theory $X$ and $Y$ cannot be measured jointly i.e. one does not have access to the joint probability distribution of observing values of \emph{both}  $X$ and $Y$ in a single experiment. Hence, in what follows we will be interested in distributions possible to obtain when measuring \emph{either} of the observables  $X$ or $Y$. Therefore, for a given preparation procedure $P$, the object of interest is then the vector of probability distributions 
\begin{equation}
\q (P) \eqdef (\q_{X} (P)  , \q_{Y} (P) ) \ ,  
\end{equation}
where we dropped the dependence of $\q (P)$ on observables $X,Y$ in order to keep the notation compact. As the preparation $P$ varies  we obtain different probability distributions $\q_{X,Y}(P)$ and consequently different vectors $\q(P)$. We denote the convex set of all allowed vectors $\q (P)$ by $\setth$. The set $\setth$ shall call {\it statistics set for $X$ and $Y$}, or in short {\it statistics set}.
Thus, $\setth$ is embedded in the Cartesian product of two simplices $S\eqdef\Delta_n \times \Delta_n$ (see Fig. \ref{fig:sxy}).
\begin{figure}[http]
\begin{center}
\includegraphics[scale=0.3]{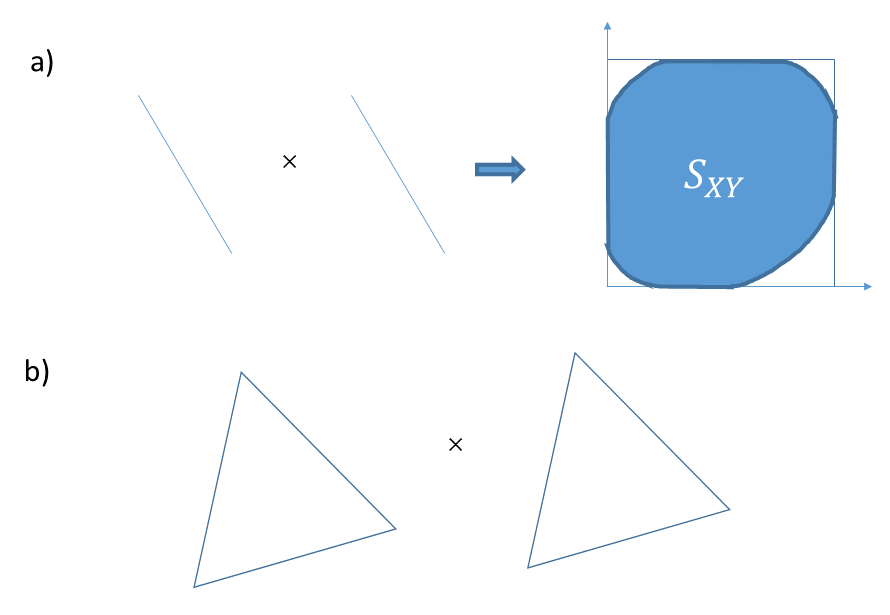}
\caption{\label{fig:sxy} The statistics set is embedded into Cartesian product of two simplices $S=\Delta_n \times \Delta_n$. a) dichotomic observables ($n=2$): 
simplices are one-dimensional  and the axes represent probabilities of a single outcome for each observable.  Their Cartesian product is a square, and $\sxy$ is its convex subset. b) observables with 
three outputs ($n=3)$ give rise to two dimensional simplexes.  Their Cartesian product and the set $\sxy$ cannot be visualized.}
\end{center}
\end{figure}

Note that the set $\P$ can be always assumed to be convex as one can always formally define the mixture of two different preparations via the mixture of the corresponding probability distributions for all measurements $X\in\M$. \mg Operationally this corresponds to choosing between two preparation procedures $P_1$ and $P_2$ by the result of tossing of a biased coin with probability, say, $(\alpha,1-\alpha)$. The output statistics of the resulting preparation $P $ is convex-linear, i.e.,
\be\label{eq:mixpreparations}
\forall X\in \M, \ \q_X(P) =  \alpha  \q_X(P_1) + (1-\alpha)  \q_X(P_2), 
\ee
 and therefore the  statistics set $\setth$ is convex.
Similarly, one can perform convex mixture of two different measurements $X_1,X_2\in \M$ such that the output statistics for all preparation of the resulting measurement $X$ is convex combination of corresponding probability distribution, i.e,
\be\label{eq:mixmeasurements}
\forall P\in \P, \  \q_X(P) =  \alpha  \q_{X_1}(P) + (1-\alpha)  \q_{X_2}(P).
\ee

\blk

\begin{rem*}
Connecting to the standard quantum formalism: in quantum theory preparations $P$ are simply quantum states whereas measurements (observables) $M$ are simply allowed quantum-mechanical measurements.
\end{rem*}

The main aim of this work is to define and study the joint uncertainty \cite{GourNJP2016},  complementarity, uncertainty relations and uncertainty principle in terms of the observed statistics $\q (P)$ and the allowed statistics set  $\setth$.

In what follows we will need a couple more concepts related to classical  manipulation and simulation of observables in general theories. See \cite{Buscemi2005,Haapasalo2012} for the basic definitions in quantum mechanics, \cite{OszmPOVM2017,Oszmaniec2018,Leo2017} for application in quantum information, and a recent work \cite{Filippov2018} for  the extension to the realm of of general probabilistic theories.  

\begin{definition}[Simulation of observables]\label{def:simulation}
We say that observable $X$ can simulate observable $Y$ (denoted as $X \rightarrow Y$), when there exists a stochastic channel $\Lambda$ 
such that if we apply the channel to outputs of the observable $X$, then 
for any preparation $P$, the obtained statistics is the same as  the statistics of the outputs of $Y$ for that preparation. 
\end{definition} 
Formally, $X \rightarrow Y$ 
there exists a stochastic map $\Lambda$ such that  $\q_{Y}(P) = \Lambda \q_{X}(P)$, \emph{simultaneously}, for all preparations $P$. \\  

\begin{definition}[Clean observables]
\label{def:clean}
An observable $X$ is called {\it clean} if for any $Y$ such that $Y\rightarrow X$, also $X\rightarrow Y$.
\end{definition}
In other words, a clean observable is an observable that cannot be simulated in irreducible manner to other observable in the theory. 
\begin{definition}[Sharp observable]
An observable $X$ is called {\it sharp} if for any output there exists a $P$, which gives this output with probability 1.
\end{definition}

\mg
\begin{definition}[Extremal observables]\label{def:extremal}
We say that an observable $X$ is extremal if the statistics of the outputs can not be obtained by convex mixture of two distinct measurements simultaneously for all preparations.
\end{definition} 
\blk

\section{Uncertainty, complementarity  and uncertainty relations} \label{sec:CONCEPT}
\label{sec:compl-unc-general}

\subsection{Preparation uncertainty relation and complementarity}

Let us start with the formal definition of (preparation) uncertainty of two observables $X,Y$.

\begin{definition}[Joint preparation uncertainty]
We say that a preparation $P$ is exhibits joint preparation uncertainty for observables $X,Y$ if at least one of the distributions $\q_X (P),\q_Y (P)$ is not deterministic.  
\end{definition}

In quantum mechanics preparation uncertainty relation (\pur) \cite{BuschHeinsenberg2007} refers to the situation, 
where for two quantum-mechanical observables $X,Y$ there exist 
no praparation (state) $P$ for which both $X$ and $Y$ have well-defined values. Typically, \pur\ has the form
\begin{equation}\label{eq:PUR}
\mathrm{U}_{X,Y}(P) \geq C_{X,Y}\ , 
\end{equation}
where $\mathrm{U}_{X,Y}(P)$ is some measure of joint uncertainty of  $X$ and $Y$ on a preparation  $P$ and $C_{X,Y}$ is the quantity depending on $X$ and $Y$. Often, the right-hand side of \eqref{eq:PUR} is identified with the measure of complementarity of observables $X$ and $Y$. Our goal is to propose a framework allowing to consider the preparation uncertainty relation in any theory. Therefore, both sides of the \pur\ should have operational interpretation i.e. should depend only the observed statistics rather than on  the formalism of the particular theory. 

Currently, in quantum mechanics  the  right-hand side of \pur\ is  typically not defined operationally. Namely, it usually refers explicitly 	to the mathematical structure of quantum mechanics rather than to the observed statistics. For example, in the Kennard-Robertson  uncertainty relation \cite{Kennard1927,Robertson1929} $C_{X,Y}$ depends on the commutator $[X,Y]$. Also, in Deutsch \cite{Deutsch1983}  and Maassen-Uffink \cite{MaasenUfff1988} entropic  \ur\, $C_{X,Y}$ is a function of the maximal overlap of eigenvectors of the involved observables.    Let us note, that in quantum mechanics the right-hand side of \eqref{eq:PUR}  is nontrivial only for noncommuting observables. Such observables have a crucial feature that they {\it access informations that cannot be obtained simultaneously}. In fact, this characteristic has been associated with complementarity already since the invention of quantum theory   \cite{Conmpl1928,Bohr1949,Plotonisky2014,Petz2007}. In this work we propose to define the notion of complementarity of two observables  via impossibility of joint access to informations obtained in the course of their measurements. This allows us to talk about complementarity in any physical theory. Importantly, our definition differs from the approach from \cite{BuschHeinsenberg2007}, where complementarity is defined by the minimal value of uncertainty (the right hand side of \eqref{eq:PUR}) over all states allowed in the theory. This perspective, albeit operational, treats complementarity only as the quantifier of uncertainty of a theory.  Our approach is that complementarity can be regarded as something positive: there is more information in the system than one observable, even most fine grained, can access. This however, at least in quantum mechanics, comes with the price which takes the form {\it uncertainty relations} (of various types that we discuss below). Existence of such price for the phenomenon of excess of information we shall postulate as a physical principle.

\subsection{Complementarity and joint non-measurability}

Let us start with the qualitative definition of complementarity.
\begin{definition}[Complementarity]\label{def:compl}
We shall call two observables $X,Y$ are complementary if they are not jointly measurable i.e. they  statistics $\q_X (P)$, $\q_Y (P)$  cannot be obtained by classical post-processing independent on the preparation $P$. 
\end{definition}
This definition is motivated by the following observation: if two observables are jointly measurable, this means that both informations can be accessed by measuring a single observable. This would mean, that the observables were simply not fine grained enough.  Interestingly, this reasoning, in quantitative form, is  itself an uncertainty relation, called {\it measurement uncertainty relation} (MUR); quoting \cite{BuschLahtiWerner}: "Measurement uncertainty relations are quantitative bounds on the errors in an
approximate joint measurement of two observables". 

In quantum mechanics the two uncertainty relations: MUR and PUR are intimately related.
Namely, PUR can be nontrivial only for those observables for which MUR holds.   Here, we say that PUR is nontrivial, 
if it nontrivially restricts the statistics of the two observables, i.e. that RHS  of \eqref{eq:PUR} is nonzero. 

Let us emphasize here, that  it is not always opposite: namely, even if observables are not jointly measurable
 (i.e. when we have nontrivial MUR),  PUR may be still trivial.  In other words: complementarity not always enforces 
uncertainty. E.g. when we have two observables that have a common eigenstate, but otherwise do not commute, we have 
no joint measurability,  and the observables are still (though not fully) complementary
but PUR is trivial: right hand side of PUR is zero, and there is no uncertainty.  Basic example is given by these observables:
\be
\label{eq:ex1}
\left[ \bea{cc}
\sigma_x & 0 \\
0        & 1 \\
\eea \right], \quad
\left[ \bea{cc}
\sigma_z & 0 \\
0        & 1 \\
\eea \right]\ . \quad
\ee

Interestingly, even more drastic phenomena can happen. Consider two dichotomic projective measurements $M$ and $N$ in $\mathbb{C}^6$ (equipped with the standard basis $\lbrace \ket{i}\rbrace_{i=1}^6$) having the following effects 
\begin{eqnarray}
M_1 &= \ketbra{1}{1}+\ketbra{3}{3}+\ketbra{4}{4}\ \nonumber ,\\
M_2 &= \ketbra{2}{2}+\ketbra{5}{5}+\ketbra{6}{6}\ \nonumber ,\\
\label{eq:weird}
N_1 &= \ketbra{+}{+}+\ketbra{3}{3}+\ketbra{5}{5}\  ,\\
N_2 &= \ketbra{-}{-}+\ketbra{4}{4}+\ketbra{6}{6}\ \nonumber,
\end{eqnarray}  
where $\ket{\pm}=(1/\sqrt{2})(\ket{0}\pm\ket{1})$. It can be seen that even though the above measurements are not jointly measurable (because the states $\ketbra{\pm}{\pm}$ do not commute with the states $\ketbra{0}{0},\ketbra{1}{1}$) , there is \emph{no uncertainty} - in fact the statistics set $S_{N,M}$ is as big as possible and equals $S$, the Cartesian product of two one dimensional simplices (see Fig.\ref{fig:sxy}). Notice however that the above projective measurements are not (see Definition \ref{def:clean}) since they can be obtained as coarse-grainings of fine-grained (rank-one) projective measurements in $\mathbb{C}^6$. In what follows we will show that in quantum mechanics (suitably-understood) joint non-measurability indeed implies (suitably-understood) uncertainty, but only for clean and \mg extremal \blk observables.

\subsection{Three types of uncertainty and complementarity}\label{sec:3comp}

The above discussion shows that joint non-measurability may seem to be not a good candidate for right hand side of \eqref{eq:PUR}.
Fortunately, there is an extension of PUR, called exclusion principle proposed by Hall \cite{Hall}. While the original 
Hall's principle,  is still trivial for observables that share a common eigenstate, its natural extension conjectured in \cite{GrudkaExclusion}
and proved in \cite{ColesPiani}, is nontrivial, whenever observables do not commute.   The exclusion principles are quantified in particular manner (via mutual information). We would like to avoid using any particular quantifiers as at the moment we are only interested in the question, of whether there is uncertainty, or not, and whether there is information exclusion or not. In what follows we present the qualitative definitions of uncertainty and exclusivity that avoid usage of any quantifiers.

\begin{definition}[Traditional uncertainty]\label{def:unc}
Two observables $X,Y$ exhibit non-zero (preparation) uncertainty, if  for arbitrary preparation $P$, their statistics $\q_X (P)$, $\q_Y (P)$ are never both deterministic at the same time.
\end{definition}

\begin{definition}[Information exclusion]\label{def:excl}
Consider two observables $X,Y$ with $d$ outcomes. We say that they have {\it information exclusion}, if there does not exists $d$ element set of preparations $P_i$, so that each of the states gives fully predictable output for both observable, and different state leads to a different outputs (statistics). 
\end{definition}

From now on we can operate solely  on a qualitative level. In quantum mechanics, whenever sharp  and clean  measurements (i.e. projective measurements with  one-dimensional projections)  are not jointly measurable (equivalently, they do not commute \cite{Heinosaari2008}), they lead to nontrivial information exclusion principle,  
ergo complementarity of two observables always imposes nontrivial exclusion principle on those observables.  
For formal proof see Lemma \ref{lem:appendix-implications} in Appendix \ref{sec:appendix-implications}.
Recall that for non-clean observables, it is not true, as shown by measurements given in Eq.\eqref{eq:weird}. \mg Note that in quantum theory sharp and clean observables are extremal too. \blk 
Thus, we have the following: {\it In quantum mechanics for clean observables  complementarity implies information exclusion}. 

As said, we cannot replace in this sentence "information exclusion" with "uncertainty". 
Thus we obtained a picture illustrated by Table \ref{table:uc1}, where we have one space to
fill: some version of complementarity, that would imply traditional uncertainty. 
\begin{table}[h]
\begin{tabular}{||l|c||}
\hline
\hline
Uncertainty & Complementarity \\ [0.2ex]
\hline\hline
information exclusion & associated with  \\ 
 & joint non-measurability \\ 
\hline
traditional & ? \\
\hline\hline
\end{tabular}
\caption{\label{table:uc1}}
\end{table}
Now we would like to fill it. 
%the free slot in the table, i.e. to identify a version of complementarity, that would 
%imply uncertainty in a traditional form.
 Let us note that if we coarse grain the observables from the example given in  Eq.\eqref{eq:ex1},
by choosing not to distinguish between the two outcomes of $\sigma_x$ and the same $\sigma_z$, 
then the new observables will become trivial, having no complementarity and no uncertainty. 
This prompts us to consider a stronger version of complementarity, which can be called  \emph{full complementarity}.
\begin{definition}[Full complementarity]
We say that two observables $X,Y$ are fully complementary when after arbitrary 
coarse-graining (apart from the trivial one, where none outcomes are not distinguished) the observables still remain jointly not-measurable. 
\end{definition}
Clearly, such stronger complementarity \emph{implies uncertainty in the traditional form for projective measurements} (it follows
from Lemma \ref{lem:appendix-implications} in Appendix \ref{sec:appendix-implications}). 
However, let us consider the following example
\be
\label{eq:ex2}
\left[ \bea{cc}
\sigma_x & 0 \\
0        & \sigma_x \\
\eea \right], \quad
\left[ \bea{cc}
\sigma_z & 0 \\
0        & \sigma_z \\
\eea \right]\ . \quad
\ee
The above two observables do not exhibit full complementarity, yet they are uncertain. Thus, this notion is a bit too strong to be put in 
the table on the same level as traditional uncertainty. At a first glance, such strong notion of complementarity 
should be associated with the following strong version of uncertainty, which, to our knowledge has not been examined so far. 
\begin{definition}[Strong preparation uncertainty]
We say that two observables $X,Y$ exhibit strong (preparation) uncertainty when they remain uncertain after any nontrivial coarse-graining. In other words, it is impossible to find a preparation $P$ such that $\sum_{i\in I} q_X(i|P) =\sum_{j\in J} p_Y(j|P)=1$, for some nontrivial subsets $I,J$ of the output spaces of $X$ and $Y$ respectively.
\end{definition}

\begin{rem*}
It is also possible to define a strong information exclusion. Namely, we say that observables $X,Y$ exhibit strong exclusion, when after any coarse-graining  they still exhibit information exclusion. Interestingly, in quantum mechanics, the two notions become equivalent, however  in general (for some weird theory) they may be distinct.
\end{rem*}

Somehow counter-intuitively, it turns out that in quantum mechanics full complementarity does not imply full uncertainty, even for clean and \mg extremal \blk measurements (see Appendix \ref{app:fullCOMP} for the concrete counterexample in dimension five).  Therefore, in quantum mechanics full complementarity and strong uncertainty will not give rise to uncertainty like principle. In turns out that the version  of complementarity that implies traditional uncertainty (for clean observables) is the following intermediate version of complementarity, which we shall call \emph{single-outcome} complementarity. The proof is given in Lemma \ref{lem:appendix-implications} in Appendix \ref{sec:appendix-implications}. 

%\begin{table}[h]
%\begin{tabular}{||l|c||}
%\hline
%\hline
%Uncertainty & Complementarity \\ [0.2ex]
%\hline\hline
%information exclusion & associated with  \\ 
 %& joint non-measurability \\ 
%\hline
%traditional & ? \\
%\hline
%strong uncertainty /   &  associated with \\ 
											 %& joint non-measurability \\
%strong exclusion  & under  all coarse-grainings \\ 
%\hline\hline
%\end{tabular}
%\caption{\label{table:uc2}}
%\end{table}
%As can be seen from Table \ref{table:uc2} we still lack the version of complementarity, that would be responsible for traditional uncertainty. 
%As said, the full complementarity, although it does imply uncertainty for projective measurements, it is too strong and hence it cannot be the the core property of observables
%that implies the traditional uncertainty. 

\begin{definition}[Single-outcome complementarity]
We say that two $d$ outcome observables $X,Y$ exhibit single-outcome complementarity when after coarse-grainings, that preserve one outcome, and glue all the rest $d-1$ outcomes, the resulting dichotomic observables are still jointly non-measurable. 
\end{definition}

\begin{table}[h]
\begin{tabular}{||l|c||}
\hline
\hline
Uncertainty & Complementarity \\ [0.2ex]
\hline\hline
information exclusion & associated with  \\ 
 & joint non-measurability \\ 
\hline
traditional & associated with \\
						& single-outcome joint non-measurability \\
%\hline
%strong uncertainty /   &  associated with \\ 
											 %& joint non-measurability  \\
%strong exclusion  & under  all coarse-grainings \\ 
\hline\hline
\end{tabular}
\caption{\label{table:uc3}}
\end{table}
Summarizing, for quantum mechanics we have obtained the full picture, as shown in Table \ref{table:uc3}.

\subsection{Uncertainty principle as a physical postulate}

Motivated by the analysis presented in the preceding part, we have found candidates for the right hand sides of the general uncertainty relation  \eqref{eq:PUR}. These will be one of variants of complementarity, depending on what type of uncertainty we will put to the left hand side. The implications between our notions both those that hold by definition, as well as those postulated as (qualitative) uncertainty relations are depicted in  Fig. \ref{fig:implications}. We also show in the figure the pair strong uncertainty vs full complementarity, 
pointing out that the implication does not hold. 
\begin{figure}[h]
%\begin{center}
\includegraphics[scale=0.5]{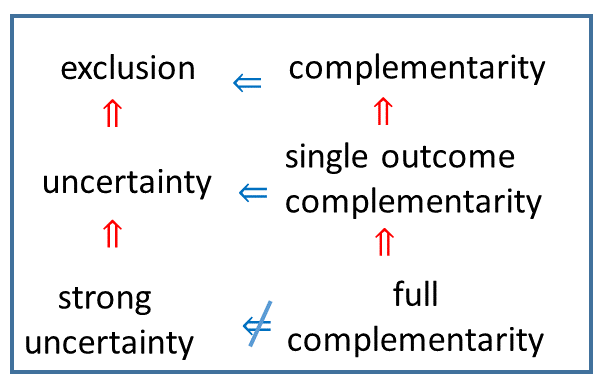}
%\end{center}
\caption{\label{fig:implications} The red implications hold by definition. The blue ones we postulate for  clean, sharp and 
\mg extremal \blk observables in  physical theories.}
\end{figure} Recall that in quantum mechanics relations between the two kinds of uncertainty and complementarity given in 
Fig. \ref{fig:implications} hold only for fine grained projective measurements. These measurements are clean \mg and extremal \blk observables (see Definition \ref{def:clean}) and we postulate the relation between uncertainty and complementarity only for \mg clean-extremal \blk measurements.  

{\bf Postulate (Uncertainty principle):} {\it In physical theories \mg observables which are complementary, clean and extremal, \blk necessarily exhibit uncertainty.}

In other words: in any theory lack of joint measurability for \mg clean-extremal \blk observables must imply uncertainty. In other words the existence of uncertainty principle can be also understood as a price for the excess of information provided by complementary observables: 
{\it Uncertainty principle states that complementarity has a price -  which is uncertainty.}

\begin{rem*}
Let us emphasize, that while uncertainty is present only in quantum world, and 
not in classical one, the uncertainty principle holds both in quantum and classical theory:  In classical case it holds,
because there is no complementarity, and therefore the "price" is zero.
\end{rem*}

\begin{rem*}
In this work we will be mostly interested in sharp and \mg extremal \blk observables, as  non-sharp \mg or non-extremal \blk observables are themselves uncertain, and the uncertainty is not related to complementarity, but just comes from some form of apriori epistemic restrictions. Note however that the existence of non-sharp \mg or non-extremal \blk measurements \emph{does not} contradict the uncertainty principle.
\end{rem*}

Later in this paper we shall pave the way to quantify uncertainty and complementarity, 
aiming to grasp the above principle quantitatively.  At this moment let us informally state the general form of uncertainty relations. 
\begin{definition}[General uncertainty relation]
The general uncertainty relation is inequality of the following form
\begin{equation}\label{eq:pur}
\U_{X,Y}(P) \geq f^{\uparrow}(C_{X,Y})\ ,
\end{equation}
where $\U_{X,Y}(P)$ is a measure of (joint) uncertainty of $X$ and $Y$, $C_{X,Y}$ is some indicator of complementarity of observables
$X,Y$ (see  Section \ref{sec:post-uncert} for the properties that these quantities should satisfy ), and  $f^{\uparrow}$ is a non-decreasing functions whose ranges are non-negative. The form of these two functions depends on the particular measures of  complementarity and uncertainty used.
\end{definition}

\blk

So far we have mostly talked about the negative aspect of complementarity (joint non-measurability), however as we have mentioned,
it is strictly connected with a positive aspect of complementarity: because of joint non-measurability, the observables reveal more information, than possible by means of a single observable. Further in the paper we will provide examples of quantifiers of uncertainty that would reflect this point of view.

\begin{rem*}Let us emphasize, that the notion of complementarity we propose differs from the one considered in \cite{OW}:
"(...) two measurements are complementary, if the second measurement can extract no more information about the preparation procedure than the
first measurement and visa versa. We refer to this as information complementarity. Note that quantum mechanically, this does not necessarily have to do with whether two measurements commute. For example, if the first measurement is a complete Von Neumann measurements,
then all subsequent measurements gain no new information than the first one whether they commute or otherwise."
We see that the authors consider \emph{sequential} measurements, and that their definition incorporates the process of disturbing 
the state by measurement. In our paper we restrict to the typical scenario of preparation uncertainty relations, where there are no sequential measurements, and our complementarity is built-in in such a paradigm. 
\end{rem*}
\mg
\subsection{Reverse uncertainty relations}
\label{subsec:reverse}
One can also ask, how about inverse relation, where complementarity would imply uncertainty. 
We may consider the following definition: 
\begin{definition}
Reverse PUR is the following implication: non-zero uncertainty implies non-zero complementarity. I.e. uncertainty cannot occur if observables are not complementary to some extent. 
\end{definition}
Note that, while uncertainty principle does not hold in arbitrary theory, and we want to propose it to be a postulate for legitimate theories,
the above {\it reverse PUR} is expected to  hold for all sharp and clean pairs observables. 
In sec. \ref{subsec:reverse-rescaling} we present result which says, that reverse PUR holds for binary, sharp and clean  outcomes 
for any theory. We give there quantitative form of such reverse PUR. 
In quantum mechanics, it is easy to see, that reverse PUR holds qualitatively for the pair exclusion-complementarity: i.e. exclusion implies complementarity. 
\blk

\section{Complementarity from statistics set and Independence}
\label{sec:basic-indep}
In the previous section, while discussing how to make uncertainty relations operational, we have put emphasis on
connection between  complementarity and  impossibility of joint measurement. Yet, one should also embrace the positive aspect of complementarity: it is the surplus  of information provided by two (or perhaps more) observables. In this section we would like to describe how one can quantify such excess  in arbitrary theory.

Consider a very simple theory: it has just two dichotomic observables $X$ and $Y$, and all possible pairs of distributions are allowed (i.e. 
for any pair of distributions there exists preparation, that gives rise to these distributions, via measurement of our observables.
The statistics set $\sxy$ is therefore the full square (see Fig. \ref{fig:sxy-basic-sec}).  Clearly, each of them brings completely independent information,  and  these two informations cannot be acquired in any other way. 
Thus the two observables are maximally complementary. 
\begin{figure}[http]
%\begin{center}
\includegraphics[scale=0.3]{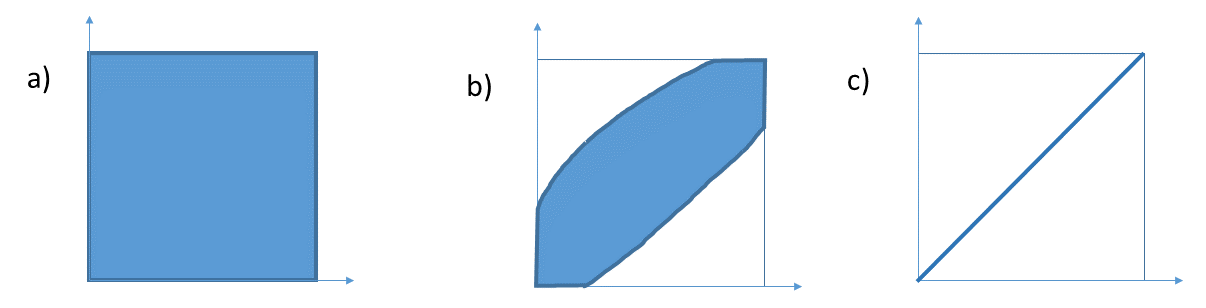}
%\end{center}
\caption{\label{fig:sxy-basic-sec} The statistics set for a) most independent observables b) intermediate case c) the same observables}
\end{figure}

Suppose that the set $\sxy$ shrinks a bit towards one of the diagonals. The observables become correlated, although there is no joint distribution.  Namely, measuring any of them does not bring a lot of new information, compared to the information already provided by the measurement of the other one.  This is clearly visible in the extreme example, when the set $\sxy$ is just the diagonal and the observables are identical. Thus, \emph{the more the set shrinks, the smaller is complementary}. Since our two observables are the only ones in the theory, the complementarity is solely a function of the statistics set $\sxy$ i.e. $C_{X,Y}= C(S_{X,Y})$. Moreover, it should be intuitively monotonic under inclusions, i.e. if $\sxy\subset S_{X'Y'}$ then  $C(S_{X,Y})\leq C(S_{X',Y'})$. To summarize: if $X$ and $Y$ are the only observables in the theory, complementarity can be identified with their "independence", which can be intuitively deduced from the statistics set.

The problem becomes more complicated when there are other observables in the theory. To see it, 
consider a quite opposite situation - two classical bits. $X$ measures one bit, and $Y$ measures the other. The set $\sxy$ 
is the same - again square. But complementarity vanishes, as the information can be accessed by refined observable with four outcomes - 
the two bit observable. Thus, for observables that are not clean  the statistics set does not tell us anything about complementarity. 

\mg Similarly, the statistics set of non-extremal observables does not capture \com. Suppose two observables $X_1$ and $X_2$ are not complementary with observable $Y$ separately. We naturally expect that the complementarity between $Y$ and another observable $X$ which is realized by some convex mixture of observables $X_1,X_2,$ is also zero. However,  in general ''independence'' does not satisfy this feature. We provide an example in Appendix \ref{appendix:non-extremal}. \blk

Therefore, in what follows we limit ourselves to  clean \mg and extremal \blk observables. We can now and ask again, whether independence $\prec(S_{X,Y})$ (for a while  intuitively defined function of the statistics-set $S_{X,Y}$, as elaborated above) 
is related to complementarity (joint non-measurability). Or more concretely - can we infer complementarity looking solely at statistics set for two clean \mg and extremal \blk observables?
By definition, for clean \mg and extremal \blk observables there does not exist any \mg set of observables \blk that might reproduce two observables exactly. If one observable can simulate the other one (see Definition \ref{def:simulation}) the statistics set has zero measure. Hence, if the set $\sxy$ a bit thicker than just the diagonal, this must imply that we have complementarity. 

However, quantitatively we might still have the following situation:
there exists third observable, that almost simulate our observables $X$ and $Y$. And this observable would be able to acquire almost all 
the information, hence the independence of the observables would again mean just standard independence, and would not imply complementarity. 
In  such a theory, the (approximate) joint measurability is not revealed in the statistics set. Note 
that in quantum mechanics it is not so. Consider e.g. qubit observables. When they are complementary, 
the set is circle. When they become more and more similar (ergo better and better jointly measurable) the statistics set shrinks 
towards diagonal (see Fig. \ref{fig:quantum-basic-sec}). 
\begin{figure}[http]
%\begin{center}
\includegraphics[scale=0.3]{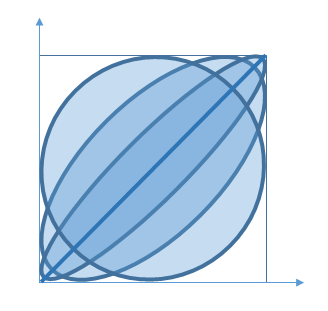}
%\end{center}
\caption{\label{fig:quantum-basic-sec} The statistics sets for quantum binary observables. Circle is  for most complementary (e.g. $\sigma_x$ and $\sigma_z$, diagonal for both being $\sigma_z$).}
\end{figure}

To summarize: for clean \mg and extremal \blk observables independence may not reflect complementarity, in a theory, where  better and better joint measurability of clean observables does not imply that the observables converge to one another. 
Thus in general one should somehow connect two features: (i) how well observables can be simulated by a third one (which is 
a subject of  MUR) (ii) independence seen in statistics set. And complementarity would be a function of those two features. This looks like a very ambitious  program, and therefore for the purpose of this paper, we shall take a first step. 
Namely, we shall work out complementarity, that will work well in theories where approximate joint measurability (for clean \mg and extremal \blk observables) means that the observables are approximately the same.  Thus, in the rest of the paper, we will assume that the statistics set of clean and sharp observables properly reflect the joint measurability features.

Finally, we can define the complementarity through independence for arbitrary \mg extremal \blk observables as follows, 
\be\label{eq:compFROMind}
C_{X,Y} \eqdef \min_{X':X'\rightarrow X} \min_{Y':Y'\rightarrow Y} \prec(S_{X',Y'})\ ,
\ee
where the minimum is taken over all observables $X'$ and $Y'$ that simulate $X$ and $Y$ respectively. \mg
For non-extremal observables, we follow the convex-roof extension of the above definition, that is,
\be \label{com-non-ext0}
\C_{X,Y} = \min_{\{\alpha_i,X_i\}} \min_{\{\beta_j,Y_j\}} \sum_{i,j} \alpha_i \beta_j\ \C_{X_i,Y_j}\ ,
\ee
where the minimum is taken over all possible decomposition of the observables $X,Y$  to the extremal observables $\{X_i\},\{Y_j\}$ with probability distribution $\{\alpha_i\},\{\beta_j\}$. 
 Importantly, this notion of complementarity reduces to independence for clean and extremal observables. \blk

\section{Dichotomic observables - intuitive picture} \label{sec:twoOUTCOMES}
%\mh{where we write, that complementarity is largest, when the set of observables }

In this part we focus exclusively on the case of dichotomic observables. This simplified setting allows for  the appealing geometrical interpretations of the ideas presented in the preceding sections.
As mentioned before, for two observables $X,Y$, each with two outputs the simplices are just intervals, and the product of two simplices is a square. 
The set of $S_{X,Y}$ is some convex body within the square. Possible sets $\sxy$ are depicted in Fig. \ref{fig:2out-sxy}.
\begin{figure}[http]
%\begin{center}
\includegraphics[scale=0.3]{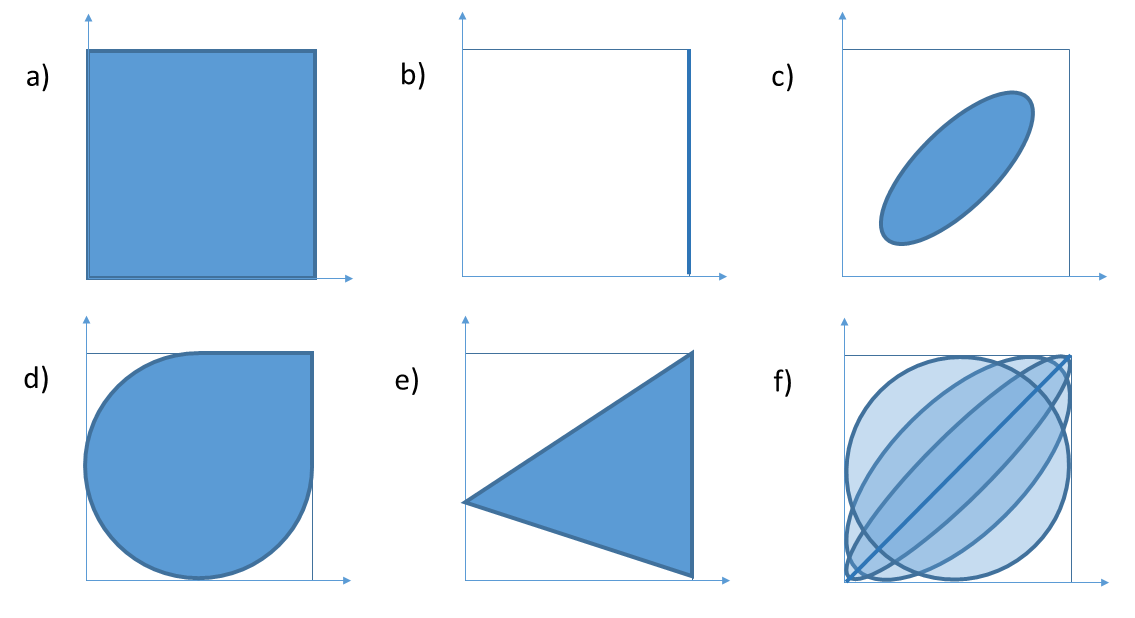}
%\end{center}
\caption{\label{fig:2out-sxy} Various sets $\sxy$. (a) $\sxy$ is equal to full square - the so called "square bit". (b) one observable is completely noisy - reports no information (c) both observables are not sharp i.e. there is no state that would give deterministic outcome for any of them.(d), (e) both observables are sharp. (f) quantum mechanical observables.}
\end{figure}

If both observables are sharp, i.e. for any outcome there exists a state, that gives this outcome with probability $1$, 
the set must touch each of the edges of the square.  The examples of non-sharp observables are in Fig. \ref{fig:2out-sxy}b) and c).  In Fig. \ref{fig:2out-sxy}f) we have qubit observables of the form  $X={\bf n}\cdot \sigma$, $Z=\sigma_z$,  with $n_y=0$, and $n_x^2 + n_z^2=1$.
Depending on angle between the vectors $\bf n$ and $(0,0,1)$, we interpolate between 
(i) the classical case, where both observables are $\sigma_z$, and the set $\sxy$ is just a line connecting opposite corners, 
 and (ii) most complementary case, where the set $S_{X,Y}$ constitutes a circle, and observables are $\sigma_x$ and $\sigma_z$
in the latter case two observables are "mutually unbiased", i.e. for any state that gives deterministic outcome for one observable,
it gives completely random output.

%\mh{Here argue that if the body is the whole square then we have full complementarity}

\subsection{Independence/Complementarity}

Note, that for two outcomes, there is no distinction between the three kinds of complementarity/independence presented in Section \ref{sec:3comp}. This is because there is no non-trivial coarse-graining operations. Assuming that observables are clean \mg and extremal \blk, we can now identify complementarity and independence (see discussion in Section \ref{sec:basic-indep}).

{\it Square bit:}  
For the states to be corners, both observables bring maximal, and independent information. 
Clearly the square presents the richest statistics that can be obtained  from two observables, therefore it has the largest possible independence among all sets $\sxy$. 

{\it Classical bit:} The set $S_{X,Y}$  is just diagonal or anti-diagonal. In the first case the second observable is just a copy of the  first one, and in the second case - its negation. Here both observables report exactly the same information. Ergo, we have 
no independence.

{\it Qubit:} For observables $X={\bf n}\cdot \sigma$, $Z=\sigma_z$,  with $n_y=0$, and $n_x^2 + n_z^2=1$, see Fig. \ref{fig:2out-sxy}f)
depending on angle between the vectors $\bf n$ and $(0,0,1)$, we interpolate between 
the classical case, where observables are the same, and the most complementary case possible in quantum mechanics, where the set $S_{X,Y}$ constitutes a circle.
This latter is the case, where two observables are "mutually unbiased", i.e. for any state that gives deterministic outcome for one observable,
it gives completely random output. Note that this randomness is not a signature of complementarity.   Exactly the same behavior occurs  also for the square bit, where we can have states deterministic for both observables. Rather it should be regarded as  {\it uncertainty}.

Generally, for dichotomic clean \mg and extremal \blk observables, whenever the statistics set is thick (i.e. not one-dimensional) we expect nonzero complementarity. 
In particular, the measures that we shall propose further, in the case of two outcomes will all have  this feature.

%\mh{Digression for larger dimension, full complementarity will be not faithful - it will be zero even 
%for pretty thick sets. The complementarity without adjective - should be faithful}

%\mg
%In general, whenever the set is "thick" we expect nonzero complementarity. 
%In particular the measures that we shall propose further will all have  
%feature, that for sets that are not one dimensional, complementarity is nonzero.  \mh{How about rac?}
%\blk
%We know that in quantum mechanics  the above observables (except for $n_x=0$) are not jointly measurable, their independence is the same as complementarity. 

\subsection{Uncertainty}

\begin{figure}[http]
%\begin{center}
\includegraphics[scale=0.3]{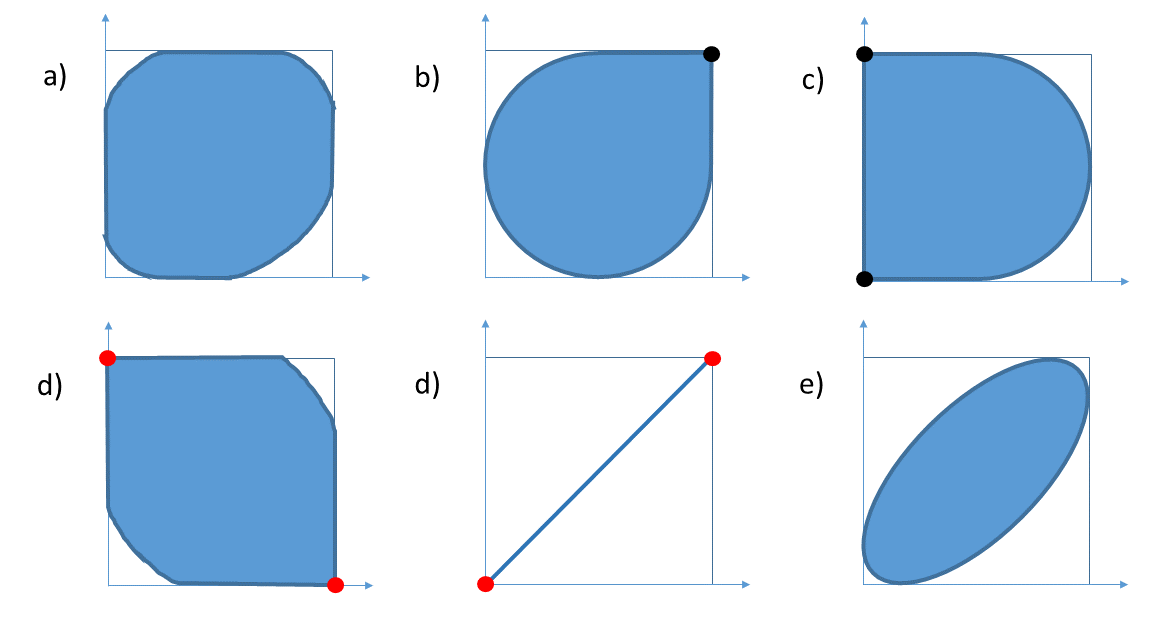}
%\end{center}
\caption{\label{fig:2out-uncert} Uncertainty for two outcome observables.  a) no corner included, hence we have uncertainty for any state 
b) one corner included -  represents preparation that has no uncertainty for both observables; exclusion still holds 
c) two corners included, so for two preparations no uncertainty, still exclusion holds  d) no uncertainty and no exclusion, since opposite corners are included e) classical case (the same observables) - no uncertainty f) generic quantum observables: both uncertainty ad exclusion.}
\end{figure}

The concept of uncertainty for dichotomic observables is illustrated in Fig. \ref{fig:2out-uncert}. The only preparations, which give deterministic statistics for both observables correspond to corners of the square. 
The traditional uncertainty thus means that the set $S_{X,Y}$ does not include any corner. Exclusion means that the set 
does not include any pair of opposite corners. Thus, unlike in the case of complementarity, even for two outcomes, 
uncertainty does not reduce to one type: there can be situation, that exclusion holds, but there is no uncertainty, see Fig. \ref{fig:2out-uncert}b) and \ref{fig:2out-uncert}c).  Clearly, strong uncertainty and traditional uncertainty collapse into one notion, since 
there is not nontrivial coarse graining for two outputs.  Thus we are left with two types of uncertainty. 
Note, that in quantum mechanics for two outcomes, at least qualitatively, there is no difference between 
the traditional uncertainty and exclusion. 

Finally, note that in \cite{SteegWehner} theories were considered, whose elementary systems exhibit
the statistics set $\sxy$ described by equation: 
\begin{equation}
\left(\frac{x-1}{2}\right)^p + \left(\frac{y-1}{2}\right)^p \leq 1
\end{equation}
for $p\geq 1$. 
For $p=2$ it is circle, i.e. the quantum case of maximally complementary observables (i.e. circle). 
For $p\to \infty$ the set $\sxy$ tends to full square.

\subsection{Uncertainty principle} 

As said in Section \ref{sec:compl-unc-general} and \ref{sec:basic-indep}, preparation uncertainty principle 
says that there is a price for complementarity: 
namely complementary observable have to be uncertain. 

For two outcomes, uncertainty principle says that whenever complementarity is nonzero, e.g. when the statistics set $\sxy$ is not one dimensional, then the set does not contain corner.  On more quantitative level, uncertainty principle says that that the more 
complementarity we want, the larger must be uncertainty.  
We see this in quantum case: the more we  want to be close to all four corners, the more we depart from the two original corners, 
which belonged to $S_{X,Y}$ in the case of classical bit (i.e. when two observables were the same). We observe this 
in Fig. \ref{fig:2out-ur-a}, where  we show sets $S_{X,Y}$ for three values of the angle between observables.

\begin{figure}[http]
%\begin{center}
\includegraphics[scale=0.3]{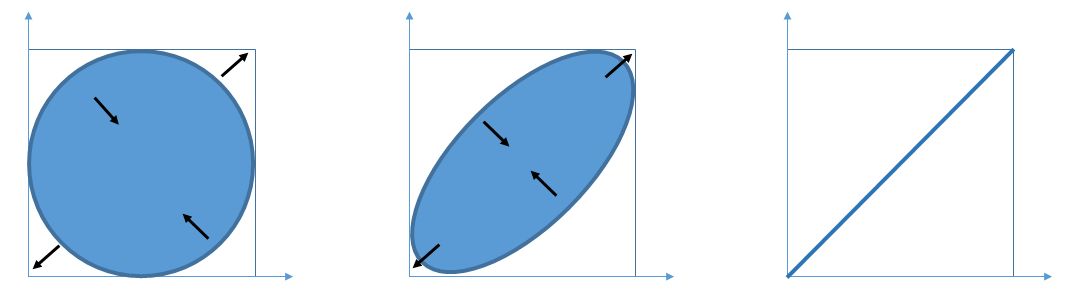}
%\end{center}
%\caption{\label{fig:2out-ur} Uncertainty principle: a) quantum case b) other hypothetical theory. 
%Each picture shows some different pair of observables, that become less and less uncertain, i.e. approach some corner. 
%Then they must also be less and less complementary. In quantum case this manifests in departing from two corners. In general 
%it can be departure just from one corner.}
\caption{\label{fig:2out-ur-a} Uncertainty principle in quantum case.
 %b) other hypothetical theory. 
%Each picture shows some different pair of observables, that become less and less uncertain, i.e. approach some corner. 
%Then they must also be less and less complementary. In quantum case this manifests in departing from two corners. In general 
%it can be departure just from one corner.
}
\end{figure}

Generally, uncertainty means that the set $S_{X,Y}$ is far from any of the corners. Complementarity means that $S_{XY}$ is 
close to all the corners.  Thus, uncertainty principle says: when one wants to be close to any one of the two opposite corners, one cannot 
be close to the other opposite corner.  Thus uncertainty principle puts also bounds on complementarity itself: 
the maximal complementarity can be achieved only when uncertainty vanishes, but this is forbidden by uncertainty principle.

%This statement allows to {\it formulate preparation uncertainty principle solely in operational terms}. Indeed so far, the rhs was
%a function of two quantum mechanical observables based on mathematical formalism. now we refer only to statistics of outcomes of 
%measurements. 

\section{Outline of the further results of the paper}\label{sec:outline2}

In this section we will give motivation and overview of the results presented in the second half of the manuscript.

\subsection{Quantifying independence and complementarity and proposing uncertainty relations.}
In the paper we shall propose some postulates that measures of uncertainty (Section \ref{sec:post-uncert}). They are just modest 
updating of the postulates given in \cite{FriedlandGour2013,GourNJP2016}).
Then we propose postulates for measures of independence and complementarity in Section \ref{sec:post-compl}.

Subsequently we shall propose some concrete measures of complementarity. Mostly we will 
concentrate on one of the types out of three presented in Section \ref{sec:compl-unc-general}: 
the most basic one that does not involve coarse graining.
We shall propose measure by means of random access codes in Section \ref{sec:random-access}, by means of rescaling in Section \ref{sec:rescaling}, and by means of preimages  in  Section \ref{sec:preimage}.
{\it A priori} we might not be able to make from these proper PUR, because in table \ref{table:uc3} they are in different rows. 
However, as already discussed in Section \ref{sec:basic-indep} for binary outcomes all complementarities coincide. We shall also propose uncertainty
based on random access code in Section \ref{sec:random-access}.
 
%We shall also try to propose corresponding types of uncertainty. Namely, if we want to have uncertainty relation, then left hand side and right hand side should be defined by means of a function that are sort of the same type.  E.g. both of them should, or should not be extensive. In the paper  we shall propose uncertainty, that seem to fit the complementarities  based on random access code in Section \ref{sec:random-access}, and rescaling in Section \ref{sec:rescaling}.

Now, having more or less compatible candidates for uncertainty and complementarity, one would like to build uncertainty relation,
that might be imposed on all theories. Let us emphasize, that we do not necessarily want the simple form of Eq. \eqref{eq:PUR}.
We will be satisfied with any relation, that will constrain uncertainty by complementarity. 

One way of obtaining uncertainty relations to be imposed on physical theories is to find what a relation between proposed uncertainty $U$ and 
complementarity $C$ is satisfied in quantum mechanics. An example of such PUR will be the relation \eqref{eq:urq2} between $C$ and $U$ built on the basis of rescaling.

\subsection{Relation with Information Contents Principle}
Having proposed some understanding of what uncertainty principle can mean in operational terms, would be good to have a 
universal PUR that is not forcefully build to fit quantum mechanics. An example of a principle that holds in quantum mechanics 
even though was not deliberately chosen to do so is Information Causality \cite{ic}. In \cite{Czekaj-ICP} a version 
of Information Causality was proposed that  differs mainly by putting emphasis on a single system,
while Information Causality apriori deals with bipartite systems.  It was called Information Content Principle (ICP).
It represents a bound on  random access codes for ensembles of states quantified by the mutual information. 
Therefore, qualitatively, it prevents maximal complementarity (if the latter is expressed by means of random access code). 
In Section \ref{sec:icp} we will show, that if the set $S_{XY}$ is symmetric under rotation about $\pi/4$ 
as is in the case of quantum mutually unbiased observables, then ICP turns out to be Maassen-Uffink 
uncertainty relation for such observables.  
We also show, that even with less symmetry assumptions, it still provides constraints for $\sxy$ which can play a role of PUR,
namely ICP prevents from too much complementarity, if there is not much uncertainty.

\subsection{Consequences of uncertainty relation for nonlocality}
One of the interesting applications of the idea of operational uncertainty relations, 
which we will present in Section \ref{sec:nonlocality}, is that 
they can put bounds on nonlocality. 
It is ubiquitous problem of quantum information theory, to understand 
in operational terms, what prevents Quantum Mechanics to be less non-local than it would be possible if the only constraint would be no-signaling, see e.g. \cite{BrassardPrinciple2006,ic,NavascuesWMacro}  (in \cite{Lodyga-MUR} the opposite direction was explored too: nonolocality and no-signaling implies measurement uncertainty). In 
Oppenheim and Wehner \cite{OW} attempt to 
understand why quantum mechanics is not maximally non-local, namely, they have made a crucial observation  that the system  that exhibit maximally non-local behavior,  i.e. it violates CHSH  inequality up to its algebraic bound, exhibits no uncertainty. 

Indeed, consider CHSH inequality. Alice and Bob measure one of two observables $A_1$, $A_2$ and $B_1$, $B_2$. 
When Alice measures her observable $A_1$, and gets some outcome, she prepares the state on Bob's site. 
To maximize CHSH, Alice's outcome should be perfectly correlated with Bob outcome, for any of his two observables. 
Thus the state of Bob's system, prepared by Alice measurement and outcome, must give deterministic answer to both his observables. 
Thus his observables cannot exhibit uncertainty.  This suggests that it is uncertainty that bounds 
the non-locality. However, there is a problem here: classical systems do not exhibit uncertainty, and yet still are not maximally nonlocal, 
even more - they are  not non-local at all.  Thus saying that uncertainty put bounds on non-locality would be a very weak statement - as it would  not provide any bound on nonlocality of classical systems, and in consequence could not capture the phenomenon 
of non-maximal nonlocality of quantum mechanics. 

The way out proposed in \cite{OW} was to involve also steering. 
 To quote the authors:  "(...) the degree of non-locality of any theory is determined by two factors – the strength of the uncertainty principle, and the strength of a property called “steering”, which determines which states can be prepared at one location given a 
measurement at another. (...)  For any physical theory we can thus consider the strength of non-local correlations to be a
tradeoff between two aspects: steerability and uncertainty.". Some disadvantage of  this approach is that it cannot be based only on statistics of observables in question. 
To verify the statement, the authors had first to find observables that are optimal for violation of Bell inequality, and then for those observables optimize steering. %We propose a simpler connection between non-locality and uncertainty. 

Here, we propose a different way out, possible to spell out in operational terms. Namely we just add to the word "uncertainty" 
just another word "principle", i.e. we say: 
"Uncertainty {\it principle} puts bounds on non-locality". Since, as discussed above, 
uncertainty principle holds for the whole quantum theory (unlike uncertainty, which appears only for specific observables), our statement implies also bounds on nonlocality for classical systems. We thus arrived at the following explanation, why quantum theory 
is not maximally non-local:

{\it Quantum theory is not maximally non-local because of uncertainty principle.}

Note that in \cite{OW} some stronger claim was made: namely, that uncertainty and steering not only bounds the non-locality, 
but it actually {\it determines} its value. 
%They also needed to verify that the most certain states are optimal for non-locality. Later it was shown, that it may not actually hold: there are Bell inequalities, for which the optimal states steered by Alice at Bob's site are not less uncertain
This was later refuted in \cite{RaviGMH-ur-nonlocality}. However, the weaker statement that uncertainty and steerability properties limit non-locality is still meaningful. Also in our case, we are on the same level: we claim that uncertainty principle puts bounds on non-locality.

Here we will argue, how uncertainty principle bounds non-locality 
for clean \mg and extremal \blk observables on a qualitative level. In Section \ref{sec:nonlocality} we shall provide quantitative picture, reproducing Tsirelson bound. 
For binary outputs,  notions of complementarity discussed in Section \ref{sec:compl-unc-general}
all become the same. Thus, uncertainty principle means qualitatively that complementarity implies uncertainty of any of three kinds.
Now, for binary outcomes uncertainty means, that the set $\sxy$ does not include any corner. 
Indeed if a corner belongs to $\sxy$, this means that there exists preparation, such that both distributions are deterministic.  In  Section \ref{sec:nonlocality} we shall argue, that from no-signaling it follows 
that to have maximal violation of CHSH one needs two observables with set $\sxy$ being square. One can see it quickly in the following way: 
to violate CHSH maximally, one needs so called Popescu Rochrlich box. From its very definition it follows, that after Alice's measurement, 
she prepares such states on Bob's side, that all four corners appear. 

Now, we employ uncertainty relation: since Bob's observables will have $\sxy$ being square, then complementarity is nonzero. 
However, uncertainty principle says that then there must be uncertainty, i.e. the set cannot touch corners, and therefore cannot be a square. 
In short, uncertainty principle rules out square, and therefore CHSH cannot be maximally violated. A drawback of our approach is that it works only for clean \mg and extremal \blk observables. Observables that are not clean, can
have $\sxy$ to be square, without uncertainty  - e.g. if one observable is one bit and the other is the other bit on the total system of two bits.

\blk

\section{Postulates for measures of uncertainty} 
%, \precom\ and \com }
\label{sec:post-uncert}

In this part we give the postulates for measures of  uncertainty for two observables. 
We shall not take the order from the weakest to the strongest (which would be: exclusion, traditional uncertainty, strong uncertainty/exclusion)
. Instead, we will begin with the most well known - uncertainty. Then, we will proceed with its immediate derivative - strong uncertainty, and end up with 
exclusion, which is the most complicated one.

\subsection{Uncertainty} 
First, any measure of the joint uncertainty $\U$ of two observables (measurements) $X$ and $Y$ should depend on the observed statistics in particular preparation procedure i.e. we should have $\U(P)= \U(\q(P))$. Intuitively, the measure $\U$ should tell us to what extent it is impossible to have simultaneous knowledge about both $X$ and $Y$ for a given preparation $P$. We propose the following postulates for the measure of of joint uncertainty (note that they are closely related to the postulates given in \cite{FriedlandGour2013,GourNJP2016}) .

\begin{enumerate}
\item We assume $\U(\q(P))\geq 0$ and $\U(\q(P))=0$ if and only if distribution of $X$ \emph{and} $Y$ giving rise to $\q(P)$ are deterministic. 
In other words $\q(P)$, is not located in the corner of the cartesian product of two simplices, see Fig.\ref{fig:2out-uncert}.
\item We assume that $\U(\q(P))$ measure cannot decrease under doubly stochastic operations performed \emph{independently} on outcomes of observables  i.e. 
\begin{equation}\label{eq:indepStoch}
\U\left(( D_1,D_2) \q(P)\right) \geq \U(\q(P)) ,
\end{equation}
for all doubly-stochastic $n\times n$ matrices $D_1$ and $D_2$. 
\item $\U(\q(P))$ measure cannot increase under coarse-graining and permutations of outcomes. Formally,
\be
\label{eq:coarse}
\U\left((\Lambda^e_1,\Lambda^e_2) \q(P)\right) \leq \U(\q(P)) \ee
for any \textit{extremal} stochastic maps $\Lambda^e_{1,2}$. 
\item We assume that $\U(\q(P))$ cannot decrease under taking mixture of preparations i.e. $\U$  is concave with respect to the convex structure of preparations
\begin{equation} \label{concavityU}
\U\left(\q(\alpha P_1 +(1-\alpha) P_2)\right)\geq \alpha  \U\left(\q( P_1)\right) + (1-\alpha) \U\left(\q( P_2)\right),
\end{equation}
for all $\alpha\in [0,1]$.
\item \mg We assume that uncertainty cannot decrease for mixture of measurements. Therefore, $\U$  is concave with respect to the convex structure of measurements, i.e.,
\beq \label{concavityUmeasurement}
&\U\left(\q_X(P),\q_Y(P)\right)& \geq  \alpha  \U\left(\q_{X_1}( P),\q_Y(P)\right) \nonumber \\
&& + (1-\alpha) \U\left(\q_{X_2}( P),\q_Y(P)\right)
\eeq
where the observable $X$ is realized by the convex mixture of two observables $X_1,X_2$ with probability distribution $(\alpha,1-\alpha)$. \blk
\end{enumerate}

Now \emph{uncertainty of the statistics set}, $\U(\setth)$, is defined by the minimum $\U$ over all tuples of distributions in $\setth$,
\begin{equation}\label{eq:uncertaintyOFtheory}
\U(\setth)\eqdef \min_{x\in\setth} \U(x).
\end{equation}
%Note that the uncertainty of observables defined for the statistics sets of is related to the \emph{universal uncertainty relations} discussed for example in \cite{FriedlandGour2013,GourNJP2016}. From the postulates 1-3 one can derive some properties of $\U(\setth)$ as follows,
%\mh{I do not see why  are they like universal URs}
\begin{itemize}
\item From concavity of $\U(\q(P))$ Eq.\eqref{concavityU} it follows that the minimum in Eq. \eqref{eq:uncertaintyOFtheory} is attained for the extremal points of $\setth$.  
\item Uncertainty measure possesses well-defined behavior under inlusion i.e.
i.e. for $S'\subset S$ we have 
\begin{equation}
\label{eq:precom-inclusion}
\U(S)\leq \U(S')
\end{equation}
%\begin{gather}
%\U\left(\conv(\setgen \cup \setgen') \right) \leq \min\left\{\U(\setgen), \U(\setgen') \right\} , \\ 
%\U\left(\setgen \cap \setgen' \right) \geq \min\left\{\U(\setgen), \U(\setgen') \right\} ,
%\end{gather}
%where $\conv(S)$ denotes the convex hull of a set $S$. 
\item It follows form postulate  Eq. \eqref{eq:indepStoch} that, any uncertainty measure is invariant under all doubly stochastic operations whose inverse is also a doubly stochastic operation. For instance, uncertainty is invariant under all possible relabeling (or permutations) of the outcomes. 
\end{itemize}  

\subsection{Strong Uncertainty}
We postulate any measure of strong (or full) uncertainty, which is denoted by $\U^f(\q(P))$, to be 
non-zero only if uncertainty is non-zero for all possible coarse-graining of outcome except the trivial one. Formally,  $\U^f(\q(P))=0$ if there exists extremal maps $\Lambda^e_1, \Lambda^e_2$ such that $\U((\Lambda^e_1,\Lambda^e_2)\q(P)) = 0$, where $\Lambda^e_1,\Lambda^e_2$ corresponds to the all possible permutations and coarse-graining except the trivial one. \\
Apart from that it also satisfies the postulates \eqref{eq:indepStoch}, \eqref{eq:coarse},  \eqref{concavityU} and \eqref{concavityUmeasurement} of uncertainty.

\subsection{Information Exclusion}
Here we list the postulates for any measure of Information exclusion of $\setth$.

\begin{enumerate}
\item $\E(\setth)\geq 0$ and $\E(\setth)=0$ if and only if for all outcome $k$, there exists a preparation, say $P_k$ such that 
\be
\q_X(k|P_k) = \tilde{\q}_Y(k|P_k) = 1,
\ee where $\tilde{\q}_Y(k|P_k)$ is an arbitrary $n$ element permutation of $\q_Y(k|P_k)$, i.e., $\tilde{\q}_Y(k|P_k) =  \q_Y(\pi(k)|P_k)$.
\item $\E(\setth)$ cannot decrease under doubly stochastic operations performed \emph{independently} on outcomes of observables  i.e. 
\begin{equation}
\E\left(( D_1,D_2) \setth \right) \geq \E(\setth) ,
\end{equation}
for all doubly-stochastic $n\times n$ matrices $D_1$ and $D_2$. Here, $\E\left(( D_1,D_2) \setth \right)$ denotes the allowed probability distribution in $S$ obtained from the observed statistics $( D_1,D_2)\q(P)$.
\item $\E(\setth)$ measure cannot increase under coarse-graining of outcomes. Formally,
\be
\E\left((\Lambda^e_1,\Lambda^e_2) \setth \right) \leq \E(\setth) \ee
for any \textit{extremal} stochastic maps $\Lambda^e_{1,2}$. 
\item $\E(\setth)$ measure possesses well-defined behavior under inclusion 
i.e. for $S'\subset S$ we have 
\begin{equation}
\label{e4}
\E(S)\leq \E(S').
\end{equation}
\item \mg Exclusion cannot decrease under convex mixture of measurements i.e.,
\beq \label{e5}
\E\left(S_{X,Y}\right)& \geq  \alpha  \E\left(S_{X_1,Y}\right) + (1-\alpha) \E\left(S_{X_2,Y}\right)
\eeq
where the observable $X$ is realized by the convex mixture of two observables $X_1,X_2$ with probability distribution $(\alpha,1-\alpha)$. \blk

\end{enumerate}

\section{Postulates for measures  of \precom\ and \com }
\label{sec:post-compl}
In this section we give the postulates that measures of \precom\ and \com\ for two observables. 

\subsection{\Precom} 
%Intuitively, the notion of complementarity refers to the situation where being able to perform measurements of two observables $X$ and $Y$  gives more "information" about the system  then performing just a single measurement, even if it is impossible to measure $X$ and $Y$ at the same time. However, there is a subtlety to see whether the measurement statistics of two observables are actually independent or not. Consider the classical theory of two bits, $X$ is the first bit, $Y$ is the second. Although they provide independent information, measuring two bits and suitable coarse-graining produces the statistics of $X,Y$. This observation leads to the fact that complementarity depends on the whole structure of a theory, i.e., all possible observables allowed in the theory. Thus, first it is convenient  to define another quantity, referred to \precom, which only depends on the measurement statistics of two observables irrespective of the structure of the whole theory. 

%Thus we cannot say that for physical theories we have uncertainty relation of the form $U \geq  Ind$ 
%\textbf{Solution}: we have to impose that $A_1$ and $A_2$ are fine grained. I.e. they are sharp, and their number of outputs is equal to the maximal number of distinguishable states.  

%\textbf{Example}: octahedron. Here observables are fine grained, and $U \geq Ind$  holds. Of course it holds because some pure states are artificially removed.  Corners of tetrahedron are cut. 

Recall that according to notation introduced in Section \ref{sec:framework},  $X \rightarrow Y$ means that for observables $X,Y$ 
there exists a stochastic map $\Lambda$ such that  $\q_{Y}(P) = \Lambda \q_{X}(P)$, \emph{simultaneously}, for all preparations $P$. \\

Now we propose that any measure of \precom\ ($\prec$) should depend only on the statistics that can be possibly observed   while measuring $X$ or $Y$, that is on the set $\setth$. Here are our postulates for the measure of \precom
\begin{enumerate}
\item We assume $\prec(\setth)\geq 0$ and that $\prec(\setth)=0$ if 
%(maybe also only if?) 
$X \rightarrow Y$ or $Y \rightarrow X$.
\item Any \precom\ measure is invariant under independent relabeling of outcomes of $X$ and $Y$ that is 
\begin{equation} \label{precom2}
\prec\left((\pi_1,\pi_2)\setth \right) = \prec(\setth)\ .
\end{equation}
for all permutations $\pi_{1,2}$ of $n$-element set. $(\pi_1,\pi_2)\setth$ denotes the allowed region obtained form the observed statistics of $(\pi_1,\pi_2)\q(P)$.
\item \Precom\ is a "monotonic" function of $\setgen$ under inclusion 
 i.e. for $S'\subset S$ we have 
\begin{equation}
\label{precom3}
\prec(S)\geq \prec(S')
\end{equation}

{\it Remark:} It might  seem natural to require monotonicity under post-processing, i.e. any stochastic map applied to outcomes of 
observables. However, it may happen that before processing  observables are in relation "$\to$", i.e. one can simulate the other one,
yet after some channel, they are not any more. Now, we require that \precom\ is zero for observables that are in relation, 
and the action of the channel can make it nonzero. Thus  \precom\ is not monotonic under post-processing. \mg Similarly, it might also seem that \precom\ cannot increase for convex mixture of two observables. However, one can find three observables such that $\prec(S_{X_1,Y})= \prec(S_{X_2,Y})=0$ but $\prec(S_{X,Y})>0$ where the observable $X$ is realized by convex mixture of $X_1,X_2$ (see Appendix \ref{appendix:non-extremal}). 
Yet for complementarity (see Section \ref{sec:post-compl2}) there is no such problem, 
and we will postulate its monotonicity under post-processing and non-increasing under convex mixtures.  \blk

%\begin{gather} \label{precom3}
%\prec\left(\conv(\setgen \cup \setgen') \right) \geq \max\left\{\prec(\setgen), \prec(\setgen') \right\}\ , \\ 
%\prec\left(\setgen \cap \setgen' \right) \leq \max\left\{\prec(\setgen), \prec(\setgen') \right\}\ .
%\end{gather}

\end{enumerate}

%\michal{Note that $\U(\setth)$ corresponds more or less to \precom\ defined by Busch in \cite{BuschHeinsenberg2007} - we have to address this issue in a smooth way. Informal justification of \precom\ has to be straightened and supported by a number of citations/ references}

%\subsection{Full \Precom}
%Let's consider two quantum observables,
%\be \label{commonev}
%Z = \{|0\rangle\langle 0|,|1\rangle\langle 1|,|2\rangle\langle 2|\},  X = \{|+\rangle\langle +|,|-\rangle\langle -|,|2\rangle\langle 2|\}. 
%\ee
%It can be readily seen that the observables possess non-zero \precom, although uncertainty is zero due to the presence of a common eigen vector $|2\rangle$. 
%We can define a stricter notion of \precom, namely full \precom, parallel to the strong uncertainty as follows.
We now outline the postulates for other two measures of \precom.
Let $\prec^f$ be the measure of Full \precom, and it it non-zero only if for all possible nontrivial marginals of $\q_X(P),\q_Y(P)$ the \precom\ is non-zero. Formally, we require $\prec^f(\setth) = 0$, if there exists \textit{extremal} stochastic maps $\Lambda^e_1, \Lambda^e_2$ such that $\Lambda^e_1X \rightarrow \Lambda^e_2Y$ or $\Lambda^e_2Y \rightarrow \Lambda^e_1X$ (equivalently, $\prec((\Lambda^e_1,\Lambda^e_2)\setth) = 0$), where $\Lambda^e_1,\Lambda^e_2$ corresponds to the all possible permutations and coarse-graining except the trivial one. 
Apart from this, $\prec^f$ is required to fulfill postulates \eqref{precom2}- \eqref{precom3} of \precom.

%\subsection{Single-outcome \Precom\ }

%\michall{[I am in favour of a different symbol, like $\prec_1$ for example]}
Let's denote the measure of single-outcome \precom\ by $\prec^1(\setth)$. We require $\prec^1(\setth)=0$ if  there exists extremal stochastic maps $\Lambda^e_1, \Lambda^e_2$ that belongs to a class of coarse-grainings resulting binary outcome observable, in which exactly $(n-1)$ outcomes are coarse-grained to one outcome, such that $\prec((\Lambda^e_1,\Lambda^e_2)\setth) = 0$. In addition, it should also satisfy the other postulates \eqref{precom2}-\eqref{precom3} of \precom.

\subsection{Complementarity}
\label{sec:post-compl2}
The postulates for \com\ are as follows,
\begin{enumerate}\label{com1}
\item $\C_{X,Y} \geq 0$, and $\C_{X,Y}=0$ if there exists another observable $Z$ in the theory such that $Z\rightarrow X$ and $Z\rightarrow Y$. 
\item An measure of \com\ cannot increase if instead of $X$ and $Y$ we have only access to statistics of post-processed observables. Mathematically, this corresponds to 
\begin{equation} \label{com2}
\C_{X,Y} \geq \C_{\Lambda_1X, \Lambda_2Y} 
\end{equation}
where $\Lambda_{1,2}$ are arbitrary stochastic $n\times n$ matrices.
As a consequence, any \com\ measure is invariant under stochastic maps $\Lambda_{1,2}$ whose inverses are also stochastic maps. For example, $
\C_{\pi_1X,\pi_2Y} = \C_{X,Y} $
for all permutations $\pi_{1,2}$ of $n$-element set. 
\item \mg Complementarity cannot increases under mixture of observables, i.e.,
\beq \label{com3}
\C_{X,Y} \leq  \alpha  \C_{X_1,Y}+ (1-\alpha) \C_{X_2,Y}
\eeq
where the observable $X$ is realized by the convex mixture of two observables $X_1,X_2$ with probability distribution $(\alpha,1-\alpha)$. \blk
\end{enumerate}

\begin{rem*}
Qualitatively, postulate 3 can be justified by postulate 1. Specifically if observables $X_{1,2}$ are not complementary with $Y$ (i.e. $\C_{X_1,Y}=\C_{X_2,Y}=0$), then observable $X$, realized by their a convex mixture (with weights $\alpha$ and $1-\alpha$ respectively), is also not complementary with $Y$. Indeed as a mother obsevable of $X$ and $Y$ one can take a mixture (with the same weights as above) of mother observables $O_1$ and $O_2$ of pairs $(X_1, Y)$ and $(X_2,Y)$. This works because without loss of generality the stochastic  maps $O_1\to (X_1, Y)$, $O_1\to (X_2, Y)$ can be take as simply taking marginals.

\end{rem*}

%{\red One may also postulate that \com\ is a "monotonic" function of $\setgen$  i.e.
%i.e. for $S'\subset S$ we have 
%\begin{equation}
%\label{com4}
%\C(S)\geq \C(S') \ .
%\end{equation} }

\mg
To see the connection between  \precom\ and \com\, recall  that the former can be used to define the latter. Concretely,  using the prescription  from Eq.\eqref{eq:compFROMind}  we obtain that any measure of \precom\ defines  we need the following notions.
Now given a \precom\ measure one can obtain \com\ measure for two extremal observables as follows,
\be\label{com-ind-ext}
\C_{X,Y} = \min_{X':X'\rightarrow X} \min_{Y':Y'\rightarrow Y} \prec(S_{X',Y'})\ ,
\ee
where the infimum is taken over all pairs of observables $X',Y'$ that simulate a pair $X,Y$. For general observables, we follow the convex-roof extension of the above definition \eqref{com-ind-ext}. Formally, 
\be \label{com-non-ext}
\C_{X,Y} = \min_{\{\alpha_i,X_i\}} \min_{\{\beta_j,Y_j\}} \sum_{i,j} \alpha_i \beta_j\ \C_{X_i,Y_j}\ ,
\ee
where the minimum is taken over all possible decomposition of the observables $X,Y$  to the extremal observables, i.e.,
\be
\forall P\in \P, \ \q_X(P) = \sum_i \alpha_i \q_{X_i}(P), \ \q_Y(P) = \sum_j \beta_j \q_{Y_j}(P).
\ee
We can express \com\ from a measure of \precom\ in the explicit form as follows, 
\be  \label{com-ind-gen}
\C_{X,Y} = \min_{\{\alpha_i,X_i\}} \min_{\{\beta_j,Y_j\}} \sum_{i,j} \alpha_i \beta_j \min_{X_i':X_i'\rightarrow X_i} \min_{Y_j':Y_j'\rightarrow Y_j}   \prec(S_{X_i',Y_j'})\ .
\ee
 Thus, for clean and extremal observables, $\C=\prec$.
Note, that while \precom\ was not required to be monotonic under stochastic maps, due to the above definition of \com\ 
it will be natural require such monotonicity.
 
\blk

{\it Remark.} As we have said in Sec. \ref{sec:basic-indep},
the simplest theory, for which \com\ is not equal to \precom\ is the already mentioned two bits with three observables: 
$X$ for the  first bit, $Y$ for the second bit, and third observable $Z$ with four outcomes, that measures value of both bits. 
The two observables $X$ and $Y$  are clearly independent for any possible measure, while both they come from $C$ by post-processing,
so that they are not clean, and \com\ vanishes.

%Here is a less trivial example, where still \com\ is not equal to \precom.
%%Now let us present a simple toy theory with two binary sharp observables $X,Y$, while \com\ is not equal to \precom. 
%The theory possesses   four extremal states  that for two dichotomic observables $X,Y$  have the following the probability distributions,
%\beq
%&&\q(P) = (c_1=p(1|X),c_2=p(2|X),c_3=p(1|Y),c_4=p(2|Y) )\ ,  \nonumber \\
%&&\q(P_1) = (1,0,\frac 12,\frac 12)\ , \ \q(P_2) = (\frac 12,\frac 12,1,0)\ , \nonumber \\
%&&\q(P_3) = (0,1,\frac 12,\frac 12)\ , \  \q(P_4) = (\frac 12,\frac 12,0,1)\ .
%\eeq
%There exists another four-outcome observable $Z$ such that for any $P$, $\q_Z(P) =(c_1c_3,c_1c_4,c_2c_3,c_2c_4) $. Note that, coarse-graining of first-second and third-fourth outcome of $Z$ reproduces the statistics of $X$, and coarse-graining of first-third and second-fourth outcome of $Z$ reproduces the statistics of $Y$. Thus, $X,Y$ are not complementary as $Z\rightarrow X,Y$, however $X,Y$ are not clean observables since they cannot simulate $Z$. }

Similarly, one can set the postulates of the measures of Full \com\ and single-outcome \com. We denote the measures by $\C^f_{X,Y}$, and $\C^1_{X,Y}$ respectively.  $\C^f(\setth) = 0$, if there exists \textit{extremal} stochastic maps $\Lambda^e_1, \Lambda^e_2$ such that $\C((\Lambda^e_1,\Lambda^e_2)\setth) = 0$), where $\Lambda^e_1,\Lambda^e_2$ corresponds to the all possible permutations and coarse-graining except the trivial one. While
 $\C^1_{X,Y}=0$ if there exists extremal stochastic maps $\Lambda^e_1, \Lambda^e_2$ that belongs to a class of coarse-grainings resulting binary outcome observable, in which exactly $(n-1)$ outcomes are coarse-grained to one outcome, such that $\C((\Lambda^e_1,\Lambda^e_2)\setth) = 0$.
Additionally, both the measures should satisfy the postulates of non-increasing under post-processing \eqref{com2}.

Being the notion of \com\ is associated with the notion of joint measurability, the foremost measure of it that comes to our mind is the robustness parameter with respect to the white noise. This measure has been generalized in the context of general operational theory in \cite{Busch13}. Given two extremal observables $X,Y$, we define another two observables $X^{\lambda},Y^{\lambda}$ such that
\beq
\q_{X^{\lambda}} (P) = (1-\lambda) \q_X(P) + \frac{\lambda}{d} (1,...,1),\nonumber \\
 \q_{Y^{\lambda}} (P) = (1-\lambda) \q_Y(P) + \frac{\lambda}{d} (1,...,1)
\eeq
taking $\lambda\in [0,1]$ be the parameter of white noise.
The measure of \com\ is defined to be the minimum value of $\lambda$ for which there exists another observable $Z$ in the theory such that $Z\rightarrow X^{\lambda},Y^{\lambda}$, i.e., $\C_{X^{\lambda},Y^{\lambda}}=0$. \mg For non-extremal observables we consider the convex-roof extension \eqref{com-non-ext} \blk. It can be readily verified that this measure satisfies the other postulates of \com.
  The first postulate follows from its definition. Further, suppose the complementarity of two observables $X,Y$ is $\lambda_C$, i.e., $\C_{X^{\lambda_C},Y^{\lambda_C}}=0$, then we know that $\C_{\Lambda_1X^{\lambda_C},\Lambda_2Y^{\lambda_C}}$ is also 0. Thus, $\C_{\Lambda_1X,\Lambda_2Y}$ cannot be larger than $\lambda_C$.

%\mg
%Finally, we might require from complementarity to satisfy the following possible postulate: 
%for $X,Y$, being convex combination $C(X,Y)$
%\mh{WRITE more} because
%\blk

\section{Measures of uncertainty and \precom}
\label{sec:measures-uncert}
In this section, we propose some measures of uncertainty and \precom.

%\michal{Check postulates.
%If $S_1 \subset S_2$ then we take all possible mappings $\Lambda_1\ot\Lambda_2$, for which 
%there exist extensions  $S_{pre}^1$. Fix a given mapping, and consider preimage of $S_2$. 
%Clearly, the preimage of $S_1$ is contained in that of $S_2$. 
%for each of them, 

%%%%%%%%%%%%%%%%%%%%%%%%%%%%%%%
%%%%%%%%%%%%%%%%%%%%%%%%%%%%%%%%

\subsection{\Com\ and Uncertainty measures based on random access code}
\label{sec:random-access}
We propose a measure of \precom\ based on a communication tasks known as random access code \cite{Ambainis2008}. This task involves two devices, preparation and measurement, possessed by Alice and Bob respectively. In each round of the task, Alice receives a two dit input $a=(a_1a_2) \in \{1,...,d\}^2$, prepares a $d$-dimensional system, say $P_a$, and sends to Bob. Bob receives the communicated system from Alice and measures an observable depending on his obtained input $b\in \{1,2\}$. He wants to guess $a_b$. Let us denote the probability of giving the correct answer for input $a,b$ is $p(a_b|a,b)$. 
%A figure of merit ($p_s$) of such communication task can be any function of these probabilities, $p_s = \mathcal{F}\{p(a_b|a,b)\}$. For instance, it could be the average success probability of guessing $a_b$,
A figure of merit of such communication task can be any reasonable function of these probabilities, $\mathcal{F}\{p(a_b|a,b)\}$. For instance, it could be the average success probability of guessing $a_b$,
\be \label{avgrac}
p_s = \frac{1}{2d^2}\sum_{a,b} p(a_b|a,b)
\ee 
where the inputs are uniformly distributed.  

In most common version of the above task, Bob is free to choose the optimal observables that would maximize the probability of success. 
Here, to connect the task with complementarity,  we will fix the Bob's observables to be one of two observables $X$ and $Y$.
For convenience let us denote $X=X_1$ and $Y=X_2$. Now, 
for input $a,b,$ Bob obtains a statistics $\q_{X_b}(P_a)$ where $X_b$ denotes the $d$-outcome observable measured on $P_a$. He can apply some post processing after the measurement, and thus the obtained probability for correct answer is,
\be
p(a_b|a,b) = \tilde{\q}_{X_b}(a_b|P_a), \text{ where } \tilde{\q}_{X_b}(P_a) = \Lambda_b \q_{X_b}(P_a). 
\ee
Now given any theory and the two observables $X_1,X_2$, the relevant quantity $p_s=\mathcal{F}\{p(a_b|a,b)\}$ is maximized over all possible
$P_a,\Lambda_b$. Let the measure of \precom\ of these two observables be as follows,
\be \label{racm}
\prec(S_{X_1,X_2}) = \frac{p_s(X_1,X_2) - \max\big( p_s(X_1), p_s(X_2)\big)}{1-\max\big( p_s(X_1), p_s(X_2)\big)}, 
\ee
where $p_s(X_1,X_2)$ denotes the optimal value of the figure of merit when Bob has access to two observables $X_1,X_2$ and $p_s(X_1)$ denotes the same when Bob has access to only $X_1$. Note that, $\prec(S_{X_1,X_2})$ is normalized, i.e., it takes value within the range $[0,1]$.

One can readily check that the measure \eqref{racm} satisfies the postulates of \precom.  Since Bob is allowed to apply arbitrary stochastic may $\Lambda_b$, $p_s(X_1,X_2)$ is eventually equal to $p_s(X_1)$ (or $p_s(X_2))$ if $X_1\rightarrow X_2$ (or $X_2\rightarrow X_1)$. Due to 
the same reason, it is invariant under permutation. Further, as $\mathcal{F}\{p(a_b|a,b)\}$ is maximized over all possible preparations $P_a$, 
 it is monotonic under inclusion. 
%the union or intersection of two sets can be interpreted as the allowed set of preparations $\mathcal{P}$ is bigger or smaller, respectively. Hence, it also satisfies the monotonicity.

This measure relates \precom\  to efficacy of an operational task. However, it is not a measure of full \precom. In future, one may look for similar operational task that quantifies full \precom. 

One can define a measure of uncertainty based on the same communication task. In this situation, Bob is allowed to apply only doubly stochastic map on the observed statistics after measurement. The measure of uncertainty for a preparation $P$ is considered to be converse of the maximum success probability of guessing $a_b$ over all possible inputs $a$, 
\beq  \label{u:rac}
&&\U(\q_{X_1}(P),\q_{X_2}(P)) =1-\max_{a} \frac{1}{2}\sum_b p(a_b|a,b) \nonumber \\
&& =1-\max_{a} \frac{1}{2} (q_{X_1}(a_1|P) + q_{X_2}(a_2|P)).\eeq
Subsequently, following \eqref{eq:uncertaintyOFtheory}, the uncertainty of the statistics set
\be \label{urac}
\U(S_{X_1,X_2}) = 1- \max\limits_{P_a\in \mathcal{P}} \frac{1}{2} \sum_b q_{X_b}(a_b|P_a).\ee 
By the definition the above measure \eqref{u:rac} is zero if and only if the distribution of $\q(P)$ is deterministic and cannot decrease under doubly stochastic map. The measure of uncertainty can be rewritten as, $\min\limits_{P_a\in \mathcal{P}}(1-  \frac{1}{2} \sum_b q_{X_b}(a_b|P_a)).$ Since $1-\frac{1}{2} \sum_b q_{X_b}(a_b|P_a)$ is linear with respect to a convex mixtures of two preparations and the minimum function of two linear functions is concave, it satisfies \eqref{concavityU}. It can also be readily checked that  $\frac{1}{2} \sum_b q_{X_b}(a_b|P_a)$ cannot decreases under coarse-graining of observables, and  therefore it satisfies monotonicity under coarse-graining \eqref{eq:coarse}. \mg To see that the measure also satisfies convexity  \eqref{concavityUmeasurement}, we express the uncertanity measure \eqref{u:rac} between $X_2$ and a convex mixture of two observables $X_1,X'_1$ with probability distribution $(\alpha,1-\alpha)$ in the following way,
\beq 
&& 1 - \max_{a} \frac{1}{2} (\alpha q_{X_1}(a_1|P) + (1-\alpha) q_{X'_1}(a_1|P)  + q_{X_2}(a_2|P)) \nonumber \\
& \geq&  \alpha (1 - \max_{a} \frac{1}{2} (q_{X_1}(a_1|P) + q_{X_2}(a_2|P)) \nonumber \\
&& + (1-\alpha) (1 - \max_{a} \frac{1}{2} (q_{X'_1}(a_1|P) + q_{X_2}(a_2|P)) \nonumber \\
& =& \alpha \U(\q_{X_1}(P),\q_{X_2}(P)) + (1-\alpha)\U(\q_{X'_1}(P),\q_{X_2}(P)).
\eeq
\blk

Let us remark, that a variant of the obtained measure of uncertainty was considered e.g. in \cite{OW}. Here we have pointed out
its operational origin (by connecting it to random access code), as well as shown that it satisfies the postulates. 
Similarly, we define the measure of information exclusion as the converse of the average success probability of guessing $a_b$ restricted to those inputs when $a_1=a_2$,
\beq \label{erac}
&\E(S_{X_1,X_2}) &= 1 - \frac{1}{2d} \sum_{b,a|a_1=a_2} p(a_b|P_a) \nonumber \\
&&= 1 - \frac{1}{2d} \max_{P_a\in \mathcal{P}} \sum_{b,a|a_1=a_2} \tilde{q}_{X_b}(a_b|P_a).
\eeq 
taking into account $\tilde{\q}_{X_b}(P_a) = \pi \q_{X_b}(P_a)$. \\

{\it Example: quantum theory.} 
To provide a complete example in quantum theory, we take the figure of merit as the average success probability \eqref{avgrac}.
 It has been shown that the optimal value for classical system \cite{classicalrac}
\be \label{racclassical}
p_s(X_1)=\frac{1}{2}+\frac{1}{2d}.
\ee 
For the two quantum projective measurements correspond to the basis $X_1 = \{|i\rangle\}^{d}_{i=1}$ and $X_2 = \{|\psi\rangle_j\}^{d}_{j=1}$ accessed by Bob, the average success probability \eqref{avgrac},
\be \label{avgq}
p_s(X_1,X_2) = \frac{1}{2} + \frac{1}{2d^2} \sum_{a_1,a_2} |\<a_1|\psi_{a_2}\>| .
\ee 
The proof of this fact is given in the appendix \ref{appendix:qrac}. The left-hand-side of \eqref{avgq} is strictly great than $p_s(X_1)$ \eqref{racclassical} for any two distinct quantum observables since
\be
\sum_{a_1,a_2} |\<a_1|\psi_{a_2}\>| > \sum_{a_1,a_2} |\<a_1|\psi_{a_2}\>|^2 =d.
\ee
The optimal quantum value of $p_s=\frac{1}{2}+\frac{1}{2\sqrt{d}}$ that corresponds to two mutually unbiased basis \cite{rac2}.

Hence, the \precom\  measure \eqref{racm} based on random access code for two $d$-dimensional quantum observables is given by,
\be
\label{eq:ind-rac}
\prec(S_{X_1,X_2})  = \frac{1}{d-1} \left(\frac{1}{d}\sum_{a_1,a_2} |\<a_1|\psi_{a_2}\>| -1 \right).
\ee
Further, invoking \eqref{eq:qrac1} one obtains the uncertainty measure \eqref{urac},
\be
\U(S_{X_1,X_2}) = \frac{1}{2} (1- \max_{a_1,a_2} |\<a_1|\psi_{a_2}\>|),
\ee
and the information exclusion measure \eqref{erac},
\be
\label{eq:E-rac}
\E(S_{X_1,X_2}) = \frac{1}{2}\left(1 - \frac{1}{d} \max_{\pi} \sum_{i=\pi(j)} |\<a_i|\psi_{a_{\pi(j)}}\>| \right)
\ee
where $\pi$ is $d$-element permutation.

\subsection{Re-scaling and volume of the probability space} 
\label{sec:rescaling}
We shall now define measure of independence, by means of rescalings of the statistics set $\sxy$.
\begin{definition} 
\Precom\ is given by maximal $r \in [0,1]$ such that  %$k \setgen +\overline{q} \in \setth $.
%\mg $k (\setgen-q_0) +\overline{q} \subset \setth $. \blk
\mg $r \setgen + x \subset \setth $. \blk
I.e. $r$ is maximum rescaling factor of the full set $\setgen$  %with respect to arbitrary origin $\overline{q} \in S$ 
such that the rescaled set \mg $r \setgen + x$ is contained in
 $\setgen$ after shifting along some vector $x$.\blk We denote it by $\prec_r$.
\end{definition}
%\mh{What I propose now, does not require existing maximally mixed state. Indeed, we want to {\it rescale} the set, and then by some shift put it inside of $\sxy$. The same we should do about uncertainty. }

It is clear from the definition that $\prec_r$ is invariant under permutation \eqref{precom2} and monotonic under inclusion \eqref{precom3}. We know that the dimension of $S$ is $2(d-1)$. If $\Lambda^e_1X\rightarrow \Lambda^e_2Y$ for some extremal stochastic maps $\Lambda^e_{1,2}$, then the number of independent variables to specify $\q(P)$ is less than $2(d-1)$. It follows that the dimension of the statistics set  $\sxy$ is strictly less than $2(d-1)$, 
thereby $\prec_r=0$. Thus, $\prec_r$ is a good measure of full \precom. 
For instance, $S$ being a square (i.e., two binary observables) the full set s-bit and classical c-bit have complementarities $1, 0$ respectively. In the case of quantum, consider qubit observables $Z=\sigma_z$, $X={\bf n}\cdot \sigma$ with $n_y=0$, and $n_x^2 + n_z^2=1$. 
The boundary of the statistics set of possible pairs of averages $(\<\psi|Z|\psi\>,\<\psi|X|\psi\>)$ is given by,
%green flag in the notes
\be \label{quantumbinary}
\frac{(x+z)^2 }{2 a^2} + \frac{(x-z)^2 }{2 b^2} = 1,  \ \text{with} \ a= \frac{n_x}{\sqrt{1-n_z}},   b= \frac{n_x}{\sqrt{1+n_z}}.
\ee  
It is shown in Fig. \ref{ellipse}. The parameters $a$ and $b$ are the major and minor semiaxes of the ellipse, respectively. Thus, the diagonal of the largest square inside the body is $2b$. Subsequently, a simple calculation leads to 
\be  \label{kCbinary}
\prec_r=\frac{\sqrt{2}b}{2}=\frac{n_x}{\sqrt{2(1+n_z)}}.
\ee Note that with this definition, q-bit does not have maximal possible complementarity as s-bit.
\begin{figure}[http]
\begin{center}
\includegraphics[scale=0.41]{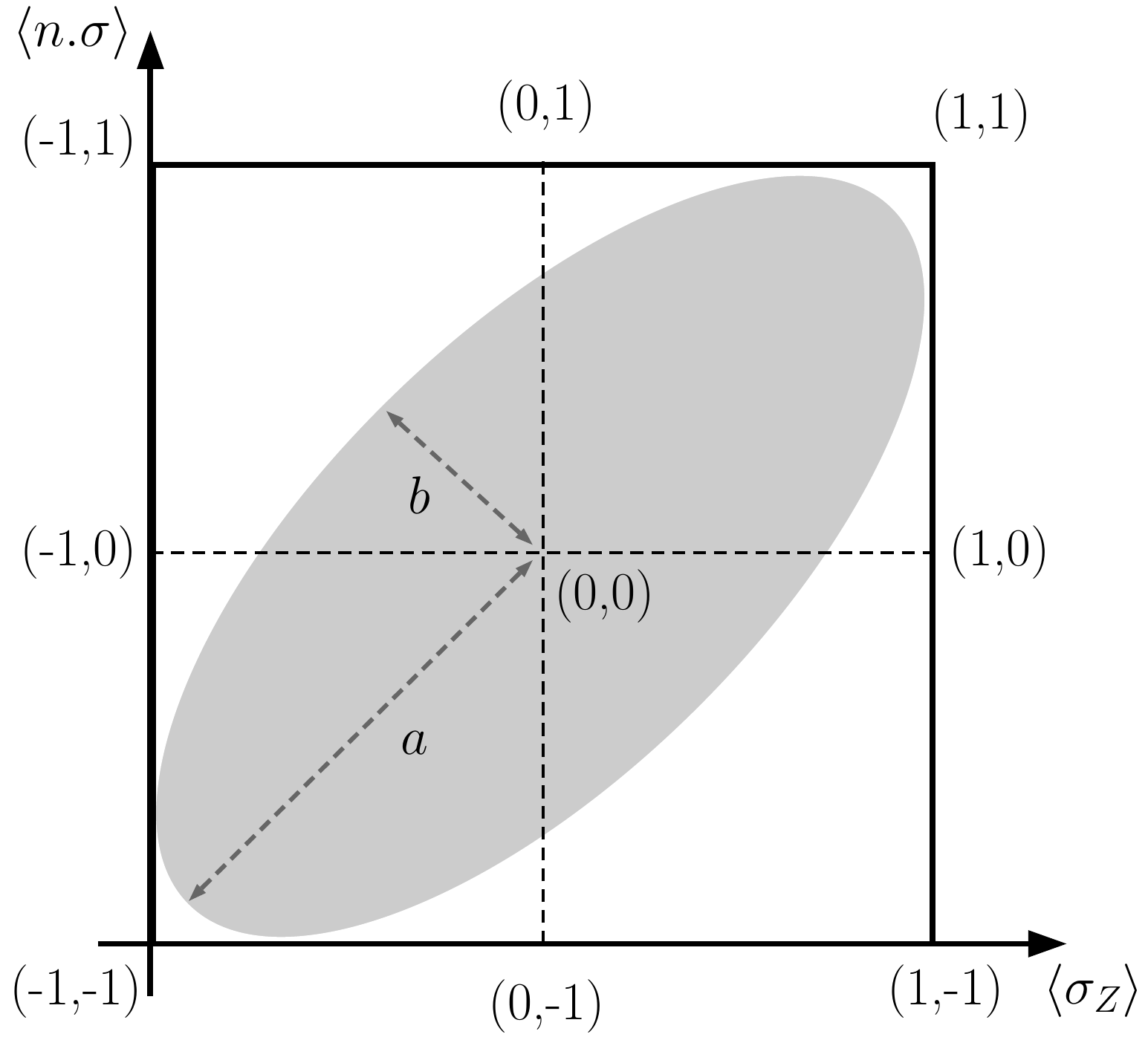}
\end{center}
\caption{ The statistics set of two quantum observables $\sigma_z$ and $n_x \sigma_x+n_z\sigma_z$ is presented in gray. The semi-major and semi-minor axes are denoted by $a,b$ respectively.}
\label{ellipse}
\end{figure}

Following the same arguments, one can see that the volume of $\setth$ is also a measure of full \precom. For s-bit, q-bit observables $(Z,X)$, and c-bit the volume of $\setth$ are 4, $\pi ab = \pi n_x$, 0 respectively.
%In appendix \ref{app:scalingcom}, we have provided a lower bound of $k_{\prec}$ for two fourier basis of arbitrary dimension.

\begin{figure}[http]
\begin{center}
\includegraphics[scale=0.7]{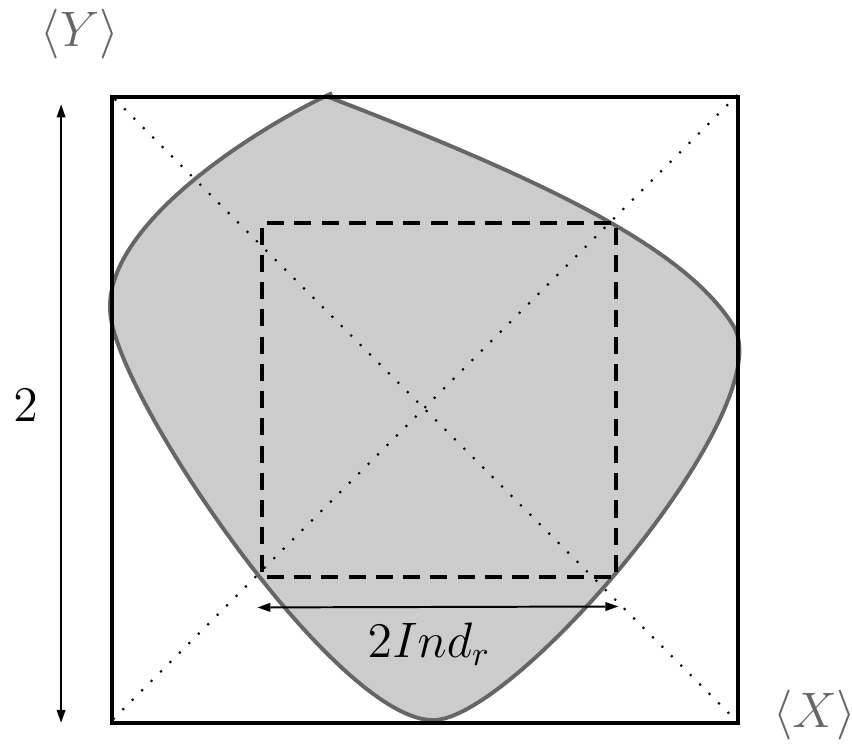}
\end{center}
\caption{The re-scaling measure of \precom\ ($\prec_r$) of the statistics 
set for binary outcome observables $X,Y$. The statistics set is presented in gray.}
\label{rescaling}
\end{figure}

Here, it can be noted that the uncertainty ($\U$) \eqref{urac} proposed in \ref{sec:random-access} is zero for s-bit and c-bit. While for the quantum observables in Fig.\ref{ellipse}, 
\beq \label{kUbinary}
&\U &= 1- \max\limits_{P_a\in \mathcal{P}} \frac{1}{2} \sum_b q_{X_b}(a_b|P_a) \nonumber \\
&& = 1-\frac{a}{\sqrt{2}}= 1-\frac{n_x}{\sqrt{2(1-n_z)}}.\eeq

%\subsection{Joint measurability }

%%%%%%%%%%%%%%%%%%%%%%%%%%%%%%%%%
\subsection{Complementarity measures based on preimage}
\label{sec:preimage}
In this section, we shall propose just a scheme of building various measures  of \precom\ from 
a class of functions defined on joint distributions. Namely, we will require from 
such a function  
%We first need a measure of independence ($Ind$) of joint distributions. 
%The independence should be defined in such a way, 
%that it is zero for product of deterministic distributions.
%Actually we need 
that it vanishes on distributions of the form $p(i,j)=p(i,i)\delta_{ij}$. 
We shall slightly abuse notation, by naming such functions also "independence" (now not independence of a pair of observables, 
but independence of joint distribution). 
An example of independence measure is the so called {\it variation of information}:
\be
VI(p_{XY})=H(X|Y)+H(Y|X)
\ee where $H(\cdot)$ is the entropy. 
%Now, let $\setth$  be set of states of a theory (i.e. tuples of distributions). 
To define \precom\ on pairs of observables from that defined on joint distirbution we proceed as follows.
Fix some set $S_{pre}$ to be a convex set of joint distributions, whose marginals give rise to  $\setth$. 
%Now, let $S(A_1,A_2)\in \Delta_d \otimes \Delta_d$ be the set of pairs of probability distributions that in a given theory can occur for two $d$-output observables $A_1$ and $A_2$.  Here $\Delta_d$ is simplex.
Let us fix two channels $\Lambda_1$ and $\Lambda_2$ acting on the outputs of observables $X$ and $Y$ respectively. 
We now consider a set $S_{pre}(X,Y,\Lambda_1,\Lambda_2)$ (in short $S_{pre}$) 
of joint distributions which after applying local processing $\Lambda_1\otimes \Lambda_2$,
where $\Lambda_i$ are channels, gives rise to $\setth$ via marginals. In another words, each element of $\setth$ 
is a pair of marginals of some distribution from $S_{pre}$ subjected to $\Lambda_1\otimes \Lambda_2$, and vice versa, 
if we apply $\Lambda_1\otimes \Lambda_2$ channel  to each joint distribution from $S_{pre}$, the pair of marginals of the obtained 
distribution belongs to $\setth$.

The \precom\ measure is now defined as
\be
\prec(X,Y)= \min_{S_{pre}}\max_{p \in S_{pre}} Ind(p)
\ee
where the minimum is taken over all convex sets $S_{pre}$ of distributions, such that there exist channels $\Lambda_1$ and $\Lambda_2$ 
for which $S_{pre}$ that give rise to $\setth$, as described above. 

Let us see that the measure satisfies the postulates for \precom. Suppose that one observable is a processed version of the other,
i.e. can be obtained from the other via some channel $\Lambda$.
Then we can take the preimage to be the set of perfectly correlated distributions, with the choice $\Lambda_1\ot\Lambda_2=I\ot\Lambda$.
Hence all the distributions from preimage have vanishing independence, so that the measure vanishes. 
By definition, if we enlarge the set $\setth$, the measure can only increase, as the preimage cannot decrease. 
Thus we obtain that the second postulate is satisfied too.

We illustrate the concept of the above measure by means of two examples:
the classical bit Fig. \ref{fig:compl}a) and "diamond" bit, in Fig. \ref{fig:compl}b),
where we take the variation of information as independence measure of joint distributions. 

For the classical bit (two identical observables) the set $\setth$ can be  obtained as an image 
of an edge of the tetrahedron, which allows only for perfectly correlated distributions, hence the measure vanishes.

Let us argue, that the set depicted in Fig.  \ref{fig:compl}b) is the only possible preimage.
Note, first that corners of the diamond are the following pairs of distributions (we use quantum notation just for brevity
\be
(I/2, |0\>\<0|), (I/2, |1\>\<1|), (|0\>\<0|,I/2),  (|1\>\<1|,I/2).
\ee

Since always one of the distribution in the pair is pure, the only joint distributions that return these pairs via marginals are product. 
Let us argue, that for any fixed pair of channels $\Lambda_1\ot \Lambda_2$, the distributions that can give rise 
through these channels to product distributions must be product too.
To this end, note that if we start with correlated distribution, and act with product channel, 
the output distribution is product if and only if, at least one of the channels is "information killing", i.e. 
it produces a single state for all input states. 
Clearly none of our channels can be like that, because sometimes we need to produce $I/2$ and sometimes $|0\>\<0|$ or $|1\>\<1|$. 
Thus, the initial joint distributions must be product. 

The channel $\Lambda_1$ has just to send two of distributions to $I/2$, one to $|0\>\<0|$ and one to $|1\>\<1|$, (the same about channel $\Lambda_2$). Suppose that distribution sent to $|0\>\<0|$ is neither  $|0\>\<0|$ nor  $|1\>\<1|$. 
Then one directly checks that that channel send all the states to $|0\>\<0|$, which cannot be so (as we want also to get  $|1\>\<1|$
and $I/2$ for some input states. 
Thus the input must be either $|0\>\<0|$ or  $|1\>\<1|$. Suppose it is $|0\>\<0|$.
Then one finds that the channels is of the form
\be
\left[\bea{cc}
1 & q \\
0 & 1-q \\
\eea \right]
\ee
Now this channel must produce $|1\>\<1|$  out of some state. One finds then, that the channel must be identity. 
If the input is $|1\>\<1|$  we obtain that the channel is flip. 
Similarly $\Lambda_2$ is either identity of a flip. 
Thus the preimage of the four corners of the diamond are the products 
\be
I/2 \otimes |0\>\<0|,\quad I/2\ot |1\>\<1|, \quad |0\>\<0|\otimes I/2,  |1\>\<1|\otimes I/2.
\ee
Hence the preimage, since it is a convex set by definition,  contains $I/2\ot I/2$ as an equal mixture of the above distributions.
We conclude that the measure of \precom\ is equal to $1$.

\begin{figure}[bth!]
\begin{center}
\includegraphics[scale=0.3]{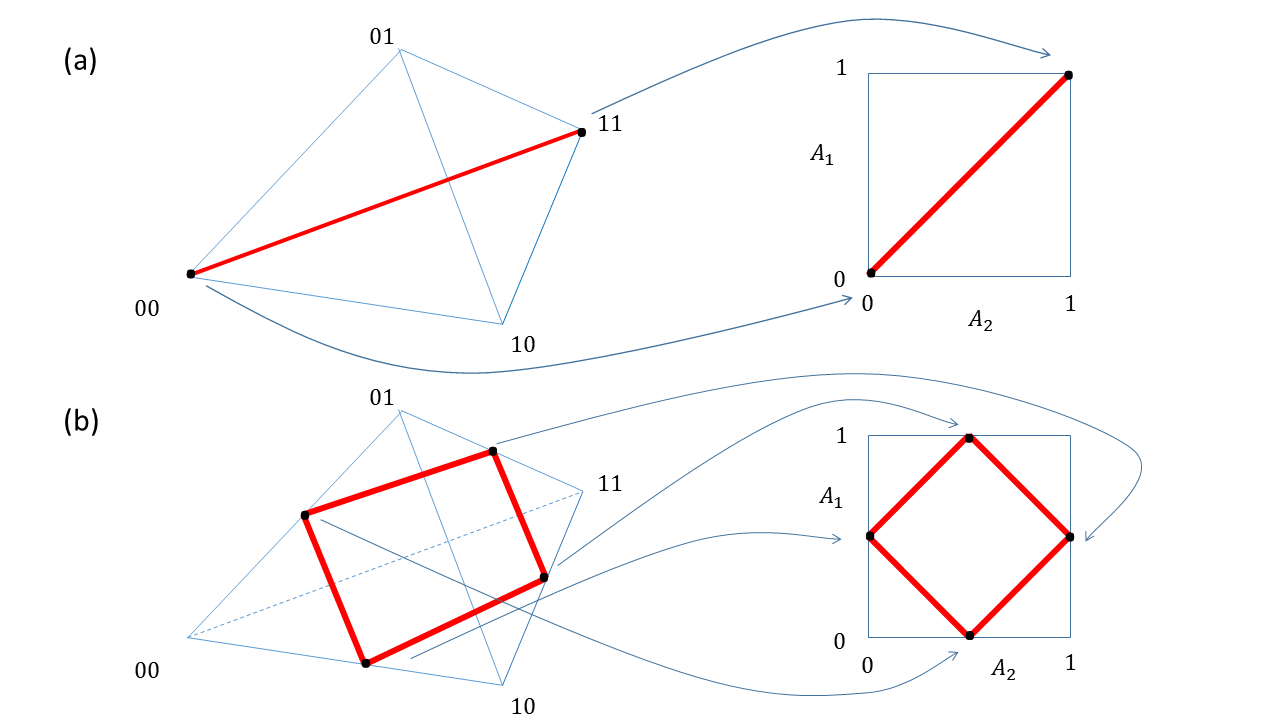}
\end{center}
\caption[]{\label{fig:compl} Examples of a classical bit and "diamond". (a) \Precom\ of two identical observables is zero, since it is obtained as an image of perfectly correlated  probability distributions. (b) \Precom\ of observables for which the statistics constitute diamond  have \precom\ equal to $1$. Preimage is a square that contains the center of the product of the product of simplices, which has 
\precom\ $1$. }
\end{figure}

%\subsection{White noise parameter}
%{\blue   The }

%%%%%%%%%%%%%%%%%%%%%%%%%%%%%%%%%%%%%%%%%%%%%%%%%%%%

\section{Preparation uncertainty relation}
\label{sec:pur}
As proposed in Section \ref{sec:compl-unc-general} from measures of uncertainty and complementarity, one can build uncertainty relations 
of the form 
\be
\label{pur}
\U(\setth)\geq f^{\uparrow}(\C(\setth)).
\ee
where $f^{\uparrow}$ is  non-decreasing functions whose range is non-negative.
%Now, we discuss the notion of preparation uncertainty relation (PUR) in operational theories.
%\begin{definition} 
%Qualitatively PUR is the following implication: non-zero complementarity implies non-zero uncertainty. I.e. if there is complementarity, there must be uncertainty too.
%\end{definition}

We first note that such uncertainty principle is not  satisfied in  all theories. E.g. square bit, whose statistics set is the whole square 
cannot satisfy the above uncertainty relations for any measures of complementarity. Indeed from postulates 
it follows that  if $\sxy$ is the whole square, then there is no uncertainty of any kind, as it contains all corners. Also, complementarity,
by monotonicity under inclusion must be maximal possible.  Therefore, any complementarity measure (apart from trivial one that
is zero for all possible sets) will be nonzero.

%\begin{definition} A theory is \pure{}, if for any sharp observable and all possible outcomes, there exists a point in $\setth$ for which the outcome has probability one. \end{definition}

%More formally, the relations between operational complementarity and uncertainty can be expressed as follows,
%\beq \label{pur}
%&\qquad \ \text{PUR: } & \U(\setth) \geq f^{\uparrow}(\C(\setth)); \nonumber \\
%&\text{Reverse PUR: } & \C(\setth) \geq g^{\uparrow}(\U(\setth)) .
%\eeq
%Here, $f^{\uparrow},g^{\uparrow}$ are non-decreasing functions whose ranges are non-negative. The form of these two functions depends on the particular measures of \com\ and uncertainty those are defined in a parallel way. 

%\subsection{Two binary observables}

\subsection{PUR from random access codes} 

We derive here PUR constructed out of measures of uncertainty and complementarity
in terms of random access codes from Section \ref{sec:random-access}. This PUR is actually Exclusion Principle of the similar 
form as that of  \cite{GrudkaExclusion}. 

\begin{fact}
In quantum mechanics the following PUR holds for arbitrary two observables $X$ and $Y$
with $d$ outcomes, with one dimensional eigenprojectors:
\begin{equation}\label{eq:exclusionURrac-fact}
E(S_{X,Y})\geq \frac{(\C_{X,Y})^2}{4d}\ . 
\end{equation}
where  $E$ is measure of exclusion of \eqref{eq:E-rac} and $Ind(\setth)=\C_{X,Y} $ is measure of independence of  \eqref{eq:ind-rac}. 
\end{fact}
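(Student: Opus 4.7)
The plan is to reduce everything to the nonnegative overlap matrix $m_{ij} \eqdef |\langle a_i | \psi_j \rangle|$ and then use the Euclidean row/column normalisation that comes from the orthonormality of the two eigenbases, namely $\sum_j m_{ij}^2 = \sum_i m_{ij}^2 = 1$. By formulas \eqref{eq:ind-rac} and \eqref{eq:E-rac} the two sides of the desired inequality are determined respectively by the total sum $T \eqdef \sum_{ij} m_{ij}$ and by the optimal diagonal sum $P \eqdef \max_{\pi} \sum_j m_{\pi(j),j}$, via $\C_{X,Y} = (T/d - 1)/(d-1)$ and $\E(\sxy) = (1 - P/d)/2$. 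So the fact reduces to the purely combinatorial inequality
\be
(T - d)^2 \leq 2 d (d-1)^2 (d - P),
\ee
for any matrix of nonnegative entries whose rows and columns have unit Euclidean norm (and in particular $P \leq d$).

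The core of the argument is a two-step application of Cauchy--Schwarz. First I would fix a permutation $\pi^{*}$ achieving the maximum in $P$, set $j^{*}_{i} \eqdef (\pi^{*})^{-1}(i)$ and $\alpha_i \eqdef m_{i,j^{*}_{i}}$, and write $T - P = \sum_i \beta_i$ with $\beta_i \eqdef \sum_{j \neq j^{*}_{i}} m_{ij}$. Row Cauchy--Schwarz against the row-norm constraint gives
\be
\beta_i^2 \;\leq\; (d-1)\sum_{j \neq j^{*}_{i}} m_{ij}^2 \;=\; (d-1)(1 - \alpha_i^2) \;\leq\; 2(d-1)(1 - \alpha_i),
\ee
since $\alpha_i \in [0,1]$. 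A second Cauchy--Schwarz over the index $i$, using $\sum_i \alpha_i = P$, then yields
\be
T - P \;\leq\; \sqrt{2(d-1)} \sum_i \sqrt{1 - \alpha_i} \;\leq\; \sqrt{2 d (d-1)(d - P)}.
\ee

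To finish, I would observe that $T - d = (T - P) - (d - P) \leq T - P$ since $P \leq d$, which squares to $(T-d)^2 \leq 2 d (d-1)(d - P)$. Substituting $T - d = d(d-1)\C_{X,Y}$ and $d - P = 2d\,\E(\sxy)$ produces
\be
\E(\sxy) \;\geq\; \frac{(d-1)}{4}\,\C_{X,Y}^{2} \;\geq\; \frac{\C_{X,Y}^{2}}{4 d},
\ee
for $d \geq 2$, as claimed (in fact slightly stronger than the stated bound). The main obstacle is calibrating the two Cauchy--Schwarz applications so that the linearisation $1 - \alpha_i^2 \leq 2(1 - \alpha_i)$ -- which is what allows passage from the $\ell_{2}$ row-normalisation to a quantity controlled by $d - P$ -- does not waste too much slack; once one spots this particular decomposition around the optimal permutation, everything else is algebra.
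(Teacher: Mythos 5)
Your proof is correct, but it takes a genuinely different route from the paper's. The paper's argument (Appendix on the exclusion relation from random access codes) is a \emph{single-witness-row} argument: from $\sum_{i,j}|U_{ij}| = d + d(d-1)\,Ind$ it extracts by pigeonhole one row $i_0$ with $\sum_j |U_{i_0 j}| \ge 1+(d-1)Ind$, bounds that row's $\ell_1$-norm via Schur concavity by $\sqrt{p_{\mathrm{max}}(i_0)}+\sqrt{d-1}\sqrt{1-p_{\mathrm{max}}(i_0)}$, deduces $p_{\mathrm{max}}(i_0)\le 1-(d-1)Ind^2$, and then controls the permutation sum crudely via $\max_\pi\sum_i |U_{i\pi(i)}|\le (d-1)+\sqrt{p_{\mathrm{max}}(i_0)}$, i.e.\ every row other than the witness is bounded by $1$. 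You instead work globally: decomposing $T-P$ around the optimizing permutation and applying Cauchy--Schwarz twice --- within each row against the normalization $\sum_j m_{ij}^2=1$, then across rows against $\sum_i(1-\alpha_i)=d-P$ --- retains the contribution of all $d$ rows, and that is precisely what buys the extra factor of $d$: you obtain $\E(\sxy)\ge \frac{d-1}{4}(\C_{X,Y})^2$, whereas the paper's intermediate bound is $\E(\sxy)\ge \frac{d-1}{4d}(\C_{X,Y})^2$ before both are weakened to the stated $\frac{1}{4d}$. Your bound is strictly stronger for every $d\ge 2$, which bears directly on the open problem in Section \ref{sec:openPROB} (``make tighter exclusion principle based on random access code''). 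Two cosmetic points: the reduction target you state, $(T-d)^2\le 2d(d-1)^2(d-P)$, is not the exact equivalent of the Fact (that would be $(T-d)^2\le 2d^2(d-1)^2(d-P)$) but a sufficient strengthening --- harmless, since what you actually prove is the still stronger $(T-d)^2\le 2d(d-1)(d-P)$; and the final squaring step silently uses $T\ge d$, which does hold (since $0\le m_{ij}\le 1$ gives $T\ge \sum_{i,j}m_{ij}^2 = d$, so $0\le T-d\le T-P$) but deserves an explicit word.
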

Of course, since  the considered observables are clean \mg and extremal \blk $Ind$ is same as complementarity. 

\subsection{PUR from rescaling} 

Here we consider the re-scaling measures of \com\ ($\prec_r = \C_r$ for clean \mg and extremal \blk observables) 
and uncertainty $\U$ mentioned in Section \ref{sec:rescaling} to provide an example of PUR between binary observables.

\begin{fact}
Two quantum binary observables $Z=\sigma_z$, $X={\bf n}\cdot \sigma$ with $n_y=0$, and $n_x^2 + n_z^2=1$, satisfy the following PUR,
which is even in a form of equality:
\be
\label{eq:urq2}
 \C^2_r + (1-\U)^2 = 1.
\ee
\end{fact}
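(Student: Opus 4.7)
The plan is to simply combine the two explicit formulas already derived in Section \ref{sec:rescaling}. From the computation of the boundary of $\setth$ for the pair $(\sigma_z, \mathbf{n}\cdot\sigma)$ as the ellipse \eqref{quantumbinary} with semiaxes $a=n_x/\sqrt{1-n_z}$ and $b=n_x/\sqrt{1+n_z}$, one already has closed-form expressions
\be
\C_r \;=\; \frac{n_x}{\sqrt{2(1+n_z)}}, \qquad 1-\U \;=\; \frac{n_x}{\sqrt{2(1-n_z)}},
\ee
see \eqref{kCbinary} and \eqref{kUbinary}. The content of the statement is therefore purely algebraic, and the strategy is to square both expressions and sum them.

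First I would write
\be
\C_r^2 + (1-\U)^2 \;=\; \frac{n_x^2}{2(1+n_z)} + \frac{n_x^2}{2(1-n_z)} \;=\; \frac{n_x^2}{2}\cdot \frac{(1-n_z)+(1+n_z)}{1-n_z^2} \;=\; \frac{n_x^2}{1-n_z^2}.
\ee
Then I would invoke the unit-vector constraint $n_x^2 + n_z^2 = 1$, which gives $1-n_z^2 = n_x^2$, so the ratio collapses to $1$, yielding the claimed equality.

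The only substantive input beyond this one-line manipulation is the justification of the formulas \eqref{kCbinary} and \eqref{kUbinary} themselves; these were already established in Section \ref{sec:rescaling} (the rescaling $\C_r$ is determined by the largest axis-aligned square inscribed in the ellipse \eqref{quantumbinary}, whose half-diagonal equals the minor semiaxis $b$, while $\U$ is determined by the maximum of $\tfrac{1}{2}(z+x)$ over the ellipse, which is attained at a vertex of the $45^\circ$-rotated principal axes and equals $a/\sqrt{2}$). Thus there is no real obstacle; the ``hard part,'' if any, is recognizing that the two apparently unrelated quantities $\C_r$ and $1-\U$ correspond precisely to the minor and major semiaxes of the same ellipse (up to the common factor $1/\sqrt{2}$), which is exactly the geometric fact that makes the Pythagorean identity $\C_r^2 + (1-\U)^2 = 1$ fall out of $n_x^2+n_z^2=1$.
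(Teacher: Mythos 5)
Your proof is correct and follows essentially the same route as the paper's: both rest on the closed-form expressions \eqref{kCbinary} and \eqref{kUbinary} from the rescaling section and reduce the claim to algebra with the constraint $n_x^2+n_z^2=1$ (the paper eliminates $n_z$ between the two formulas, whereas you square and add directly, which is the same computation done slightly more cleanly). Your closing geometric remark — that $\C_r$ and $1-\U$ are, up to the factor $1/\sqrt{2}$, the minor and major semiaxes of the ellipse \eqref{quantumbinary} — is a nice way of seeing why the Pythagorean identity holds, and is consistent with the paper's derivation.
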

{\it Proof.} First one can express $\prec_r$ and $\U$ in \eqref{kCbinary}-\eqref{kUbinary} in terms of only $n_z$ 
by substituting $n_x=\sqrt{1-n^2_z}$. Further, by equalizing $n_z$ as a function of $\C_r$ and $\U$, one obtains the above PUR 
with equality.

\subsection{Reverse PUR from rescaling}
\label{subsec:reverse-rescaling}
In sec. \ref{subsec:reverse} we introduced the concept of reverse uncertainty relation. As said there,
unlike the uncertainty relation, which may or may not hold in a given theory, the reverse one is expected to 
hold almost by definition in any theory. 
Here we present such a relation in the case of binary outcomes, for the uncertainty based on rescaling:
%Then the following shows, that for any set $\sxy$ the reverse PUR holds:
\begin{fact}
In any theory, for any two binary sharp, clean and extremal observables, the following reverse PUR holds
\be
2\C_r\geq \U\ .
\ee
\end{fact}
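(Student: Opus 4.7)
The plan is to recast both sides of the inequality as geometric quantities on the convex statistics set $S = S_{X,Y} \subseteq [0,1]^2$ and then combine convexity, sharpness, and maximality of the inscribed square. For two binary observables, the RAC-based uncertainty \eqref{urac} simplifies to
\[
\U(S) = \tfrac{1}{2}\min_{(p,q)\in S}\bigl[\min(p,1-p)+\min(q,1-q)\bigr],
\]
i.e.\ half the minimum $L^1$-distance from $S$ to the four corners of the unit square, while $\C_r(S)$ is the side length $r$ of the largest axis-aligned square $Q\subseteq S$. Sharpness forces $S$ to touch each of the four edges of $[0,1]^2$.

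First I would dispose of the regime $\C_r \geq 1/5$ using the inscribed square alone. Writing $Q=[a,a+r]\times[b,b+r]$ and reflecting coordinates if necessary so that the corner $(a,b)$ of $Q$ is the one nearest to $(0,0)$, we have $a,b\leq(1-r)/2$, and since $(a,b)\in S$,
\[
\U(S)\leq\tfrac{1}{2}(a+b)\leq\tfrac{1}{2}(1-r),
\]
which already yields $\U\leq 2\C_r$ as soon as $r\geq 1/5$.

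The remaining small-$r$ regime uses sharpness. Pick edge-points $(0,y_0),(1,y_1),(x_0',0),(x_1',1)\in S$ and set
\[
\mu = \min\bigl(y_0,1-y_0,y_1,1-y_1,x_0',1-x_0',x_1',1-x_1'\bigr),
\]
so that some edge-point lies at $L^1$-distance $\mu$ from a corner of $[0,1]^2$, giving $\U\leq\mu/2$. The core step is then a geometric sub-claim that the convex hull $T$ of the four edge-points, which lies in $S$ by convexity, already contains an axis-aligned square of side at least $\mu$. Granting the sub-claim one obtains $\C_r\geq\mu$ and hence $\U\leq\mu/2\leq\C_r/2\leq 2\C_r$, finishing the argument in this regime.

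The main obstacle is the sub-claim itself. I would prove it by taking $\mu = y_0$ without loss of generality (by relabelling axes and outcomes), so that the remaining seven quantities are all $\geq\mu$ and $y_0\leq 1/2$, and writing $T$ as the intersection of four half-planes whose coefficients are determined by $y_0,y_1,x_0',x_1'$. Inscribedness of an axis-aligned square $[a,a+s]\times[b,b+s]$ then reduces to a small linear programme in $(a,b,s)$. In the symmetric limiting configuration $y_0=x_0'=\mu$ and $y_1=x_1'=1-\mu$, the hull $T$ is the rhombus $\{\mu\leq x+y\leq 2-\mu,\ |x-y|\leq\mu\}$, which obviously contains the centred square of side $\mu$; for general asymmetric positions a short case analysis on which of the four half-plane constraints is binding shows that the optimum of the LP is always at least $\mu$. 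This LP step is the only part of the proof requiring real calculation, and I expect it to be the main technical hurdle; once it is in hand, the two regimes above combine to give $\U\leq 2\C_r$ in full generality.
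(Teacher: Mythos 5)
Your strategy is essentially the paper's own proof. The paper's Appendix argument likewise uses sharpness to obtain one point of $\sxy$ on each edge of the square, takes $t$ to be the minimal distance from those boundary points to the corners, bounds the uncertainty above by (half) that distance, and inscribes a square of side comparable to $t$ in $\sxy$ to bound $\C_r$ below. Your additions are minor but sensible: the first regime ($\C_r\geq 1/5$, via the inscribed square alone) does not even use sharpness, and your second-regime constant $\U\leq\C_r/2$ is stronger than the claimed $2\C_r\geq\U$. The one step you leave unexecuted --- the sub-claim that the hull $T$ of the four edge points contains an axis-aligned square of side $\mu$ --- is exactly the step the paper itself does not prove either: it asserts it by pointing to the ``dotted lines'' of its Fig.~12. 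So you have correctly isolated the actual crux of the argument, and you are at no disadvantage relative to the published proof.

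Moreover, the sub-claim is true, and the LP case analysis you anticipate as the main hurdle can be bypassed entirely by an explicit placement. With edge points $(0,y_0)$, $(1,y_1)$, $(x_0',0)$, $(x_1',1)$ and all of $y_0,y_1,x_0',x_1'\in[\mu,1-\mu]$, the hull $T$ is the unit square with four corner triangles removed, cut by the lines $x/x_0'+y/y_0=1$, $(1-x)/(1-x_0')+y/y_1=1$, $(1-x)/(1-x_1')+(1-y)/(1-y_1)=1$, and $x/x_1'+(1-y)/(1-y_0)=1$. Take $Q=[a,a+\mu]\times[b,b+\mu]$ with
\begin{equation*}
a=\frac{x_0'+x_1'-\mu}{2}\ ,\qquad b=\frac{y_0+y_1-\mu}{2}\ .
\end{equation*}
Since $x_0',x_1'\geq\mu$ one gets $a\geq x_0'/2$ and $a\geq x_1'/2$, and since $1-x_0',1-x_1'\geq\mu$ one gets $1-a-\mu=\bigl((1-x_0')+(1-x_1')-\mu\bigr)/2\geq(1-x_0')/2$ and likewise $\geq(1-x_1')/2$; the same four bounds hold for $b$ with the $y$'s. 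Hence in each of the four corner constraints, evaluated at the relevant corner of $Q$, both summands are at least $1/2$, so every corner of $Q$ clears its cut line; also $a,b\geq\mu/2$ and $a+\mu,\,b+\mu\leq 1-\mu/2$, so $Q\subseteq[0,1]^2$. This gives $\C_r\geq\mu$ with no case analysis and completes your second regime, so your proposal, thus closed, is a fully rigorous version of the paper's proof.
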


The proof is given in Appendix \ref{app:reverseUR}.

%%%%%%%%%%%%%%%%%%%%%%%%%%%%%%%%%%%%%

\subsection{Uncertainty relation from physical principles} 
\label{sec:icp}

Now, we shall show how  the information theoretic principle namely {\it Information Contents Principle}  \cite{Czekaj-ICP} - 
a single system version of   {\it Information Causality}  \cite{ic} imposes PUR on the physical theories. 
Likewise, one can postulate PUR or obtain PUR from other principles which should be obeyed by any physical theories. 
\begin{figure}[http]
\begin{center}
\includegraphics[scale=0.43]{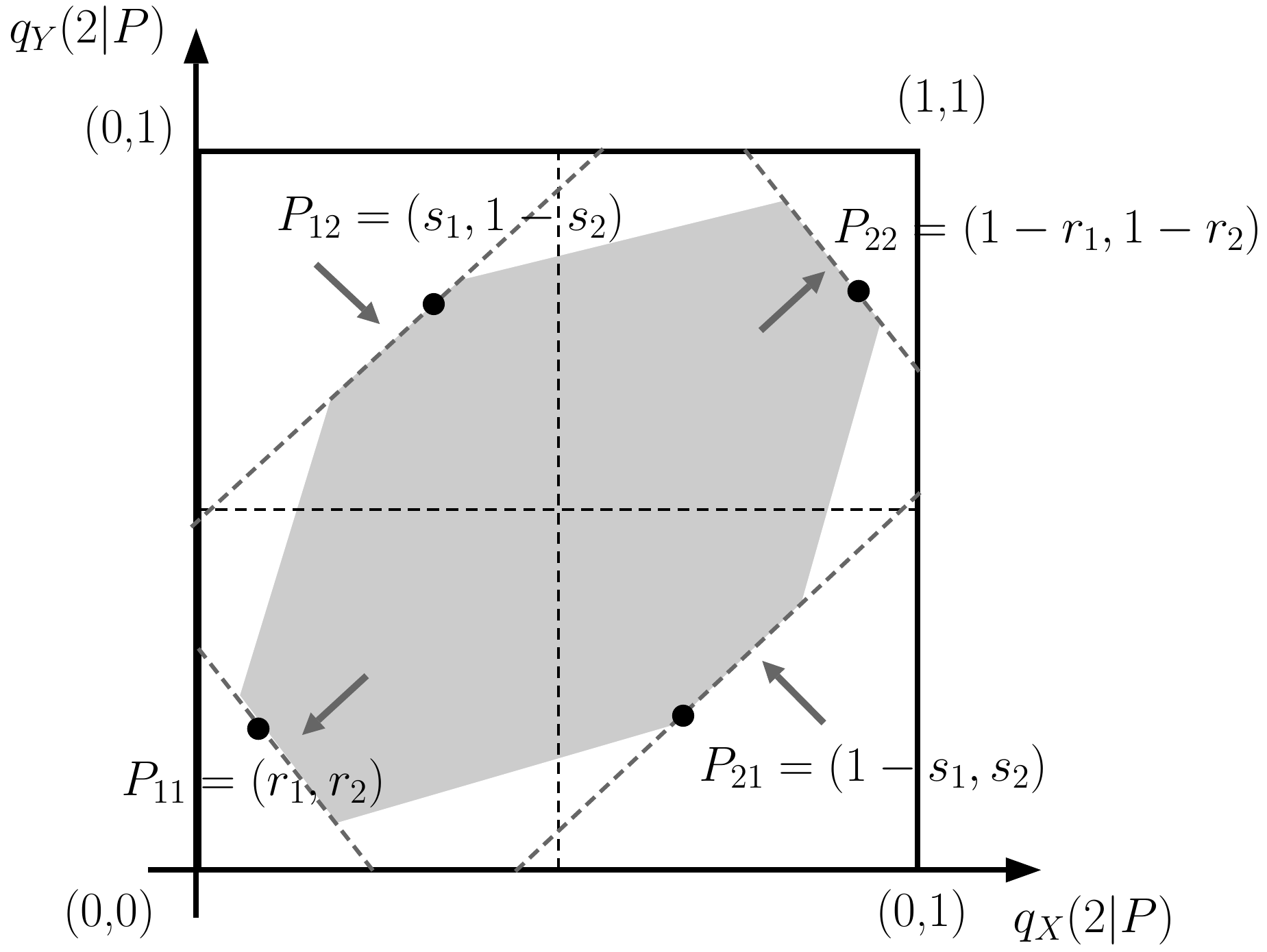}
\end{center}
\caption{\label{fig:icp} The statistics set $\setth$ possesses the symmetry under the reflection of the diagonal of the square. State of the system $P_{a_1a_2}$ is described by the pair of probabilities $(q_X(2|P),q_Y(2|P))$.}
\end{figure}

Let us recall the communication task random access code presented before. We assume the inputs $a,b$, given to Alice and Bob, are uniformly distributed and uncorrelated, i.e., $\forall a,b,\ p(a,b)=p(a)p(b), p(a)=1/4,p(b)=1/2$. We denote the classical output of Bob by $C_b$ for his input $b$.  The information causality provides a bound on the correlations as follows,
\beq \label{icp}
& I(C_1:X) + I(C_2:Y) - I(C_1:C_2) \leq 1 \\ \nonumber
& \implies H(X) - H(C_1X) + H(Y) - H(C_2Y) + H(C_1C_2) \leq 1 .
\eeq

Since we deal with two binary outcome measurements, the statistics set $\setth$ can be conveniently presented by the pair of probabilities $(q_X(2|P),q_Y(2|P))$ as shown in Fig. \ref{fig:icp}.
For the sake of simplicity, we consider a class of theories in which the statistics set $\setth$ possesses symmetry under permutation of outcome, i.e., for all $q_X(P)$ there exists another preparation $P'$ such that $q_Y(P') = q_X(P)$ and vice versa. In other words, $\setth$ is symmetric with respect to the diagonal of the square. 
%The PUR \eqref{pur} implies if uncertainty is zero, i.e., $\setth$ touches two corners, for example $(0,0)$ and $(1,1)$, then $\setth$ should be a straight line resulting \com\ to be zero. Further, if $\U(\setth)$ decreases as $\setth$ approaches to opposite corners, say $(0,0)$ and $(1,1)$, the $\C(\setth)$ also decreases that implies $\setth$ is far from the other two corners $(0,1)$ and $(1,0)$.
 Due to the symmetry of the statistics set in Fig.\ref{fig:icp}, for a preparation with statistics $(q_X(2|P),q_Y(2|P))=(r_1,r_2)$, we know there another preparation with $(q_X(2|P'),q_Y(2|P'))=(1-r_1,1-r_2)$. Accordingly, we obtain the probability distribution for $C_1X$ and $C_2Y$,
\begin{center}
\begin{tabular}{c|c |c}
 & $X=1$ & $X=2$ \\ 
 \hline
$C_1=1$ & $\frac{1}{2}(\frac{r_1}{2}+\frac{s_1}{2})$ & $\frac{1}{2}(1-\frac{r_1}{2}-\frac{s_1}{2})$ \\
$C_1=2$ &  $\frac{1}{2}(1-\frac{r_1}{2}-\frac{s_1}{2})$ & $\frac{1}{2}(\frac{r_1}{2}+\frac{s_1}{2})$\\
\end{tabular}
\begin{tabular}{c|c |c}
 & $Y=1$ & $Y=2$ \\ 
 \hline
$C_2=1$ & $\frac{1}{2}(\frac{r_2}{2}+\frac{s_2}{2})$ & $\frac{1}{2}(1-\frac{r_2}{2}-\frac{s_2}{2})$ \\
$C_2=2$ & $\frac{1}{2}(1-\frac{r_2}{2}-\frac{s_2}{2})$ & $\frac{1}{2}(\frac{r_2}{2}+\frac{s_2}{2})$\\
\end{tabular}
\end{center} 
Thus, 
\be \begin{split}
&H(X)=H(Y)=\frac{1}{2}H(C_1C_2)=1,\\
&H(C_1X) = h\left(\frac{r_1}{2}+\frac{s_1}{2}\right)+1,H(C_2Y) = h\left(\frac{r_2}{2}+\frac{s_2}{2}\right)+1,
\end{split}\ee where $h(p)=-p\log(p) - (1-p)\log(1-p)$. Substituting these expressions in the ICP \eqref{icp} we obtain the following relation,
\be
\label{eq:icp_to_ur}
h\left(\frac{r_1}{2}+\frac{s_1}{2}\right)+ h\left(\frac{r_2}{2}+\frac{s_2}{2}\right) \geq 1.
\ee
Notably, the above relation coincides with the Maassen-Uffink uncertainty relation \cite{MaasenUfff1988} of $\sigma_x,\sigma_z$.
By taking values of the parameters $r_{1,2},s_{1,2}$ in small interval, one can see that the above relation \eqref{eq:icp_to_ur} is satisfied if 
\be \label{r1r2s1s2}
 r_1+r_2+s_1+s_2 \geq 0.44.
\ee 
This relation is valid for any two given preparations $P_{11},P_{12}$. Thanks to the symmetry, there exists a preparation on the diagonal of the square that corresponds to the minimum uncertainty of all possible preparations, i.e., the uncertainty of $\setth$. Again, exploiting the symmetry one knows that the origin of the largest square fit inside $\setth$ is the center of the square. Therefore, for the symmetric statistics set,
\be \label{k:symmetric}
\U = 2\min(r,s), \ \C_r = 1 - 2\max(r,s)
\ee where $r_1=r_2=r, s_1=s_2=s$.
Subsequently, it follows from \eqref{r1r2s1s2} that $\C_r - \U \leq 0.56$ which captures the PUR. Namely, the last formula 
says that for strong enough complementarity uncertainty must appear.

\subsection{Tsirelson bound from uncertainty principle and non-signaling}
\label{sec:nonlocality}
Here, we discuss how Uncertainty principle in a theory sets restriction on nonlocality of that theory.  We concentrate on the simplest scenario of nonlocality where two spatially separated parties, Alice and Bob, perform one of the two binary outcome measurements $A_{1,2},B_{1,2} \in \{+,-\}$ on their respective subsystems of a bipartite system. The witness based on the measurement statistics of nonlocality is taken to be the violation of well-known Clauser-Horne-Shimony-Holt (CHSH) local-realist inequality \cite{CHSH},
\be \label{chsh}
\mathcal{I} = \<A_1B_1\> + \<A_1B_2\> + \<A_2B_1\> - \<A_2B_2\> \leq 2.
\ee
Without loss of generality, we can say that, Bob's measurement statistics of the observables $B_1,B_2$ on his system are $\q_{B_1}(P),\q_{B_2}(P)$ for some $P$ when Alice does not perform any measurement. As a result of sharing correlated systems, depending on Alice's measurement choice and outcome the preparation on Bob's side might be different. In other words, Alice's measurement steers different preparation on Bob's subsystem. Let us denote Bob's preparation as $P_{A_1+}$ if Alice measures $A_1$ and obtains $+$ outcome on her subsystem and so on. This phenomenon is called as `steering' \cite{OW}. However, we do not impose any restriction on steering, except the no-signaling principle which should be satisfied by any physical theory. The 'no-signaling' principle is a direct consequence of relativistic causation, which says that, Alice cannot send any information to Bob instantaneously. That is, the measurement statistics on Bob's subsystem is independent on the Alice's measurement choice and vice-versa. Formally, $\forall i\in \{1,2\},$
\beq \label{ns}
&& \q_{B_{i}}(P) = q_{A_1}(+|\tilde{P}) \q_{B_{i}} (P_{A_1+}) +  q_{A_1}(-|\tilde{P}) \q_{B_{i}} (P_{A_1-}) \nonumber \\
&& = q_{A_2}(+|\tilde{P}) \q_{B_{i}} (P_{A_2+}) +  q_{A_2}(-|\tilde{P}) \q_{B_{i}} (P_{A_2-})
\eeq
where $\tilde{P}$ denotes Alice's initial preparation. For simplicity, we denote,
\beq
&q_{A_1}(+|\tilde{P}) = t_1, \ q_{A_2}(+|\tilde{P}) = t_2, \nonumber  \\
&q_{B_{1}} (+|P_{A_1+}) = 1-r_1, \ q_{B_{2}} (+|P_{A_1+}) = 1-r_2, \nonumber  \\
&q_{B_{1}} (+|P_{A_1-}) = r'_1, \ q_{B_{2}} (+|P_{A_1-}) = r'_2, \nonumber \\
&q_{B_{1}} (+|P_{A_2+}) = 1-s_1, \ q_{B_{1}} (+|P_{A_2+}) = s_2,  \nonumber  \\
&q_{B_{2}} (+|P_{A_2-}) = s'_1, \ q_{B_{2}} (+|P_{A_2-}) = 1-s'_2, 
\eeq
as shown in Fig. \ref{figchsh}. Subsequently, the CHSH term is expressed as follow,
\beq \label{chshsim}
&\mathcal{I}=& q_{A_1}(+|\tilde{P}) (2- 2q_{B_{1}} (-|P_{A_1+}) -2q_{B_{2}} (-|P_{A_1+})) \nonumber \\
&& + q_{A_1}(-|\tilde{P}) (2- 2q_{B_{1}} (-|P_{A_1-}) -2q_{B_{2}} (-|P_{A_1-})) \nonumber  \\
&& + q_{A_2}(+|\tilde{P}) (2- 2q_{B_{1}} (-|P_{A_2+}) -2q_{B_{2}} (+|P_{A_2+})) \nonumber \\
&& + q_{A_2}(-|\tilde{P}) (2- 2q_{B_{1}} (-|P_{A_2-}) -2q_{B_{2}} (+|P_{A_2-})) \nonumber \\
&= & 4 -2\big(t_1(r_1+r_2)+(1-t_1)(r'_1+r'_2) + t_2(s_1+s_2) \nonumber \\ 
&& +(1-t_2)(s'_1+s'_2) \big).
\eeq
While the no-signaling conditions simplify to,
\beq \label{nssim}
& t_1(1-r_1)+(1-t_1)r'_1 = t_2(1-s_1) + (1-t_2) s'_1, \nonumber \\
& t_1(1-r_2)+(1-t_1)r'_2 = t_2 s_2 + (1-t_2) (1-s'_2).
\eeq 

\begin{figure}[http]
\begin{center}
\includegraphics[scale=0.45]{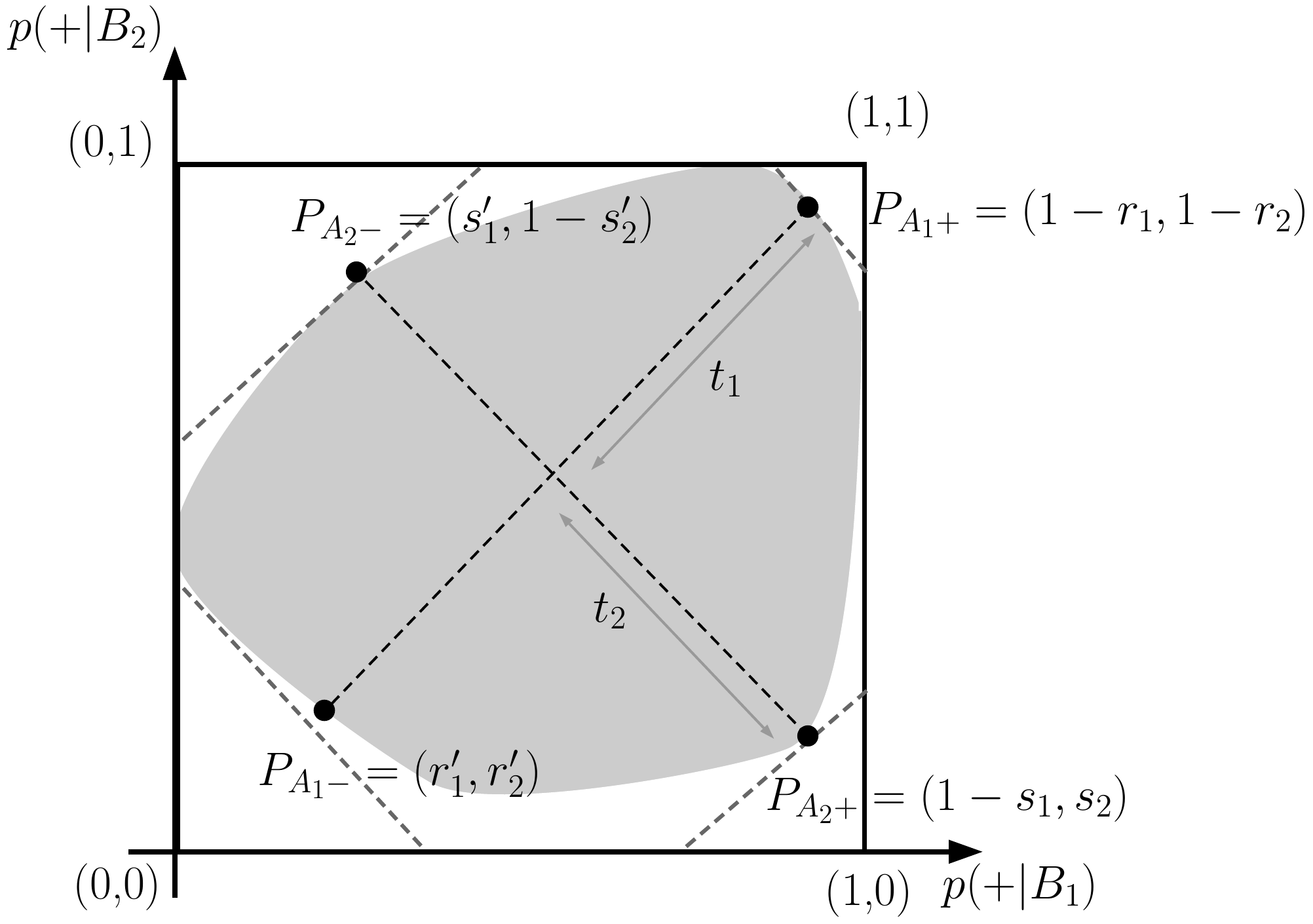}
\end{center}
\caption{An arbitrary statistics set $\setth$ for two observables $B_{1,2}$ of Bob's system. The four different preparations $P_{A_i\pm}$ depending on Alice's measurement choice and outcome are presented by their coordinates. The four preparations should satisfy the no-signaling conditions \eqref{nssim}.}
\label{figchsh}
\end{figure}

 Thus, we seek to maximize the right-hand-side of \eqref{chshsim} under the non-linear constraints \eqref{nssim}. Intuitively, it can be seen that the PUR prevents the CHSH value to be the maximum. There are only few possibilities for $\mathcal{I} =4$. In one case,  the statistics set allows the four corners of the square, i.e., $r_1+r_2=r'_1+r'_2=s_1+s_2=s'_1+s'_2=0$, which contradicts the notion of PUR. On the other, one of terms $r_1+r_2$ or $r'_1+r'_2$, say $r_1+r_2$, and one of terms $s_1+s_2$ or $s'_1+s'_2$, say $s_1+s_2$, is zero and accordingly $t_1,t_2$ both has to be 1. Such value assignment of these variables contradicts with no-signaling principle \eqref{nssim}.

If we assume $S_{X,Y}$ to be symmetric with respect to the diagonal of the square (as shown in Fig. \ref{fig:icp}), then it is easier to relate the CHSH term \eqref{chshsim} to Uncertainty principle. Consider $P_{A_1+},P_{A_2+}$ to be the closest points to the corners $(1,1)$ and $(1,0)$ respectively. By symmetry, we know there exists another two closest points to other two corners, such that $r_1=r'_1, r_2=r'_2,s_1=s'_1,s_2=s'_2$. Therefore, $\mathcal{I} \leq 4-2(r_1+r_2+s_1+s_2)$. In fact, this inequality is tight, due to the fact that, this value is achieved when the no-signaling conditions \eqref{nssim} are satisfied for $t_1=t_2=1/2$. Further, we recall the expression of $\U,\C_r$ from \eqref{k:symmetric} in terms of $r_1,r_2,s_1,s_2$, and re-express the CHSH term as,
\be \label{ur-chsh}
\mathcal{I} = 2 + 2(\C_r - \U).
\ee
Clearly, the Uncertainty principle, which is in the form \eqref{pur}, restricts the value of $\mathcal{I}$. In quantum theory, the exact form of PUR is given in \eqref{eq:urq2}. Thus, the maximum value of RHS of \eqref{ur-chsh} is obtained to the Tsirelson's bound, i.e., $2\sqrt{2}$, when $\C_r=1-\U = 1/\sqrt{2}$ satisfying \eqref{eq:urq2}.

%%%%%%%%%%%%%%%%%%%%%%%%%%%%%%%

\section{Open problems}\label{sec:openPROB}
The major open problem is whether there exists theories, where two clean \mg and extremal \blk observables can be very well approximated by 
some other observable. For such hypothetical theories, complementarity of observables cannot be anymore read out from behaviour 
of the statistics set. It wold be also interesting to define a smoothed version of complementarity,
given by minimum of independence over observables that reproducing the given observables up to $\epsilon$ in some suitable 
distance.  One can then investigate how the statistics set changes with $\epsilon$. 
Another interesting problem is to explore the relation between the concepts of complementarity and contextuality \cite{SpekkensCont2005}, as the latter also reflects somehow the notion of complementarity. 
There is also a question of how the approach presented in this paper are related to the operational approach to wave particle duality 
of Ref. \cite{Bagan2018-duality}.

There are lot of other questions, including the following ones:
\begin{itemize}
\item Generalize the geometric approach to continuous variables (i.e. to position and momentum observables). 
\item Prove that uncertainty relation implies Tsirelson bound without symmetry assumptions. 
\item Relate Information Contents Principle to uncertainty relation for larger dimensions, and again, without symmetry assumptions. 
\item Compute independence based on variation of information for qubit observables, and find uncertainty relation with properly chosen uncertainty measure (seems that in this case entropy is the suitable one, or mutual information as exclusion measure in higher dimension) 
\item Make tighter exclusion principle based on random access code. 
\end{itemize}
Finally, our focused exclusively on two observables, but one can readily extend the definitions and concepts to  more observables 
and explore the subject in this more general setting.

\emph{Note added---} During the completion of our manuscript, we became aware of the paper \cite{ComplNonloc2018}, that  derived the Tsirelson bound for CHSH inequality from restrictions on the complementarity present in quantum theory.    However, the quantitative notion of complementarity used in that work differs form considered by us.

\begin{acknowledgments}
We thank Karol Horodecki interesting and stimulating discussions. 
D. S. is supported by National Science Centre, Poland, grants 2016/23/N/ST2/02817,
2014/14/E/ST2/00020 and FNP grant First TEAM (Grant No. First TEAM/2017-4/31). 
M.O. acknowledges the support of Homing programme of the Foundation for Polish Science co-financed
by the European Union under the European Regional Development Fund. 
L.Cz., M.H. and R.H. are supported by John  Templeton Foundation through grant  ID \#56033. M.H. and R.H. are also supported 
by National Science Centre, Poland, grant OPUS 9. 2015/17/B/ST2/01945.
\end{acknowledgments}

\bibliography{CompUncBIB}

%merlin.mbs apsrev4-1.bst 2010-07-25 4.21a (PWD, AO, DPC) hacked
%Control: key (0)
%Control: author (8) initials jnrlst
%Control: editor formatted (1) identically to author
%Control: production of article title (-1) disabled
%Control: page (0) single
%Control: year (1) truncated
%Control: production of eprint (0) enabled
\begin{thebibliography}{57}%
\makeatletter
\providecommand \@ifxundefined [1]{%
 \@ifx{#1\undefined}
}%
\providecommand \@ifnum [1]{%
 \ifnum #1\expandafter \@firstoftwo
 \else \expandafter \@secondoftwo
 \fi
}%
\providecommand \@ifx [1]{%
 \ifx #1\expandafter \@firstoftwo
 \else \expandafter \@secondoftwo
 \fi
}%
\providecommand \natexlab [1]{#1}%
\providecommand \enquote  [1]{``#1''}%
\providecommand \bibnamefont  [1]{#1}%
\providecommand \bibfnamefont [1]{#1}%
\providecommand \citenamefont [1]{#1}%
\providecommand \href@noop [0]{\@secondoftwo}%
\providecommand \href [0]{\begingroup \@sanitize@url \@href}%
\providecommand \@href[1]{\@@startlink{#1}\@@href}%
\providecommand \@@href[1]{\endgroup#1\@@endlink}%
\providecommand \@sanitize@url [0]{\catcode `\\12\catcode `\$12\catcode
  `\&12\catcode `\#12\catcode `\^12\catcode `\_12\catcode `\%12\relax}%
\providecommand \@@startlink[1]{}%
\providecommand \@@endlink[0]{}%
\providecommand \url  [0]{\begingroup\@sanitize@url \@url }%
\providecommand \@url [1]{\endgroup\@href {#1}{\urlprefix }}%
\providecommand \urlprefix  [0]{URL }%
\providecommand \Eprint [0]{\href }%
\providecommand \doibase [0]{http://dx.doi.org/}%
\providecommand \selectlanguage [0]{\@gobble}%
\providecommand \bibinfo  [0]{\@secondoftwo}%
\providecommand \bibfield  [0]{\@secondoftwo}%
\providecommand \translation [1]{[#1]}%
\providecommand \BibitemOpen [0]{}%
\providecommand \bibitemStop [0]{}%
\providecommand \bibitemNoStop [0]{.\EOS\space}%
\providecommand \EOS [0]{\spacefactor3000\relax}%
\providecommand \BibitemShut  [1]{\csname bibitem#1\endcsname}%
\let\auto@bib@innerbib\@empty
%</preamble>
\bibitem [{\citenamefont {{Bohr}}(1928)}]{Conmpl1928}%
  \BibitemOpen
  \bibfield  {author} {\bibinfo {author} {\bibfnamefont {N.}~\bibnamefont
  {{Bohr}}},\ }\href {\doibase 10.1038/121580a0} {\bibfield  {journal}
  {\bibinfo  {journal} {\nat}\ }\textbf {\bibinfo {volume} {121}},\ \bibinfo
  {pages} {580} (\bibinfo {year} {1928})}\BibitemShut {NoStop}%
\bibitem [{\citenamefont {Heisenberg}(1927)}]{Heisenberg1927}%
  \BibitemOpen
  \bibfield  {author} {\bibinfo {author} {\bibfnamefont {W.}~\bibnamefont
  {Heisenberg}},\ }\href {\doibase 10.1007/BF01397280} {\bibfield  {journal}
  {\bibinfo  {journal} {Zeitschrift f{\"u}r Physik}\ }\textbf {\bibinfo
  {volume} {43}},\ \bibinfo {pages} {172} (\bibinfo {year} {1927})}\BibitemShut
  {NoStop}%
\bibitem [{\citenamefont {Busch}\ \emph {et~al.}(2007)\citenamefont {Busch},
  \citenamefont {Heinonen},\ and\ \citenamefont
  {Lahti}}]{BuschHeinsenberg2007}%
  \BibitemOpen
  \bibfield  {author} {\bibinfo {author} {\bibfnamefont {P.}~\bibnamefont
  {Busch}}, \bibinfo {author} {\bibfnamefont {T.}~\bibnamefont {Heinonen}}, \
  and\ \bibinfo {author} {\bibfnamefont {P.}~\bibnamefont {Lahti}},\ }\href
  {\doibase https://doi.org/10.1016/j.physrep.2007.05.006} {\bibfield
  {journal} {\bibinfo  {journal} {Physics Reports}\ }\textbf {\bibinfo {volume}
  {452}},\ \bibinfo {pages} {155 } (\bibinfo {year} {2007})}\BibitemShut
  {NoStop}%
\bibitem [{\citenamefont {Bia{\l}ynicki-Birula}\ and\ \citenamefont
  {Mycielski}(1975)}]{Birula1975}%
  \BibitemOpen
  \bibfield  {author} {\bibinfo {author} {\bibfnamefont {I.}~\bibnamefont
  {Bia{\l}ynicki-Birula}}\ and\ \bibinfo {author} {\bibfnamefont
  {J.}~\bibnamefont {Mycielski}},\ }\href {\doibase 10.1007/BF01608825}
  {\bibfield  {journal} {\bibinfo  {journal} {Communications in Mathematical
  Physics}\ }\textbf {\bibinfo {volume} {44}},\ \bibinfo {pages} {129}
  (\bibinfo {year} {1975})}\BibitemShut {NoStop}%
\bibitem [{\citenamefont {Bialynicki-Birula}\ and\ \citenamefont
  {Rudnicki}(2011)}]{Bialynicki2011}%
  \BibitemOpen
  \bibfield  {author} {\bibinfo {author} {\bibfnamefont {I.}~\bibnamefont
  {Bialynicki-Birula}}\ and\ \bibinfo {author} {\bibfnamefont {L.~u.}\
  \bibnamefont {Rudnicki}},\ }\enquote {\bibinfo {title} {Entropic uncertainty
  relations in quantum physics},}\ in\ \href {\doibase
  10.1007/978-90-481-3890-6_1} {\emph {\bibinfo {booktitle} {Statistical
  Complexity: Applications in Electronic Structure}}},\ \bibinfo {editor}
  {edited by\ \bibinfo {editor} {\bibfnamefont {K.}~\bibnamefont {Sen}}}\
  (\bibinfo  {publisher} {Springer Netherlands},\ \bibinfo {address}
  {Dordrecht},\ \bibinfo {year} {2011})\ pp.\ \bibinfo {pages}
  {1--34}\BibitemShut {NoStop}%
\bibitem [{\citenamefont {Partovi}(2011)}]{Partovi-major}%
  \BibitemOpen
  \bibfield  {author} {\bibinfo {author} {\bibfnamefont {M.~H.}\ \bibnamefont
  {Partovi}},\ }\href {\doibase 10.1103/PhysRevA.84.052117} {\bibfield
  {journal} {\bibinfo  {journal} {Phys. Rev. A}\ }\textbf {\bibinfo {volume}
  {84}},\ \bibinfo {pages} {052117} (\bibinfo {year} {2011})}\BibitemShut
  {NoStop}%
\bibitem [{\citenamefont {Friedland}\ \emph {et~al.}(2013)\citenamefont
  {Friedland}, \citenamefont {Gheorghiu},\ and\ \citenamefont
  {Gour}}]{FriedlandGour2013}%
  \BibitemOpen
  \bibfield  {author} {\bibinfo {author} {\bibfnamefont {S.}~\bibnamefont
  {Friedland}}, \bibinfo {author} {\bibfnamefont {V.}~\bibnamefont
  {Gheorghiu}}, \ and\ \bibinfo {author} {\bibfnamefont {G.}~\bibnamefont
  {Gour}},\ }\href {\doibase 10.1103/PhysRevLett.111.230401} {\bibfield
  {journal} {\bibinfo  {journal} {Phys. Rev. Lett.}\ }\textbf {\bibinfo
  {volume} {111}},\ \bibinfo {pages} {230401} (\bibinfo {year}
  {2013})}\BibitemShut {NoStop}%
\bibitem [{\citenamefont {Puchala}\ \emph {et~al.}(2013)\citenamefont
  {Puchala}, \citenamefont {Rudnicki},\ and\ \citenamefont
  {Zyczkowski}}]{PochalaMajorisation2013}%
  \BibitemOpen
  \bibfield  {author} {\bibinfo {author} {\bibfnamefont {Z.}~\bibnamefont
  {Puchala}}, \bibinfo {author} {\bibfnamefont {L.}~\bibnamefont {Rudnicki}}, \
  and\ \bibinfo {author} {\bibfnamefont {K.}~\bibnamefont {Zyczkowski}},\
  }\href {http://stacks.iop.org/1751-8121/46/i=27/a=272002} {\bibfield
  {journal} {\bibinfo  {journal} {Journal of Physics A: Mathematical and
  Theoretical}\ }\textbf {\bibinfo {volume} {46}},\ \bibinfo {pages} {272002}
  (\bibinfo {year} {2013})}\BibitemShut {NoStop}%
\bibitem [{\citenamefont {Giovannetti}(2004)}]{Giovannetti2004}%
  \BibitemOpen
  \bibfield  {author} {\bibinfo {author} {\bibfnamefont {V.}~\bibnamefont
  {Giovannetti}},\ }\href {\doibase 10.1103/PhysRevA.70.012102} {\bibfield
  {journal} {\bibinfo  {journal} {Phys. Rev. A}\ }\textbf {\bibinfo {volume}
  {70}},\ \bibinfo {pages} {012102} (\bibinfo {year} {2004})}\BibitemShut
  {NoStop}%
\bibitem [{\citenamefont {G\"uhne}\ and\ \citenamefont
  {Lewenstein}(2004)}]{Ghune2004}%
  \BibitemOpen
  \bibfield  {author} {\bibinfo {author} {\bibfnamefont {O.}~\bibnamefont
  {G\"uhne}}\ and\ \bibinfo {author} {\bibfnamefont {M.}~\bibnamefont
  {Lewenstein}},\ }\href {\doibase 10.1103/PhysRevA.70.022316} {\bibfield
  {journal} {\bibinfo  {journal} {Phys. Rev. A}\ }\textbf {\bibinfo {volume}
  {70}},\ \bibinfo {pages} {022316} (\bibinfo {year} {2004})}\BibitemShut
  {NoStop}%
\bibitem [{\citenamefont {Walborn}\ \emph {et~al.}(2009)\citenamefont
  {Walborn}, \citenamefont {Taketani}, \citenamefont {Salles}, \citenamefont
  {Toscano},\ and\ \citenamefont {de~Matos~Filho}}]{Walborn2009}%
  \BibitemOpen
  \bibfield  {author} {\bibinfo {author} {\bibfnamefont {S.~P.}\ \bibnamefont
  {Walborn}}, \bibinfo {author} {\bibfnamefont {B.~G.}\ \bibnamefont
  {Taketani}}, \bibinfo {author} {\bibfnamefont {A.}~\bibnamefont {Salles}},
  \bibinfo {author} {\bibfnamefont {F.}~\bibnamefont {Toscano}}, \ and\
  \bibinfo {author} {\bibfnamefont {R.~L.}\ \bibnamefont {de~Matos~Filho}},\
  }\href {\doibase 10.1103/PhysRevLett.103.160505} {\bibfield  {journal}
  {\bibinfo  {journal} {Phys. Rev. Lett.}\ }\textbf {\bibinfo {volume} {103}},\
  \bibinfo {pages} {160505} (\bibinfo {year} {2009})}\BibitemShut {NoStop}%
\bibitem [{\citenamefont {Coles}\ \emph {et~al.}(2017)\citenamefont {Coles},
  \citenamefont {Berta}, \citenamefont {Tomamichel},\ and\ \citenamefont
  {Wehner}}]{RevModPhys2017ent}%
  \BibitemOpen
  \bibfield  {author} {\bibinfo {author} {\bibfnamefont {P.~J.}\ \bibnamefont
  {Coles}}, \bibinfo {author} {\bibfnamefont {M.}~\bibnamefont {Berta}},
  \bibinfo {author} {\bibfnamefont {M.}~\bibnamefont {Tomamichel}}, \ and\
  \bibinfo {author} {\bibfnamefont {S.}~\bibnamefont {Wehner}},\ }\href
  {\doibase 10.1103/RevModPhys.89.015002} {\bibfield  {journal} {\bibinfo
  {journal} {Rev. Mod. Phys.}\ }\textbf {\bibinfo {volume} {89}},\ \bibinfo
  {pages} {015002} (\bibinfo {year} {2017})}\BibitemShut {NoStop}%
\bibitem [{\citenamefont {{Ver Steeg}}\ and\ \citenamefont
  {{Wehner}}(2008)}]{SteegWehner}%
  \BibitemOpen
  \bibfield  {author} {\bibinfo {author} {\bibfnamefont {G.}~\bibnamefont {{Ver
  Steeg}}}\ and\ \bibinfo {author} {\bibfnamefont {S.}~\bibnamefont
  {{Wehner}}},\ }\href@noop {} {\bibfield  {journal} {\bibinfo  {journal}
  {ArXiv e-prints}\ } (\bibinfo {year} {2008})},\ \Eprint
  {http://arxiv.org/abs/0811.3771} {arXiv:0811.3771 [quant-ph]} \BibitemShut
  {NoStop}%
\bibitem [{\citenamefont {{Oppenheim}}\ and\ \citenamefont
  {{Wehner}}(2010)}]{OW}%
  \BibitemOpen
  \bibfield  {author} {\bibinfo {author} {\bibfnamefont {J.}~\bibnamefont
  {{Oppenheim}}}\ and\ \bibinfo {author} {\bibfnamefont {S.}~\bibnamefont
  {{Wehner}}},\ }\href {\doibase 10.1126/science.1192065} {\bibfield  {journal}
  {\bibinfo  {journal} {Science}\ }\textbf {\bibinfo {volume} {330}},\ \bibinfo
  {pages} {1072} (\bibinfo {year} {2010})},\ \Eprint
  {http://arxiv.org/abs/1004.2507} {arXiv:1004.2507 [quant-ph]} \BibitemShut
  {NoStop}%
\bibitem [{\citenamefont {{Ramanathan}}\ \emph
  {et~al.}(2015{\natexlab{a}})\citenamefont {{Ramanathan}}, \citenamefont
  {{Goyeneche}}, \citenamefont {{Mironowicz}},\ and\ \citenamefont
  {{Horodecki}}}]{RaviGMP-ur-nonlocality}%
  \BibitemOpen
  \bibfield  {author} {\bibinfo {author} {\bibfnamefont {R.}~\bibnamefont
  {{Ramanathan}}}, \bibinfo {author} {\bibfnamefont {D.}~\bibnamefont
  {{Goyeneche}}}, \bibinfo {author} {\bibfnamefont {P.}~\bibnamefont
  {{Mironowicz}}}, \ and\ \bibinfo {author} {\bibfnamefont {P.}~\bibnamefont
  {{Horodecki}}},\ }\href@noop {} {\bibfield  {journal} {\bibinfo  {journal}
  {ArXiv e-prints}\ } (\bibinfo {year} {2015}{\natexlab{a}})},\ \Eprint
  {http://arxiv.org/abs/1506.05100} {arXiv:1506.05100 [quant-ph]} \BibitemShut
  {NoStop}%
\bibitem [{\citenamefont {{H{\"a}nggi}}\ and\ \citenamefont
  {{Wehner}}(2013)}]{WehnerHanggi-up-thermo}%
  \BibitemOpen
  \bibfield  {author} {\bibinfo {author} {\bibfnamefont {E.}~\bibnamefont
  {{H{\"a}nggi}}}\ and\ \bibinfo {author} {\bibfnamefont {S.}~\bibnamefont
  {{Wehner}}},\ }\href {\doibase 10.1038/ncomms2665} {\bibfield  {journal}
  {\bibinfo  {journal} {Nature Communications}\ }\textbf {\bibinfo {volume}
  {4}},\ \bibinfo {eid} {1670} (\bibinfo {year} {2013})},\ \Eprint
  {http://arxiv.org/abs/1205.6894} {arXiv:1205.6894 [quant-ph]} \BibitemShut
  {NoStop}%
\bibitem [{\citenamefont {{Pl{\'a}vala}}(2016)}]{Plavala-incom}%
  \BibitemOpen
  \bibfield  {author} {\bibinfo {author} {\bibfnamefont {M.}~\bibnamefont
  {{Pl{\'a}vala}}},\ }\href {\doibase 10.1103/PhysRevA.94.042108} {\bibfield
  {journal} {\bibinfo  {journal} {\pra}\ }\textbf {\bibinfo {volume} {94}},\
  \bibinfo {eid} {042108} (\bibinfo {year} {2016})},\ \Eprint
  {http://arxiv.org/abs/1608.05614} {arXiv:1608.05614 [quant-ph]} \BibitemShut
  {NoStop}%
\bibitem [{\citenamefont {{Jen{\v c}ov{\'a}}}\ and\ \citenamefont
  {{Pl{\'a}vala}}(2017)}]{JencovaPlavala-incom}%
  \BibitemOpen
  \bibfield  {author} {\bibinfo {author} {\bibfnamefont {A.}~\bibnamefont
  {{Jen{\v c}ov{\'a}}}}\ and\ \bibinfo {author} {\bibfnamefont
  {M.}~\bibnamefont {{Pl{\'a}vala}}},\ }\href {\doibase
  10.1103/PhysRevA.96.022113} {\bibfield  {journal} {\bibinfo  {journal}
  {\pra}\ }\textbf {\bibinfo {volume} {96}},\ \bibinfo {eid} {022113} (\bibinfo
  {year} {2017})},\ \Eprint {http://arxiv.org/abs/1703.09447} {arXiv:1703.09447
  [quant-ph]} \BibitemShut {NoStop}%
\bibitem [{\citenamefont {{Janotta}}\ and\ \citenamefont
  {{Hinrichsen}}(2014)}]{JanottaH-review}%
  \BibitemOpen
  \bibfield  {author} {\bibinfo {author} {\bibfnamefont {P.}~\bibnamefont
  {{Janotta}}}\ and\ \bibinfo {author} {\bibfnamefont {H.}~\bibnamefont
  {{Hinrichsen}}},\ }\href {\doibase 10.1088/1751-8113/47/32/323001} {\bibfield
   {journal} {\bibinfo  {journal} {Journal of Physics A Mathematical General}\
  }\textbf {\bibinfo {volume} {47}},\ \bibinfo {eid} {323001} (\bibinfo {year}
  {2014})},\ \Eprint {http://arxiv.org/abs/1402.6562} {arXiv:1402.6562
  [quant-ph]} \BibitemShut {NoStop}%
\bibitem [{\citenamefont {Busch}\ \emph {et~al.}(2013)\citenamefont {Busch},
  \citenamefont {Heinosaari}, \citenamefont {Schultz},\ and\ \citenamefont
  {Stevens}}]{Busch13}%
  \BibitemOpen
  \bibfield  {author} {\bibinfo {author} {\bibfnamefont {P.}~\bibnamefont
  {Busch}}, \bibinfo {author} {\bibfnamefont {T.}~\bibnamefont {Heinosaari}},
  \bibinfo {author} {\bibfnamefont {J.}~\bibnamefont {Schultz}}, \ and\
  \bibinfo {author} {\bibfnamefont {N.}~\bibnamefont {Stevens}},\ }\href
  {http://stacks.iop.org/0295-5075/103/i=1/a=10002} {\bibfield  {journal}
  {\bibinfo  {journal} {EPL (Europhysics Letters)}\ }\textbf {\bibinfo {volume}
  {103}},\ \bibinfo {pages} {10002} (\bibinfo {year} {2013})}\BibitemShut
  {NoStop}%
\bibitem [{\citenamefont {Filippov}\ \emph {et~al.}(2017)\citenamefont
  {Filippov}, \citenamefont {Heinosaari},\ and\ \citenamefont
  {Lepp\"aj\"arvi}}]{Filippov17}%
  \BibitemOpen
  \bibfield  {author} {\bibinfo {author} {\bibfnamefont {S.~N.}\ \bibnamefont
  {Filippov}}, \bibinfo {author} {\bibfnamefont {T.}~\bibnamefont
  {Heinosaari}}, \ and\ \bibinfo {author} {\bibfnamefont {L.}~\bibnamefont
  {Lepp\"aj\"arvi}},\ }\href {\doibase 10.1103/PhysRevA.95.032127} {\bibfield
  {journal} {\bibinfo  {journal} {Phys. Rev. A}\ }\textbf {\bibinfo {volume}
  {95}},\ \bibinfo {pages} {032127} (\bibinfo {year} {2017})}\BibitemShut
  {NoStop}%
\bibitem [{\citenamefont {Bohr}(1949)}]{Bohr1949}%
  \BibitemOpen
  \bibfield  {author} {\bibinfo {author} {\bibfnamefont {N.}~\bibnamefont
  {Bohr}},\ }\href@noop {} {\emph {\bibinfo {title} {Philosophical Writings of
  Niels Bohr}}},\ Vol.~\bibinfo {volume} {2}\ (\bibinfo  {publisher}
  {Woodbridge, CT: Ox Bow Press},\ \bibinfo {year} {1949})\BibitemShut
  {NoStop}%
\bibitem [{Note1()}]{Note1}%
  \BibitemOpen
  \bibinfo {note} {See also \cite {Plotonisky2014} for the comprehensive
  account on on the evolution of Bohr's views on the notions of uncertainty and
  complementarity.}\BibitemShut {Stop}%
\bibitem [{\citenamefont {Spekkens}(2005)}]{SpekkensCont2005}%
  \BibitemOpen
  \bibfield  {author} {\bibinfo {author} {\bibfnamefont {R.~W.}\ \bibnamefont
  {Spekkens}},\ }\href {\doibase 10.1103/PhysRevA.71.052108} {\bibfield
  {journal} {\bibinfo  {journal} {Phys. Rev. A}\ }\textbf {\bibinfo {volume}
  {71}},\ \bibinfo {pages} {052108} (\bibinfo {year} {2005})}\BibitemShut
  {NoStop}%
\bibitem [{\citenamefont {Leifer}(2014)}]{LeiferReview2014}%
  \BibitemOpen
  \bibfield  {author} {\bibinfo {author} {\bibfnamefont {M.~S.}\ \bibnamefont
  {Leifer}},\ }\href {\doibase 10.12743/quanta.v3i1.22} {\bibfield  {journal}
  {\bibinfo  {journal} {Quanta}\ }\textbf {\bibinfo {volume} {3}},\ \bibinfo
  {pages} {67–155} (\bibinfo {year} {2014})}\BibitemShut {NoStop}%
\bibitem [{\citenamefont {Narasimhachar}\ \emph {et~al.}(2016)\citenamefont
  {Narasimhachar}, \citenamefont {Poostindouz},\ and\ \citenamefont
  {Gour}}]{GourNJP2016}%
  \BibitemOpen
  \bibfield  {author} {\bibinfo {author} {\bibfnamefont {V.}~\bibnamefont
  {Narasimhachar}}, \bibinfo {author} {\bibfnamefont {A.}~\bibnamefont
  {Poostindouz}}, \ and\ \bibinfo {author} {\bibfnamefont {G.}~\bibnamefont
  {Gour}},\ }\href {http://stacks.iop.org/1367-2630/18/i=3/a=033019} {\bibfield
   {journal} {\bibinfo  {journal} {New Journal of Physics}\ }\textbf {\bibinfo
  {volume} {18}},\ \bibinfo {pages} {033019} (\bibinfo {year}
  {2016})}\BibitemShut {NoStop}%
\bibitem [{\citenamefont {Buscemi}\ \emph {et~al.}(2005)\citenamefont
  {Buscemi}, \citenamefont {Keyl}, \citenamefont {D'Ariano}, \citenamefont
  {Perinotti},\ and\ \citenamefont {Werner}}]{Buscemi2005}%
  \BibitemOpen
  \bibfield  {author} {\bibinfo {author} {\bibfnamefont {F.}~\bibnamefont
  {Buscemi}}, \bibinfo {author} {\bibfnamefont {M.}~\bibnamefont {Keyl}},
  \bibinfo {author} {\bibfnamefont {G.~M.}\ \bibnamefont {D'Ariano}}, \bibinfo
  {author} {\bibfnamefont {P.}~\bibnamefont {Perinotti}}, \ and\ \bibinfo
  {author} {\bibfnamefont {R.~F.}\ \bibnamefont {Werner}},\ }\href {\doibase
  10.1063/1.2008996} {\bibfield  {journal} {\bibinfo  {journal} {J. Math.
  Phys.}\ }\textbf {\bibinfo {volume} {46}},\ \bibinfo {pages} {082109}
  (\bibinfo {year} {2005})}\BibitemShut {NoStop}%
\bibitem [{\citenamefont {Haapasalo}\ \emph {et~al.}(2012)\citenamefont
  {Haapasalo}, \citenamefont {Heinosaari},\ and\ \citenamefont
  {Pellonpaa}}]{Haapasalo2012}%
  \BibitemOpen
  \bibfield  {author} {\bibinfo {author} {\bibfnamefont {E.}~\bibnamefont
  {Haapasalo}}, \bibinfo {author} {\bibfnamefont {T.}~\bibnamefont
  {Heinosaari}}, \ and\ \bibinfo {author} {\bibfnamefont {J.-P.}\ \bibnamefont
  {Pellonpaa}},\ }\href {\doibase 10.1007/s11128-011-0330-2} {\bibfield
  {journal} {\bibinfo  {journal} {Quant. Inf. Process.}\ }\textbf {\bibinfo
  {volume} {11}},\ \bibinfo {pages} {1751} (\bibinfo {year}
  {2012})}\BibitemShut {NoStop}%
\bibitem [{\citenamefont {Oszmaniec}\ \emph {et~al.}(2017)\citenamefont
  {Oszmaniec}, \citenamefont {Guerini}, \citenamefont {Wittek},\ and\
  \citenamefont {Ac\'{\i}n}}]{OszmPOVM2017}%
  \BibitemOpen
  \bibfield  {author} {\bibinfo {author} {\bibfnamefont {M.}~\bibnamefont
  {Oszmaniec}}, \bibinfo {author} {\bibfnamefont {L.}~\bibnamefont {Guerini}},
  \bibinfo {author} {\bibfnamefont {P.}~\bibnamefont {Wittek}}, \ and\ \bibinfo
  {author} {\bibfnamefont {A.}~\bibnamefont {Ac\'{\i}n}},\ }\href {\doibase
  10.1103/PhysRevLett.119.190501} {\bibfield  {journal} {\bibinfo  {journal}
  {Phys. Rev. Lett.}\ }\textbf {\bibinfo {volume} {119}},\ \bibinfo {pages}
  {190501} (\bibinfo {year} {2017})}\BibitemShut {NoStop}%
\bibitem [{\citenamefont {Oszmaniec}\ \emph {et~al.}(2018)\citenamefont
  {Oszmaniec}, \citenamefont {Maciejewski},\ and\ \citenamefont
  {Zbigniew}}]{Oszmaniec2018}%
  \BibitemOpen
  \bibfield  {author} {\bibinfo {author} {\bibfnamefont {M.}~\bibnamefont
  {Oszmaniec}}, \bibinfo {author} {\bibfnamefont {F.~B.}\ \bibnamefont
  {Maciejewski}}, \ and\ \bibinfo {author} {\bibfnamefont {P.}~\bibnamefont
  {Zbigniew}},\ }\href {https://arxiv.org/abs/1807.08449} {\bibfield  {journal}
  {\bibinfo  {journal} {arxiv preprint: arXiv:1807.08449}\ } (\bibinfo {year}
  {2018})}\BibitemShut {NoStop}%
\bibitem [{\citenamefont {Guerini}\ \emph {et~al.}(2017)\citenamefont
  {Guerini}, \citenamefont {Bavaresco}, \citenamefont {Terra~Cunha},\ and\
  \citenamefont {Ac{\'i}n}}]{Leo2017}%
  \BibitemOpen
  \bibfield  {author} {\bibinfo {author} {\bibfnamefont {L.}~\bibnamefont
  {Guerini}}, \bibinfo {author} {\bibfnamefont {J.}~\bibnamefont {Bavaresco}},
  \bibinfo {author} {\bibfnamefont {M.}~\bibnamefont {Terra~Cunha}}, \ and\
  \bibinfo {author} {\bibfnamefont {A.}~\bibnamefont {Ac{\'i}n}},\ }\href
  {\doibase 10.1063/1.4994303} {\bibfield  {journal} {\bibinfo  {journal} {J.
  Math. Phys.}\ }\textbf {\bibinfo {volume} {58}},\ \bibinfo {pages} {092102}
  (\bibinfo {year} {2017})}\BibitemShut {NoStop}%
\bibitem [{\citenamefont {Filippov}\ \emph {et~al.}(2018)\citenamefont
  {Filippov}, \citenamefont {Heinosaari},\ and\ \citenamefont
  {Lepp\"aj\"arvi}}]{Filippov2018}%
  \BibitemOpen
  \bibfield  {author} {\bibinfo {author} {\bibfnamefont {S.~N.}\ \bibnamefont
  {Filippov}}, \bibinfo {author} {\bibfnamefont {T.}~\bibnamefont
  {Heinosaari}}, \ and\ \bibinfo {author} {\bibfnamefont {L.}~\bibnamefont
  {Lepp\"aj\"arvi}},\ }\href {\doibase 10.1103/PhysRevA.97.062102} {\bibfield
  {journal} {\bibinfo  {journal} {Phys. Rev. A}\ }\textbf {\bibinfo {volume}
  {97}},\ \bibinfo {pages} {062102} (\bibinfo {year} {2018})}\BibitemShut
  {NoStop}%
\bibitem [{\citenamefont {Kennard}(1927)}]{Kennard1927}%
  \BibitemOpen
  \bibfield  {author} {\bibinfo {author} {\bibfnamefont {E.~H.}\ \bibnamefont
  {Kennard}},\ }\href {\doibase 10.1007/BF01391200} {\bibfield  {journal}
  {\bibinfo  {journal} {Zeitschrift f{\"u}r Physik}\ }\textbf {\bibinfo
  {volume} {44}},\ \bibinfo {pages} {326} (\bibinfo {year} {1927})}\BibitemShut
  {NoStop}%
\bibitem [{\citenamefont {Robertson}(1929)}]{Robertson1929}%
  \BibitemOpen
  \bibfield  {author} {\bibinfo {author} {\bibfnamefont {H.~P.}\ \bibnamefont
  {Robertson}},\ }\href {\doibase 10.1103/PhysRev.34.163} {\bibfield  {journal}
  {\bibinfo  {journal} {Phys. Rev.}\ }\textbf {\bibinfo {volume} {34}},\
  \bibinfo {pages} {163} (\bibinfo {year} {1929})}\BibitemShut {NoStop}%
\bibitem [{\citenamefont {Deutsch}(1983)}]{Deutsch1983}%
  \BibitemOpen
  \bibfield  {author} {\bibinfo {author} {\bibfnamefont {D.}~\bibnamefont
  {Deutsch}},\ }\href {\doibase 10.1103/PhysRevLett.50.631} {\bibfield
  {journal} {\bibinfo  {journal} {Phys. Rev. Lett.}\ }\textbf {\bibinfo
  {volume} {50}},\ \bibinfo {pages} {631} (\bibinfo {year} {1983})}\BibitemShut
  {NoStop}%
\bibitem [{\citenamefont {Maassen}\ and\ \citenamefont
  {Uffink}(1988)}]{MaasenUfff1988}%
  \BibitemOpen
  \bibfield  {author} {\bibinfo {author} {\bibfnamefont {H.}~\bibnamefont
  {Maassen}}\ and\ \bibinfo {author} {\bibfnamefont {J.~B.~M.}\ \bibnamefont
  {Uffink}},\ }\href {\doibase 10.1103/PhysRevLett.60.1103} {\bibfield
  {journal} {\bibinfo  {journal} {Phys. Rev. Lett.}\ }\textbf {\bibinfo
  {volume} {60}},\ \bibinfo {pages} {1103} (\bibinfo {year}
  {1988})}\BibitemShut {NoStop}%
\bibitem [{\citenamefont {Plotnitsky}(2014)}]{Plotonisky2014}%
  \BibitemOpen
  \bibfield  {author} {\bibinfo {author} {\bibfnamefont {A.}~\bibnamefont
  {Plotnitsky}},\ }\href {http://stacks.iop.org/1402-4896/2014/i=T163/a=014002}
  {\bibfield  {journal} {\bibinfo  {journal} {Physica Scripta}\ }\textbf
  {\bibinfo {volume} {2014}},\ \bibinfo {pages} {014002} (\bibinfo {year}
  {2014})}\BibitemShut {NoStop}%
\bibitem [{\citenamefont {Petz}(2007)}]{Petz2007}%
  \BibitemOpen
  \bibfield  {author} {\bibinfo {author} {\bibfnamefont {D.}~\bibnamefont
  {Petz}},\ }\href {\doibase https://doi.org/10.1016/S0034-4877(07)00010-9}
  {\bibfield  {journal} {\bibinfo  {journal} {Reports on Mathematical Physics}\
  }\textbf {\bibinfo {volume} {59}},\ \bibinfo {pages} {209 } (\bibinfo {year}
  {2007})}\BibitemShut {NoStop}%
\bibitem [{\citenamefont {{Busch}}\ \emph {et~al.}(2014)\citenamefont
  {{Busch}}, \citenamefont {{Lahti}},\ and\ \citenamefont
  {{Werner}}}]{BuschLahtiWerner}%
  \BibitemOpen
  \bibfield  {author} {\bibinfo {author} {\bibfnamefont {P.}~\bibnamefont
  {{Busch}}}, \bibinfo {author} {\bibfnamefont {P.}~\bibnamefont {{Lahti}}}, \
  and\ \bibinfo {author} {\bibfnamefont {R.~F.}\ \bibnamefont {{Werner}}},\
  }\href {\doibase 10.1063/1.4871444} {\bibfield  {journal} {\bibinfo
  {journal} {Journal of Mathematical Physics}\ }\textbf {\bibinfo {volume}
  {55}},\ \bibinfo {eid} {042111} (\bibinfo {year} {2014})},\ \Eprint
  {http://arxiv.org/abs/1312.4392} {arXiv:1312.4392 [quant-ph]} \BibitemShut
  {NoStop}%
\bibitem [{\citenamefont {Hall}(1995)}]{Hall}%
  \BibitemOpen
  \bibfield  {author} {\bibinfo {author} {\bibfnamefont {M.~J.~W.}\
  \bibnamefont {Hall}},\ }\href@noop {} {\bibfield  {journal} {\bibinfo
  {journal} {Phys. Rev. Lett.}\ }\textbf {\bibinfo {volume} {74}},\ \bibinfo
  {pages} {3307} (\bibinfo {year} {1995})}\BibitemShut {NoStop}%
\bibitem [{\citenamefont {{Grudka}}\ \emph {et~al.}(2013)\citenamefont
  {{Grudka}}, \citenamefont {{Horodecki}}, \citenamefont {{Horodecki}},
  \citenamefont {{Horodecki}}, \citenamefont {{Klobus}},\ and\ \citenamefont
  {{Pankowski}}}]{GrudkaExclusion}%
  \BibitemOpen
  \bibfield  {author} {\bibinfo {author} {\bibfnamefont {A.}~\bibnamefont
  {{Grudka}}}, \bibinfo {author} {\bibfnamefont {M.}~\bibnamefont
  {{Horodecki}}}, \bibinfo {author} {\bibfnamefont {P.}~\bibnamefont
  {{Horodecki}}}, \bibinfo {author} {\bibfnamefont {R.}~\bibnamefont
  {{Horodecki}}}, \bibinfo {author} {\bibfnamefont {W.}~\bibnamefont
  {{Klobus}}}, \ and\ \bibinfo {author} {\bibfnamefont {L.}~\bibnamefont
  {{Pankowski}}},\ }\href {\doibase 10.1103/PhysRevA.88.032106} {\bibfield
  {journal} {\bibinfo  {journal} {\pra}\ }\textbf {\bibinfo {volume} {88}},\
  \bibinfo {eid} {032106} (\bibinfo {year} {2013})},\ \Eprint
  {http://arxiv.org/abs/1210.8317} {arXiv:1210.8317 [quant-ph]} \BibitemShut
  {NoStop}%
\bibitem [{\citenamefont {{Coles}}\ and\ \citenamefont
  {{Piani}}(2014)}]{ColesPiani}%
  \BibitemOpen
  \bibfield  {author} {\bibinfo {author} {\bibfnamefont {P.~J.}\ \bibnamefont
  {{Coles}}}\ and\ \bibinfo {author} {\bibfnamefont {M.}~\bibnamefont
  {{Piani}}},\ }\href {\doibase 10.1103/PhysRevA.89.022112} {\bibfield
  {journal} {\bibinfo  {journal} {\pra}\ }\textbf {\bibinfo {volume} {89}},\
  \bibinfo {eid} {022112} (\bibinfo {year} {2014})},\ \Eprint
  {http://arxiv.org/abs/1307.4265} {arXiv:1307.4265 [quant-ph]} \BibitemShut
  {NoStop}%
\bibitem [{\citenamefont {Heinosaari}\ \emph {et~al.}(2008)\citenamefont
  {Heinosaari}, \citenamefont {Reitzner},\ and\ \citenamefont
  {Stano}}]{Heinosaari2008}%
  \BibitemOpen
  \bibfield  {author} {\bibinfo {author} {\bibfnamefont {T.}~\bibnamefont
  {Heinosaari}}, \bibinfo {author} {\bibfnamefont {D.}~\bibnamefont
  {Reitzner}}, \ and\ \bibinfo {author} {\bibfnamefont {P.}~\bibnamefont
  {Stano}},\ }\href {\doibase 10.1007/s10701-008-9256-7} {\bibfield  {journal}
  {\bibinfo  {journal} {Foundations of Physics}\ }\textbf {\bibinfo {volume}
  {38}},\ \bibinfo {pages} {1133} (\bibinfo {year} {2008})}\BibitemShut
  {NoStop}%
\bibitem [{\citenamefont {{Paw{\l}owski}}\ \emph {et~al.}(2009)\citenamefont
  {{Paw{\l}owski}}, \citenamefont {{Paterek}}, \citenamefont {{Kaszlikowski}},
  \citenamefont {{Scarani}}, \citenamefont {{Winter}},\ and\ \citenamefont
  {{{\.Z}ukowski}}}]{ic}%
  \BibitemOpen
  \bibfield  {author} {\bibinfo {author} {\bibfnamefont {M.}~\bibnamefont
  {{Paw{\l}owski}}}, \bibinfo {author} {\bibfnamefont {T.}~\bibnamefont
  {{Paterek}}}, \bibinfo {author} {\bibfnamefont {D.}~\bibnamefont
  {{Kaszlikowski}}}, \bibinfo {author} {\bibfnamefont {V.}~\bibnamefont
  {{Scarani}}}, \bibinfo {author} {\bibfnamefont {A.}~\bibnamefont {{Winter}}},
  \ and\ \bibinfo {author} {\bibfnamefont {M.}~\bibnamefont {{{\.Z}ukowski}}},\
  }\href {\doibase 10.1038/nature08400} {\bibfield  {journal} {\bibinfo
  {journal} {\nat}\ }\textbf {\bibinfo {volume} {461}},\ \bibinfo {pages}
  {1101} (\bibinfo {year} {2009})},\ \Eprint {http://arxiv.org/abs/0905.2292}
  {arXiv:0905.2292 [quant-ph]} \BibitemShut {NoStop}%
\bibitem [{\citenamefont {{Czekaj}}\ \emph {et~al.}(2014)\citenamefont
  {{Czekaj}}, \citenamefont {{Horodecki}}, \citenamefont {{Horodecki}},\ and\
  \citenamefont {{Horodecki}}}]{Czekaj-ICP}%
  \BibitemOpen
  \bibfield  {author} {\bibinfo {author} {\bibfnamefont {L.}~\bibnamefont
  {{Czekaj}}}, \bibinfo {author} {\bibfnamefont {M.}~\bibnamefont
  {{Horodecki}}}, \bibinfo {author} {\bibfnamefont {P.}~\bibnamefont
  {{Horodecki}}}, \ and\ \bibinfo {author} {\bibfnamefont {R.}~\bibnamefont
  {{Horodecki}}},\ }\href@noop {} {\bibfield  {journal} {\bibinfo  {journal}
  {ArXiv e-prints}\ } (\bibinfo {year} {2014})},\ \Eprint
  {http://arxiv.org/abs/1403.4643} {arXiv:1403.4643 [quant-ph]} \BibitemShut
  {NoStop}%
\bibitem [{\citenamefont {{Brassard}}\ \emph {et~al.}(2006)\citenamefont
  {{Brassard}}, \citenamefont {{Buhrman}}, \citenamefont {{Linden}},
  \citenamefont {{M{\'e}thot}}, \citenamefont {{Tapp}},\ and\ \citenamefont
  {{Unger}}}]{BrassardPrinciple2006}%
  \BibitemOpen
  \bibfield  {author} {\bibinfo {author} {\bibfnamefont {G.}~\bibnamefont
  {{Brassard}}}, \bibinfo {author} {\bibfnamefont {H.}~\bibnamefont
  {{Buhrman}}}, \bibinfo {author} {\bibfnamefont {N.}~\bibnamefont {{Linden}}},
  \bibinfo {author} {\bibfnamefont {A.~A.}\ \bibnamefont {{M{\'e}thot}}},
  \bibinfo {author} {\bibfnamefont {A.}~\bibnamefont {{Tapp}}}, \ and\ \bibinfo
  {author} {\bibfnamefont {F.}~\bibnamefont {{Unger}}},\ }\href {\doibase
  10.1103/PhysRevLett.96.250401} {\bibfield  {journal} {\bibinfo  {journal}
  {Physical Review Letters}\ }\textbf {\bibinfo {volume} {96}},\ \bibinfo {eid}
  {250401} (\bibinfo {year} {2006})},\ \Eprint
  {http://arxiv.org/abs/quant-ph/0508042} {quant-ph/0508042} \BibitemShut
  {NoStop}%
\bibitem [{\citenamefont {{Navascues}}\ and\ \citenamefont
  {{Wunderlich}}(2010)}]{NavascuesWMacro}%
  \BibitemOpen
  \bibfield  {author} {\bibinfo {author} {\bibfnamefont {M.}~\bibnamefont
  {{Navascues}}}\ and\ \bibinfo {author} {\bibfnamefont {H.}~\bibnamefont
  {{Wunderlich}}},\ }\href {\doibase 10.1098/rspa.2009.0453} {\bibfield
  {journal} {\bibinfo  {journal} {Proceedings of the Royal Society of London
  Series A}\ }\textbf {\bibinfo {volume} {466}},\ \bibinfo {pages} {881}
  (\bibinfo {year} {2010})},\ \Eprint {http://arxiv.org/abs/0907.0372}
  {arXiv:0907.0372 [quant-ph]} \BibitemShut {NoStop}%
\bibitem [{\citenamefont {Lodyga}\ \emph {et~al.}(2017)\citenamefont {Lodyga},
  \citenamefont {{Klobus}}, \citenamefont {{Ramanathan}}, \citenamefont
  {{Grudka}}, \citenamefont {{Horodecki}},\ and\ \citenamefont
  {{Horodecki}}}]{Lodyga-MUR}%
  \BibitemOpen
  \bibfield  {author} {\bibinfo {author} {\bibfnamefont {J.}~\bibnamefont
  {Lodyga}}, \bibinfo {author} {\bibfnamefont {W.}~\bibnamefont {{Klobus}}},
  \bibinfo {author} {\bibfnamefont {R.}~\bibnamefont {{Ramanathan}}}, \bibinfo
  {author} {\bibfnamefont {A.}~\bibnamefont {{Grudka}}}, \bibinfo {author}
  {\bibfnamefont {M.}~\bibnamefont {{Horodecki}}}, \ and\ \bibinfo {author}
  {\bibfnamefont {R.}~\bibnamefont {{Horodecki}}},\ }\href@noop {} {\bibfield
  {journal} {\bibinfo  {journal} {ArXiv e-prints}\ } (\bibinfo {year}
  {2017})},\ \Eprint {http://arxiv.org/abs/1702.00078} {arXiv:1702.00078
  [quant-ph]} \BibitemShut {NoStop}%
\bibitem [{\citenamefont {{Ramanathan}}\ \emph
  {et~al.}(2015{\natexlab{b}})\citenamefont {{Ramanathan}}, \citenamefont
  {{Goyeneche}}, \citenamefont {{Mironowicz}},\ and\ \citenamefont
  {{Horodecki}}}]{RaviGMH-ur-nonlocality}%
  \BibitemOpen
  \bibfield  {author} {\bibinfo {author} {\bibfnamefont {R.}~\bibnamefont
  {{Ramanathan}}}, \bibinfo {author} {\bibfnamefont {D.}~\bibnamefont
  {{Goyeneche}}}, \bibinfo {author} {\bibfnamefont {P.}~\bibnamefont
  {{Mironowicz}}}, \ and\ \bibinfo {author} {\bibfnamefont {P.}~\bibnamefont
  {{Horodecki}}},\ }\href@noop {} {\bibfield  {journal} {\bibinfo  {journal}
  {ArXiv e-prints}\ } (\bibinfo {year} {2015}{\natexlab{b}})},\ \Eprint
  {http://arxiv.org/abs/1506.05100} {arXiv:1506.05100 [quant-ph]} \BibitemShut
  {NoStop}%
\bibitem [{\citenamefont {Ambainis}\ \emph {et~al.}(2008)\citenamefont
  {Ambainis}, \citenamefont {Leung}, \citenamefont {Man{\v c}inska},\ and\
  \citenamefont {Ozols}}]{Ambainis2008}%
  \BibitemOpen
  \bibfield  {author} {\bibinfo {author} {\bibfnamefont {A.}~\bibnamefont
  {Ambainis}}, \bibinfo {author} {\bibfnamefont {D.}~\bibnamefont {Leung}},
  \bibinfo {author} {\bibfnamefont {L.}~\bibnamefont {Man{\v c}inska}}, \ and\
  \bibinfo {author} {\bibfnamefont {M.}~\bibnamefont {Ozols}},\ }\href@noop {}
  {\  (\bibinfo {year} {2008})},\ \Eprint {http://arxiv.org/abs/arXiv0810.2937}
  {arXiv0810.2937} \BibitemShut {NoStop}%
\bibitem [{\citenamefont {Czechlewski}\ \emph {et~al.}(2018)\citenamefont
  {Czechlewski}, \citenamefont {Saha}, \citenamefont {Tavakoli},\ and\
  \citenamefont {Paw\l{}owski}}]{classicalrac}%
  \BibitemOpen
  \bibfield  {author} {\bibinfo {author} {\bibfnamefont {M.}~\bibnamefont
  {Czechlewski}}, \bibinfo {author} {\bibfnamefont {D.}~\bibnamefont {Saha}},
  \bibinfo {author} {\bibfnamefont {A.}~\bibnamefont {Tavakoli}}, \ and\
  \bibinfo {author} {\bibfnamefont {M.}~\bibnamefont {Paw\l{}owski}},\ }\href
  {\doibase 10.1103/PhysRevA.98.062305} {\bibfield  {journal} {\bibinfo
  {journal} {Phys. Rev. A}\ }\textbf {\bibinfo {volume} {98}},\ \bibinfo
  {pages} {062305} (\bibinfo {year} {2018})}\BibitemShut {NoStop}%
\bibitem [{\citenamefont {Aguilar}\ \emph {et~al.}(2018)\citenamefont
  {Aguilar}, \citenamefont {Borka\l{}a}, \citenamefont {Mironowicz},\ and\
  \citenamefont {Paw\l{}owski}}]{rac2}%
  \BibitemOpen
  \bibfield  {author} {\bibinfo {author} {\bibfnamefont {E.~A.}\ \bibnamefont
  {Aguilar}}, \bibinfo {author} {\bibfnamefont {J.~J.}\ \bibnamefont
  {Borka\l{}a}}, \bibinfo {author} {\bibfnamefont {P.}~\bibnamefont
  {Mironowicz}}, \ and\ \bibinfo {author} {\bibfnamefont {M.}~\bibnamefont
  {Paw\l{}owski}},\ }\href {\doibase 10.1103/PhysRevLett.121.050501} {\bibfield
   {journal} {\bibinfo  {journal} {Phys. Rev. Lett.}\ }\textbf {\bibinfo
  {volume} {121}},\ \bibinfo {pages} {050501} (\bibinfo {year}
  {2018})}\BibitemShut {NoStop}%
\bibitem [{\citenamefont {Clauser}\ \emph {et~al.}(1969)\citenamefont
  {Clauser}, \citenamefont {Horne}, \citenamefont {Shimony},\ and\
  \citenamefont {Holt}}]{CHSH}%
  \BibitemOpen
  \bibfield  {author} {\bibinfo {author} {\bibfnamefont {J.~F.}\ \bibnamefont
  {Clauser}}, \bibinfo {author} {\bibfnamefont {M.~A.}\ \bibnamefont {Horne}},
  \bibinfo {author} {\bibfnamefont {A.}~\bibnamefont {Shimony}}, \ and\
  \bibinfo {author} {\bibfnamefont {R.~A.}\ \bibnamefont {Holt}},\ }\href
  {\doibase 10.1103/PhysRevLett.23.880} {\bibfield  {journal} {\bibinfo
  {journal} {Phys. Rev. Lett.}\ }\textbf {\bibinfo {volume} {23}},\ \bibinfo
  {pages} {880} (\bibinfo {year} {1969})}\BibitemShut {NoStop}%
\bibitem [{\citenamefont {{Bagan}}\ \emph {et~al.}(2018)\citenamefont
  {{Bagan}}, \citenamefont {{Calsamiglia}}, \citenamefont {{Bergou}},\ and\
  \citenamefont {{Hillery}}}]{Bagan2018-duality}%
  \BibitemOpen
  \bibfield  {author} {\bibinfo {author} {\bibfnamefont {E.}~\bibnamefont
  {{Bagan}}}, \bibinfo {author} {\bibfnamefont {J.}~\bibnamefont
  {{Calsamiglia}}}, \bibinfo {author} {\bibfnamefont {J.~A.}\ \bibnamefont
  {{Bergou}}}, \ and\ \bibinfo {author} {\bibfnamefont {M.}~\bibnamefont
  {{Hillery}}},\ }\href {\doibase 10.1103/PhysRevLett.120.050402} {\bibfield
  {journal} {\bibinfo  {journal} {Physical Review Letters}\ }\textbf {\bibinfo
  {volume} {120}},\ \bibinfo {eid} {050402} (\bibinfo {year} {2018})},\ \Eprint
  {http://arxiv.org/abs/1708.03968} {arXiv:1708.03968 [quant-ph]} \BibitemShut
  {NoStop}%
\bibitem [{\citenamefont {Sun}\ \emph {et~al.}(2018)\citenamefont {Sun},
  \citenamefont {Yu},\ and\ \citenamefont {Chen}}]{ComplNonloc2018}%
  \BibitemOpen
  \bibfield  {author} {\bibinfo {author} {\bibfnamefont {L.-L.}\ \bibnamefont
  {Sun}}, \bibinfo {author} {\bibfnamefont {S.}~\bibnamefont {Yu}}, \ and\
  \bibinfo {author} {\bibfnamefont {Z.-B.}\ \bibnamefont {Chen}},\ }\href
  {https://arxiv.org/pdf/1808.06416.pdf} {\bibfield  {journal} {\bibinfo
  {journal} {preprint arxiv:1808.06416}\ } (\bibinfo {year}
  {2018})}\BibitemShut {NoStop}%
\bibitem [{\citenamefont {Kittaneh}(1997)}]{Kittaneh}%
  \BibitemOpen
  \bibfield  {author} {\bibinfo {author} {\bibfnamefont {F.}~\bibnamefont
  {Kittaneh}},\ }\href@noop {} {\bibfield  {journal} {\bibinfo  {journal} {J.
  Funct. Analysis}\ }\textbf {\bibinfo {volume} {145}},\ \bibinfo {pages} {337}
  (\bibinfo {year} {1997})}\BibitemShut {NoStop}%
\bibitem [{Note2()}]{Note2}%
  \BibitemOpen
  \bibinfo {note} {This follows form the simple inequality $\protect \sqrt
  {1-y}\leq 1-(y/2)$ used for $y=1-p_{\protect \mathrm
  {max}}(i_0)$}\BibitemShut {NoStop}%
\end{thebibliography}%

\appendix

\part*{Appendices}

In the appendices we present proofs of technical results that were omitted the main text.

\section{Proof of qualitative uncertainty relations}
\label{sec:appendix-implications}
\begin{lem}
\label{lem:appendix-implications}
In quantum mechanics, for quantum measurements with one dimensional projectors the following statements hold
\begin{itemize}
\item[(i)] Complementarity implies information exclusion
\item[(ii)] Single-outcome complementarity implies  uncertainty
\item[(iii)] Full complementarity implies uncertainty
\end{itemize}
\end{lem}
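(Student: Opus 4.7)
The plan is to handle the three implications in the order (ii), (iii), (i), since (iii) is an easy corollary of (ii) and the arguments for (ii) and (i) share a common core. Throughout, I rely on two standard facts about rank-one PVMs $X = \{\ketbra{x_i}{x_i}\}_{i=1}^d$ and $Y = \{\ketbra{y_j}{y_j}\}_{j=1}^d$ in $\mathbb{C}^d$: (a) by the Heinosaari--Reitzner--Stano result quoted in the main text, $X$ and $Y$ are jointly measurable (not complementary) if and only if they commute, equivalently if and only if they share a full common orthonormal eigenbasis; (b) a preparation yields a deterministic outcome $i$ for $X$ if and only if it is the pure state $\ketbra{x_i}{x_i}$, and analogously for $Y$. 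Both facts reduce all three claims to elementary counting of common eigenvectors.

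For (ii) I argue by contrapositive. Suppose uncertainty fails, so there is a preparation $P$ for which both $\q_X(P)$ and $\q_Y(P)$ are deterministic. By (b) one then has $P = \ketbra{v}{v}$ with $\ket{v} = \ket{x_{i_0}} = \ket{y_{j_0}}$ for some $i_0,j_0$. Now consider the pair of single-outcome coarse-grainings that keep the outcomes $i_0$ of $X$ and $j_0$ of $Y$ and glue the remaining $d-1$ outcomes into one. The resulting dichotomic PVMs both have effects $\{\ketbra{v}{v},\, I-\ketbra{v}{v}\}$, hence they are literally identical and trivially jointly measurable. This contradicts single-outcome complementarity.

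For (iii) I observe that the specific single-outcome coarse-grainings used in the definition of single-outcome complementarity are themselves non-trivial (they merge $d-1 \geq 1$ outcomes). Hence full complementarity, which forbids joint measurability under \emph{all} non-trivial coarse-grainings, is strictly stronger than single-outcome complementarity, and (iii) follows immediately from (ii). (In the degenerate case $d=2$ there is no non-trivial coarse-graining at all, and all three notions coincide with ordinary complementarity, for which the statement is again immediate from (b) plus the fact that non-commuting qubit PVMs share no common eigenvector.)

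For (i) I again argue by contrapositive. Assume information exclusion fails, so there exist preparations $P_1,\ldots,P_d$, each deterministic for both $X$ and $Y$, yielding pairwise distinct output pairs $(i_k,j_k)$. By (b) each $P_k=\ketbra{v_k}{v_k}$ with $\ket{v_k}=\ket{x_{i_k}}=\ket{y_{j_k}}$. Distinctness of the pairs forces the indices $i_k$ to be pairwise distinct, so the $\ket{v_k}$ form a set of $d$ mutually orthogonal unit vectors in $\mathbb{C}^d$, i.e., an orthonormal basis that simultaneously diagonalizes $X$ and $Y$. By (a) this means $X$ and $Y$ commute, hence are jointly measurable, contradicting complementarity. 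The only place one has to be careful is this last counting step, ensuring that distinct deterministic outcomes of a rank-one PVM automatically yield orthogonal eigenvectors; this is the main (and only) obstacle, and it is handled by (b). Once it is in place, all three implications reduce to the same picture relating deterministic preparations to common eigenvectors.
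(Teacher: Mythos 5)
Your proof is correct and follows essentially the same route as the paper's: the same contrapositive arguments based on the Heinosaari et al.\ commutativity criterion for rank-one projective measurements, the same ``common eigenvector versus its complement'' coarse-graining for (ii), and the same observation that (iii) follows from (ii) because full complementarity is by definition stronger than single-outcome complementarity. Your explicit counting step in (i) (distinct output pairs force distinct, hence mutually orthogonal, common eigenvectors forming a basis) and the parenthetical $d=2$ remark merely spell out details the paper's terser proof leaves implicit.
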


\begin{proof}
We prove each implication individually.
\begin{itemize}
\item[Ad. (i)] Note first that two rank one projective measurements $X$ and $Y$ are not jointly measurable if and only if  they do not commute. In other words  some projector $P_i$ of $X$  and some projector $Q_j$ of $Y$ do not commute (see e.g. 
\cite{Heinosaari2008} for the proof of this statement).  Now suppose, by contraposition, that there is no exclusion for $X$ and $Y$.  This means the $d$ states that have distinct deterministic  outcomes for observable $X$ and $Y$. Hence,  the states are distinct eigenstates of the both observables. Therefore, $X$ and $Y$ commute, hence they are not complementary. 

\item[Ad. (ii)] Again by contraposition, suppose that there is no uncertainty. This means that the observables share a common eigenvector. 
Consider coarse graining for both observables: this vector versus the complement. Clearly the new binary observables are the same,
hence  do not exhibit complementarity.  Hence, by definition, the original observables do not exhibit single-outcome complementarity. 

\item[Ad. (iii)] Full complementarity by definition is a stronger notion than single-outcome complementarity. Therefore, (ii) implies (iii). 
\end{itemize}
\end{proof}

\section{Full complementarity does not imply strong uncertainty}\label{app:fullCOMP}
We will now give the example of two fine-grained projective measurements in $\mathbb{C}^5$ that do not exhibit full preparation uncertainty even though they are fully complementary. We consider two orthonormal bases (for brevity we write unnormalized vectors)
\begin{eqnarray}
&&\ket{\psi_1}=|0\>\ ,\ \ket{\psi_2}=|1\>\ ,\ \ket{\psi_3}=|2\>\ , \nonumber \\
&&\ket{\psi_4}=|3\>+|4\>\ ,\ \ket{\psi_5}=|3\>-|4\>\ , 
\end{eqnarray}
and 
\begin{eqnarray}
&&\ket{\phi_1}=|0\>+|1\>\ ,\ \ket{\phi_2}=|0\>-|1\>+|2\>\ ,\ \ket{\phi_3}=|3\>+\ket{\chi}\ , \nonumber \\
&&\ket{\phi_4}=|4\>\ ,\ \ket{\phi_5}=|3\>-\ket{\chi}\ , 
\end{eqnarray}
where $\ket{\chi}=(|0\>-|1\>-2|2\>)/\sqrt{6}$. One readily checks that the following coarse grainings:
\begin{eqnarray}
&&P_1 = \sum_{i=1}^3 \ketbra{\psi_1}{\psi_1} \ ,\  P_2 = \sum_{i=4}^5 \ketbra{\psi_1}{\psi_1}\ , \nonumber \\
&&Q_1 = \sum_{i=1}^3 \ketbra{\phi_1}{\phi_1} \ ,\  Q_2 = \sum_{i=4}^5 \ketbra{\phi_1}{\phi_1}\ ,
\end{eqnarray}
do not exhibit uncertainty, as the input states $\ket{\psi}=(1/\sqrt{5})(2|0\> + |2\>)$ gives deterministic outcome for both (now binary) measurements. Specifically, this state gives with certainty the outcomes corresponding to projector $P_1$ and $Q_1$ respectively. On the other hand, arbitrary coarse graining of the fine grained measurements lead to non-commuting projectors and therefore by \cite{Heinosaari2008}  are jointly non-measurable projective measurements. Hence the above two measurements, although do not exhibit strong uncertainty, are fully complementary.

\section{For non-extremal observables \precom\ does not capture \com}\label{appendix:non-extremal}

In this section, we argue that \precom\ is not a good indicator of \com\ for non-extremal observables. Particularly, we provide an example where the \precom\ increases under taking convex mixture of observables. Consider a theory containing  three 3-outcome  observables $X_1,X_2,Y$ whose statistics sets origine from convex combinations of three preparations $P_1,P_2,P_3$ such that,
\beq \label{XiYstat}
&&\q_{X_1}(P_1) = (1,0,0),\q_{X_1}(P_2) = (0,1,0), \q_{X_1}(P_3) = (0,0,1), \nonumber \\
&&\q_{X_2}(P_1) = (\frac14,0,\frac34),\q_{X_2}(P_2) = (\frac34,0,\frac14), \q_{X_2}(P_3) = (0,1,0), \nonumber \\
&&\q_{Y}(P_1) = (\frac14,\frac34,0), \q_{Y}(P_2) = (\frac34,\frac14,0), \q_{Y}(P_3) = (0,0,1). \nonumber \\
\eeq 
We can verify there exists two left-stochastic maps,
\beq
\Lambda_1 =  
\left[ {\begin{array}{ccc}
   \frac14 & \frac34 & 0 \\
   \frac34 & \frac14 & 0 \\
   0 & 0 & 1
  \end{array} } \right], \ 
  \Lambda_2 =  
\left[ {\begin{array}{ccc}
   1 & 0 & 0 \\
   0 & 0 & 1 \\
   0 & 1 & 0
  \end{array} } \right]
\eeq
for which $\q_Y(P_i) = \Lambda_1 \q_{X_1}(P_i) = \Lambda_2 \q_{X_2}(P_i) $, thereby $X_{1,2} \rightarrow Y$. Consider another observable $X$ as a convex mixture of $X_1$ and $X_2$ with equal probability. From \eqref{XiYstat} we obtain, 
\be \label{Xstat}
\q_{X}(P_1) = (\frac58,0,\frac38), \q_{X}(P_2) = (\frac38,\frac12,\frac18), \q_{X}(P_3) = (0,\frac12,\frac12). 
\ee
Let us assume there exists a left-stochastic map,
\be
\Lambda = 
\left[ {\begin{array}{ccc}
   t_{11} & t_{12} & t_{13} \\
    t_{21}& t_{22} & t_{23} \\
   t_{31} & t_{32} & t_{33}
  \end{array} } \right] \nonumber
 \ee
such that $\Lambda \q_{X}(P_i) = \q_{Y}(P_i)$. From \eqref{XiYstat}-\eqref{Xstat} we see $\Lambda \q_{X}(P_3) = \q_{Y}(P_3)$ implies $t_{22}=t_{23}=0$. Further, imposing this condition on $\Lambda \q_{X}(P_1) = \q_{Y}(P_1)$, we obtain $\frac58 t_{21}=\frac34$ that implies $t_{21}=\frac65>1$. This is not possible for a left-stochastic map $\Lambda$. Similarly, if we assume $\Lambda \q_{Y}(P_i) = \q_{X}(P_i)$, we can check that $\Lambda \q_{Y}(P_1) = \q_{X}(P_1)$ implies $t_{21}=t_{22}=0$, however $\Lambda \q_{Y}(P_2) = \q_{X}(P_2)$ suggests $\frac34 t_{21}+\frac14 t_{22}=\frac12$. Hence, such a stochastic map does not exist. In other words, \precom\ of $X,Y$ is non-zero.

\section{Proof of the optimal success probability in random access code for two projective measurements }\label{appendix:qrac}

We consider two quantum projective measurements correspond to the basis $X_1 = \{|i\rangle\}^{d}_{i=1}$ and $X_2 = \{|\psi\rangle_j\}^{d}_{j=1}$ accessed by Bob. Given Alice's input $a_1a_2$ and her encoding state $\rho_{a_1a_2}$, the success probability of guessing $a_y$ is,
\be \label{eq:qrac1}
\sum_b p(a_b|a,b) = tr((|a_1\>\<a_1| + |\psi_{a_2}\>\<\psi_{a_2}|)\rho_{a_1a_2}).
\ee
Since, the operator $|a_1\>\<a_1| + |\psi_{a_2}\>\<\psi_{a_2}|$ is hermitian, its eigen vectors span $d$-dimensional space. The optimal value of the RHS \eqref{eq:qrac1} is maximum eigenvalue of this operator and $\rho_{a_1a_2}$ is the corresponding eigenvector. A simple calculation leads to the fact that the maximum eigenvalue of  $|a_1\>\<a_1| + |\psi_{a_2}\>\<\psi_{a_2}|$ is $1+|\<a_1|\psi_{a_2}\>|$. Subsequently, the average success probability \eqref{avgrac} is,
\be \label{eq:qps}
p_s(X_1,X_2) = \frac{1}{2} + \frac{1}{2d^2} \sum_{a_1,a_2} |\<a_1|\psi_{a_2}\>|.
\ee 
To show that the above expression is the optimal success probability given the two measurements $X_1,X_2$, we need to show that any classical post-processing of the outcome statistics will not yield higher success probability. Any post-processing can be represented by the set of positive operators $\{ M_{a_1} \}^d_{a_1=1}$ and $\{ M_{a_2} \}^d_{a_2=1}$, corresponds to $y=1,2$ respectively, as follows
\be
M_{a_1} = \sum^d_{i=1} p(a_1|i) |i\>\<i|, \  M_{a_2} = \sum^d_{j=1} q(a_2|j) |\psi_{j}\>\<\psi_{j}| ,
\ee 
for some probability distributions such that $\forall i,j,\ \sum_{a_1} p(a_1|i) = \sum_{a_2} q(a_2|j)  =1$. 
Since $ M_{a_1} + M_{a_2}$ is a positive operator, following the previous argument we know the optimal success probability for this strategy is,
\be
p_s = \frac{1}{2d^2} \sum_{a_1,a_2} ( || M_{a_1} + M_{a_2} || )
\ee 
where $||M||$ denotes the operator norm. Using the inequality $|| X+Y|| \leq \max(||X||, ||Y||) + ||\sqrt{X}\sqrt{Y}||$ derived by Kittaneh \cite{Kittaneh} and the fact $|| X+Y|| \leq ||X|| + ||Y||$,  we obtain the following relation,
\beq
&p_s& = \frac{1}{2d^2} \sum_{a_1,a_2} ( || M_{a_1} + M_{a_2} || ) \nonumber \\
&& \leq \frac{1}{2d^2} \sum_{a_1,a_2} \left( \max(||M_{a_1}||, ||M_{a_2}||) + || \sqrt{M_{a_1}} \sqrt{M_{a_2}} || \right) \nonumber \\
&& \leq  \frac{1}{2d^2} \sum_{a_1,a_2} ( 1 + ||  \sum_{i,j} \sqrt{p(a_1|i)}\sqrt{q(a_2|j)}\ |i\>\<i|\psi_j\>\<\psi_j|\ ||) \nonumber \\
&& \leq  \frac{1}{2} + \frac{1}{2d^2} \sum_{a_1,a_2} \sum_{i,j} \sqrt{p(a_1|i)}\sqrt{q(a_2|j)} || |i\>\<i|\psi_j\>\<\psi_j| \ || \nonumber \\
&& \leq  \frac{1}{2} + \frac{1}{2d^2} \sum_{a_1,a_2}  \sum_{i,j} p(a_1|i)q(a_2|j) \ |\<i|\psi_j\>| \nonumber \\
&& = \frac{1}{2} + \frac{1}{2d^2} \sum_{i,j} |\<i|\psi_j\>| 
\eeq
 which is same as the left-hand-side of \eqref{eq:qps}. In the above derivation, we have used the fact that $\sqrt{M_{a_1}} = \sum^d_{i=1} \sqrt{p(a_1|i)} |i\>\<i|, \sqrt{  M_{a_2}} = \sum^d_{j=1} \sqrt{q(a_2|j)} |\psi_{j}\>\<\psi_{j}| $.

\section{Proof of exclusion relation from random access code}

\begin{lem}[Quantum-mechanical uncertainty relation for exclusion-like quantity defined in terms of RAC]
Consider a $d$ dimensional quantum system and let $X=\lbrace \ket{i} \rbrace_{i=1}^d$ and $Y= \lbrace \ket{\psi_i} \rbrace_{i=1}^d$ be two projective measurements in $\mathbb{C}^d$.  Let $E(S_{X,Y})$ be the exclusion-like quantum-mechanical quantity given in \eqref{eq:E-rac} and let $Ind(S_{X,Y})$ be the quantum-mechanical complementarity measure based on the average success probability in RAC given in \eqref{eq:ind-rac} Then, the following uncertainty relation holds
\begin{equation}\label{eq:exclusionURrac}
%E(S_{X,Y})\geq \frac{\sqrt{d-1}}{2d} Ind(S_{X,Y}) \right)\ . 
E(S_{X,Y})\geq \frac{Ind^2(S_{X,Y})}{4d}
\end{equation}
\end{lem}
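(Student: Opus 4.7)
The plan is to reduce the uncertainty relation to a purely algebraic inequality about the matrix of overlap moduli $M_{ij}=|\langle i|\psi_j\rangle|$, and then control it with two successive applications of the Cauchy--Schwarz inequality. Orthonormality of the two bases $\{|i\rangle\}$ and $\{|\psi_j\rangle\}$ means that the matrix $M^{(2)}$ with entries $M_{ij}^2$ is doubly stochastic; in particular each row satisfies $\sum_{j}M_{ij}^2=1$, which is the normalization I will exploit.

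First, I would rewrite both sides explicitly using formulas \eqref{eq:E-rac} and \eqref{eq:ind-rac}. Setting $\mu:=\max_\pi\sum_i M_{i,\pi(i)}$ and $S:=\sum_{i,j}M_{ij}$, one has $E(S_{X,Y})=(d-\mu)/(2d)$ and $Ind(S_{X,Y})=(S-d)/(d(d-1))$. The target inequality $E\geq Ind^2/(4d)$ is thus equivalent to the purely combinatorial statement
\[
(S-d)^2 \;\leq\; 2\,d^2(d-1)^2\,(d-\mu).
\]

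Next, let $\pi^{*}$ attain the maximum, set $a_i:=M_{i,\pi^{*}(i)}$, $\alpha:=d-\mu=\sum_i(1-a_i)$, and $\beta:=\sum_i\sum_{j\neq\pi^{*}(i)}M_{ij}$, so that $\alpha,\beta\geq 0$ and $S-d=\beta-\alpha$. Two successive Cauchy--Schwarz estimates bound $\beta$: (i) for each row $i$, since $\sum_{j\neq\pi^{*}(i)}M_{ij}^2 = 1-a_i^2\leq 2(1-a_i)$, we get $\sum_{j\neq\pi^{*}(i)}M_{ij}\leq\sqrt{(d-1)(1-a_i^2)}\leq\sqrt{2(d-1)}\sqrt{1-a_i}$; (ii) summing over $i$ and applying Cauchy--Schwarz to $\sum_i\sqrt{1-a_i}\leq\sqrt{d\,\alpha}$ gives $\beta^{2}\leq 2d(d-1)\alpha$. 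Combining this with the trivial bound $\alpha\leq d$ and the elementary inequality $(\beta-\alpha)^2\leq\beta^2+\alpha^2$ (valid because $\alpha\beta\geq 0$) yields
\[
(S-d)^2 \;\leq\; 2d(d-1)\alpha + \alpha^{2} \;\leq\; d(2d-1)\,\alpha \;\leq\; 2\,d^2(d-1)^2\,\alpha,
\]
where the last step uses $2d-1\leq 2d(d-1)^2$ for $d\geq 2$. This is the required bound.

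The main obstacle I anticipate is merely constant-tracking rather than any conceptual hurdle: a careless $(\beta-\alpha)^2\leq 2(\beta^2+\alpha^2)$ loses a factor of two and would already fail at $d=2$, so one must notice the tighter estimate $(\beta-\alpha)^2\leq\beta^2+\alpha^2$ (equivalently $\leq\max(\beta,\alpha)^2$) to land the precise $1/(4d)$ on the right-hand side. The $d=1$ case is vacuous (both quantities vanish) and the $d=2$ case is the tightest point of the final chain (it becomes $6\leq 8$), so it is worth spot-checking on mutually unbiased qubit bases; beyond this, no information-theoretic machinery such as entropic uncertainty relations or mutual-information bounds appears to be needed for the stated inequality.
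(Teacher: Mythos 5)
Your proof is correct, and it takes a genuinely different route from the paper's. The paper argues through a single distinguished row: from $Ind(S_{X,Y})>0$ it extracts, by pigeonhole applied to $\sum_{i,j}|U_{ij}|$, an index $i_0$ with $\sum_j |U_{i_0 j}|\geq 1+(d-1)Ind(S_{X,Y})$, converts this (via Schur-concavity of the square root) into the bound $p_{\mathrm{max}}(i_0)\leq 1-(d-1)Ind^2(S_{X,Y})$ on the largest squared overlap in that row, feeds $\max_\pi\sum_i |U_{i\pi(i)}|\leq (d-1)+\sqrt{p_{\mathrm{max}}(i_0)}$ into the exclusion measure, and finishes with $1-\sqrt{1-x}\geq x/2$. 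You instead work globally: you split the total overlap mass $S=\sum_{i,j}M_{ij}$ into the optimal-permutation part $\mu$ and the off-permutation part $\beta$, and use the row normalization $\sum_j M_{ij}^2=1$ with two Cauchy--Schwarz steps to obtain $\beta^2\leq 2d(d-1)\alpha$, where $\alpha=d-\mu=2dE(S_{X,Y})$. I checked the details: your reduction of the lemma to $(S-d)^2\leq 2d^2(d-1)^2(d-\mu)$ is algebraically exact, and every link in the chain holds, including $1-a_i^2\leq 2(1-a_i)$, the estimate $(\beta-\alpha)^2\leq\beta^2+\alpha^2$ from $\alpha\beta\geq 0$, the trivial bounds $\alpha\leq d$ and $\alpha^2\leq d\alpha$, and $2d-1\leq 2d(d-1)^2$ for $d\geq 2$; your remark that the sloppier $(\beta-\alpha)^2\leq 2(\beta^2+\alpha^2)$ would fail at $d=2$ is also accurate. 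A noteworthy byproduct of your route: the intermediate inequality $(S-d)^2\leq d(2d-1)\alpha$ is equivalent to $E(S_{X,Y})\geq \frac{(d-1)^2}{2(2d-1)}\,Ind^2(S_{X,Y})$, which is uniformly stronger than the paper's intermediate bound $E(S_{X,Y})\geq\frac{d-1}{4d}\,Ind^2(S_{X,Y})$ --- yours scales like $\frac{d}{4}Ind^2$ for large $d$ versus the paper's $\approx\frac{1}{4}Ind^2$ --- before both relax to the stated $Ind^2/(4d)$. What the paper's single-row argument buys in exchange is the pointwise conclusion $p_{\mathrm{max}}(i_0)\leq 1-(d-1)Ind^2(S_{X,Y})$, i.e.\ explicit control of a Maassen--Uffink-type maximal overlap, which your aggregate estimate does not isolate.
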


\begin{proof}
In what follows we will use the notation $U_{ij}= \braket{i}{\psi_j}$.  Recall the explicit formulas for $E(S_{X,Y})$ and $(S_{X,Y}$, 
\begin{eqnarray}
E(S_{X,Y})& = \frac{1}{2}\left(1- \frac{1}{d} \max_\pi \sum_{i=1}^d |U_{i\pi(i)}|  \right) \ , \label{eq:excQapp} \\
Ind(S_{X,Y})& =\frac{1}{d-1}\left(\frac{1}{d}\sum_{i,j=1}^d |U_{ij}| -1\right)\ .  \label{eq:complQapp}
\end{eqnarray}
Let $p_\mathrm{max}(i) \eqdef \max_j |U_{ij}|^2$.  where maximum is over $j\in\lbrace 1,\ldots,d\rbrace$. Our proof strategy is to show that $I(S_{X,Y})>0$ implies $p_\mathrm{max}(i_0)<1$ for some $i_0$. As we will prove later the latter condition can be used to find lower bound on the exclusivity $E(S_{X,Y})$. By reformulating Eq.\eqref{eq:complQapp} we obtain 
\be
%\sum_{i=1}^d (\sum_{i,j=1}^d  |U_{ij}| )= (d-1)d I(S_{X,Y})+d\ ,
\sum_{i,j=1}^d  |U_{ij}|  = (d-1)d I(S_{X,Y})+d\ ,
\ee
from which we can readily deduce that for some $i_0$ we have the inequality  $\sum_{j=1}^d  |U_{i_0j}| \geq 1+(d-1)Ind(S_{X,Y})$. The LHS of this inequality can be upper bounded as
\begin{equation}\label{eq:intINEQ}
\sum_{j=1}^d  |U_{i_0j}| \leq \sqrt{p_\mathrm{max}(i_0)}+\sqrt{d-1}\sqrt{1-p_\mathrm{max}(i_0)}\ ,  
\end{equation}
where we have used the Shur-concavity of the square-root function and the fact that for fixed $i$ numbers $|U_{ij}|^2$ form a probability distribution. Combining \eqref{eq:intINEQ} with the earlier bound gives 
\begin{equation}
\sqrt{p_\mathrm{max}(i_0)}+\sqrt{d-1}\sqrt{1-p_\mathrm{max}(i_0)} \geq 1 +(d-1) Ind(S_{X,Y})\ .
\end{equation}
Importantly, the above inequality implies that $p_\mathrm{max}(i_0)<1$ whenever $Ind(S_{X,Y})>0$. To get a nontrivial upper bound on $p_\mathrm{max}(i_0)$ we apply the inequality\footnote{This follows form the simple inequality $\sqrt{1-y}\leq 1-(y/2)$ used for $y=1-p_{\mathrm{max}}(i_0)$}  $\sqrt{p_{\max}(i_0)} \leq 1 -(1/2)(1-p_\mathrm{max}(i_0))$ which finally gives
\begin{equation}
\sqrt{d-1}y-(1/2)y^2 \geq (d-1) Ind(S_{X,Y})\ ,
\end{equation}   
for $y=\sqrt{1-p_\mathrm{max}(i_0)}$. By neglecting the quadratic we obtain
\begin{equation}\label{eq:finalBOUND}
p_{\max} (i_0) \leq 1 -(d-1)Ind^2(S_{X,Y})\ .
\end{equation}
 
To conclude we prove a lower bound on $E$ in terms of $p_{\max}(i_0)$ we note that the following inequalities hold true
\begin{equation}
\max_\pi \sum_i|U_{i\pi(i)}| \leq \sum_{i=1}^d \sqrt{p_\mathrm{max}(i)}\leq (d-1) + \sqrt{p_\mathrm{max}(i_0)}\ .
\end{equation}
Plugging this bound into \eqref{eq:excQapp} gives $E(S_{X,Y})\geq 1- \sqrt{p_{\max}(i_0)}/(2d)$. Together with  \eqref{eq:finalBOUND} this gives 
\begin{equation}
E(S_{X,Y})\geq \frac{1}{2d}\left(1-\sqrt{1-(d-1) Ind(S_{X,Y}^2)}\right) \ .
\end{equation}
Using inequality $1 - \sqrt{1-x}\geq x/2$ valid for all $x\in(0,1)$ we obtain the final result
\be
E(S_{X,Y})\geq \frac{Ind^2(S_{X,Y})}{4d}\ .
\ee Of course, since  the considered observables are clean and extremal $Ind$ is same as complementarity. 
\end{proof}

\section{Proof of reverse uncertainty relation} \label{app:reverseUR}

\textit{For $S$ being square for two clean and sharp observables, any theory must satisfy the reverse PUR given by $2\C_r \geq \U$. }\\
\begin{proof}
 Since the observables are sharp, the statistics set $\setth$ touches all the four edges of the square $S$. Let us say the minimum distance between the corners and the points belong to $\setth$, that lie on the boundary of $S$, is $t$ (see Fig. \ref{fig:rur}). It is clear from the definition that the re-scaling measure of uncertainty of that point is $t$, and hence the uncertainty measure of the statistics set $\U \leq t$. Now, consider a square of length $t$ taking the same origin of $S$. As described in the Fig. \ref{fig:rur}, this square should always fits inside $\setth$. This leads to the fact that $\C_r \geq t/2$, and subsequently $2\C_r\geq \U$. 
\end{proof}
\begin{figure}[http]
\begin{center}
\includegraphics[scale=0.71]{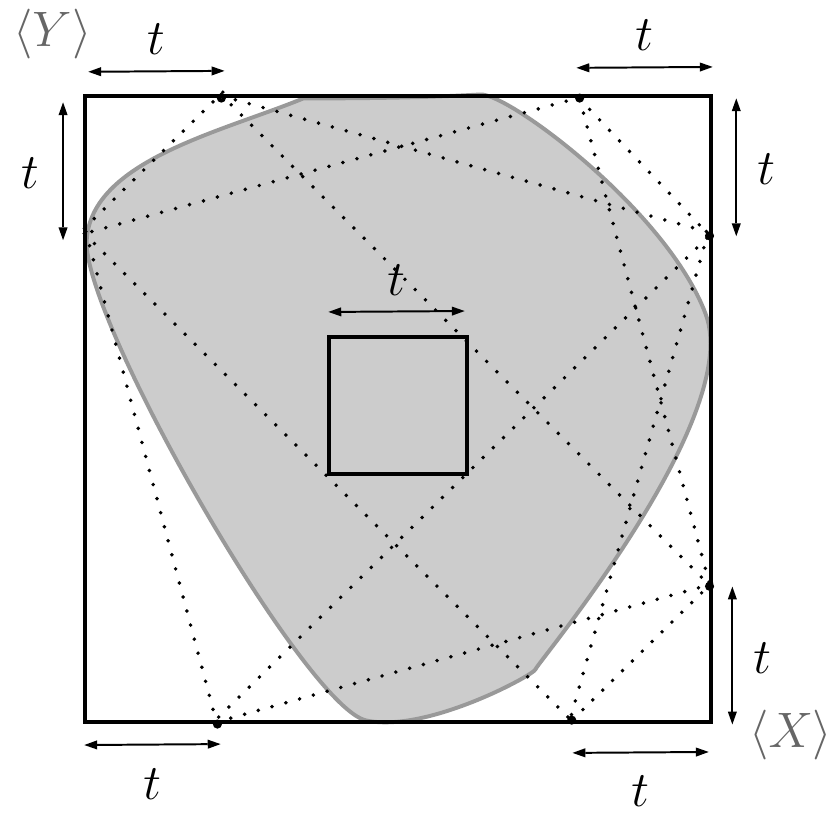}
\end{center}
\caption{We assume that the minimum distance between the corners and the points on $\setth$, that lie on the boundary of $S$, is $t$. This, together with the fact that $\setth$ touches all the four edges of square, impose the boundaries of $\setth$ cannot be closer to the center than the dotted lines presented here. This implies that a square of length $t$ will fit inside $\setth$.}
\label{fig:rur}
\end{figure}

\end{document}